\documentclass{article}
\usepackage[T1]{fontenc}
\usepackage[margin=1in]{geometry}
\usepackage{amsfonts,amsmath,amsthm,amssymb,mathtools}  

\mathtoolsset{centercolon}
\usepackage{xfrac,nicefrac}
\usepackage{mathdots}
\usepackage{mleftright}  
\let\left\mleft
\let\right\mright

\usepackage{xspace}
\xspaceaddexceptions{]\}}  
\usepackage{regexpatch}

\usepackage{bm,bbm,dsfont}  
\usepackage{caption}
\usepackage[normalem]{ulem}
\usepackage{enumitem}

\usepackage{graphicx}
\usepackage{float}
\usepackage{subcaption}  
\usepackage{tcolorbox}
\usepackage{tikz}
\usetikzlibrary{decorations.pathreplacing}
\usetikzlibrary{calc}
\usetikzlibrary{positioning}
\usetikzlibrary{arrows.meta}

\usepackage[linesnumbered,boxed,ruled,vlined]{algorithm2e}

\SetCommentSty{mycommfont}

\usepackage{thmtools,thm-restate}
\usepackage[colorlinks,citecolor=blue,linkcolor=blue,urlcolor=red]{hyperref}

\theoremstyle{plain}

\newtheorem{theorem}{Theorem}[section]  
\newtheorem{lemma}[theorem]{Lemma}

\newtheorem{corollary}[theorem]{Corollary}
\newtheorem{invariant}[theorem]{Invariant}

\newtheorem{claim}[theorem]{Claim}

\theoremstyle{definition}  

\newtheorem{remark}[theorem]{Remark}



\usepackage[capitalise]{cleveref}
\crefname{algocf}{Algorithm}{Algorithms}
\Crefname{algocf}{Algorithm}{Algorithms}
\crefname{claim}{Claim}{Claims}
\Crefname{claim}{Claim}{Claims}
\crefname{invariant}{Invariant}{Invariants}

\newfloat{Distribution}{htbp}{loa}
\crefname{Distribution}{Distribution}{Distributions}
\Crefname{Distribution}{Distribution}{Distributions}
\newfloat{Protocol}{htbp}{loa}
\crefname{Protocol}{Protocol}{Protocols}
\Crefname{Protocol}{Protocol}{Protocols}

\SetKwProg{Function}{Function}{:}{}

\SetKwProg{CodeBlock}{}{}{}
\SetKwProg{Repeat}{Repeat}{:}{}

\DeclarePairedDelimiter{\bk}{(}{)}
\DeclarePairedDelimiter{\Bk}{[}{]}

\DeclarePairedDelimiterX\mysetbase[2]{\lbrace}{\rbrace}{#1\,\delimsize\vert\,#2}
\NewDocumentCommand{\myset}{sO{}m m}{%
  \IfBooleanTF{#1}
    {\mysetbase*{#3}{#4}}
    {\mysetbase[#2]{#3}{#4}}
}

\DeclareMathOperator*{\E}{\mathbb{E}}

\let\Pr\PrAux
\DeclareMathOperator{\poly}{poly}

\renewcommand{\tilde}{\widetilde}

\newcommand{\eps}{\varepsilon}

\renewcommand{\l}{\ell}
\renewcommand{\emptyset}{\varnothing}
\renewcommand{\epsilon}{\eps}

\newcommand{\defn}[1]{\emph{\boldmath\textbf{#1}}}

\usepackage{regexpatch}
\makeatletter
\xpatchcmd\thmt@restatable{%
\csname #2\@xa\endcsname\ifx\@nx#1\@nx\else[{#1}]\fi
}{%
\ifthmt@thisistheone
\csname #2\@xa\endcsname\ifx\@nx#1\@nx\else[{#1}]\fi
\else
\csname #2\@xa\endcsname[{Restated}]
\fi}{}{}
\makeatother

\newcommand{\Value}{value}

\title{Resizable Retrieval}
\author{
William Kuszmaul%
\thanks{Supported in part by NSF grant CNS2504471 and by a Jane Street Research Grant. \texttt{kuszmaul@cmu.edu}.}\\
CMU
\and
Aaron Putterman%
\thanks{Supported in part by the Simons Investigator Awards of Madhu Sudan and Salil Vadhan, AFOSR award
FA9550-25-1-0112, and a Jane Street Graduate Research Fellowship. \texttt{aputterman@g.harvard.edu}.}\\
Harvard University
\and
Tingqiang Xu%
\thanks{\texttt{xtq23@mails.tsinghua.edu.cn}.}\\
Tsinghua University
\and
Hangrui Zhou%
\thanks{\texttt{zhouhr23@mails.tsinghua.edu.cn}.}\\
Tsinghua University
\and
Renfei Zhou%
\thanks{Supported in part by NSF grant CNS2504471, the Jane Street Graduate Research Fellowship, and the MongoDB PhD Fellowship. \texttt{renfeiz@andrew.cmu.edu}.}\\
CMU
}
\date{}

\begin{document}

\maketitle

\begin{abstract}
A dynamic retrieval data structure encodes a function $f:\mathcal{K} \rightarrow [2^v]$ for a set $\mathcal{K} \subseteq [U]$, while supporting queries $f(x)$ for $x\in \mathcal{K}$, insertions \texttt{Insert}$(x, f(x))$ for $x \notin \mathcal{K}$, and deletions \texttt{Delete}$(x)$ for $x \in \mathcal{K}$. Given an upper bound $N$ on $|\mathcal{K}|$, it is known how to solve the dynamic retrieval problem with $O(1)$-time operations and space $Nv + O(N \log \log (U/N))$ bits. An open question, first posed by Demaine et al.\ in 2006, is whether a similar bound can be achieved with a resizable data structure, whose space bound is parameterized by the \emph{current} size $n$ of $\mathcal{K}$. We answer this question in the affirmative and prove matching lower bounds for the space-time trade-off achieved by our data structure. We also give corollaries for space-efficient memory allocation and dynamic filters.
\end{abstract}

\section{Introduction}
\label{sec:intro}

A retrieval data structure is a space-efficient data structure that encodes a function $f: \mathcal{K} \rightarrow [2^v]$ mapping a set $\mathcal{K} \subseteq [U]$ of keys to $v$-bit values. The data structure supports a \texttt{Query} function that, on any key $k \in \mathcal{K}$, returns $f(k)$. If a user calls \texttt{Query}$(k)$ on a key $k \notin \mathcal{K}$, then the function is permitted to return an arbitrary $v$-bit value. 

What makes retrieval data structures interesting (as opposed to, say, dictionaries) is that they can be implemented using very little space. In the static setting, where the set $\mathcal{K}$ and the function $f$ are fixed, the retrieval problem can be solved using as little as $nv + o(n)$ bits. This is almost the same amount of space that would be needed just to store an array of $n$ $v$-bit values. 

The reason that retrieval data structures can be so space efficient is that they do not, in principle, need to store as much information about the actual key set $\mathcal{K}$. So long as the data structure returns the correct value $f(k)$ for each key $k \in \mathcal{K}$, the data structure need not actually know which keys $k$ are or are not in $\mathcal{K}$. 

\emph{A priori}, the fact that retrieval data structures store so little information about the key set $\mathcal{K}$ might seem to suggest that space-efficient retrieval is only possible in the static setting. However, a celebrated 2006 paper by Demaine, Meyer auf der Heide, Pagh, and P{\v a}tra{\c s}cu \cite{demaine2006dictionariis} showed that, at least in some sense, this is not the case. They considered a setting in which one adds two operations to the data structure:
\begin{itemize}
\item \textbf{Insert($x, f(x)$):} This inserts a new element $x \not\in \mathcal{K}$ into $\mathcal{K}$ and sets its value $f(x)$. This operation requires that $x \not\in \mathcal{K}$. 
\item \textbf{Delete($x$):} This removes an element $x \in \mathcal{K}$ from $\mathcal{K}$. This operation requires that $x \in \mathcal{K}$ before calling \textbf{Delete($x$)}.
\end{itemize}

Demaine, Meyer auf der Heide, Pagh, and P{\v a}tra{\c s}cu \cite{demaine2006dictionariis} showed that, given an upper bound $N$ on $|\mathcal{K}|$, it is possible to support both insertions and deletions of keys/values, while using total space 
$$Nv + O(N \log \log (U/N))$$
bits, and while supporting $O(1)$-time operations with high probability. This space bound turns out to be information-theoretically optimal \cite{MPP05, RetrievalPhase}.

What is not clear is whether it might be possible to achieve an even stronger space guarantee, where the space usage of the data structure is parameterized not by some known upper bound $N$ on the size of $\mathcal{K}$, but rather by whatever the current size $n = |\mathcal{K}|$ of the set is at any given moment. Whether such a resizing guarantee is possible was left as the main open question in \cite{demaine2006dictionariis}.\footnote{\emph{A priori}, one might wonder: Why can't we build a resizable retrieval data structure by simply rebuilding the data structure, whenever the size changes by a significant amount? What prevents this is that the data structure does not have the ability to \emph{recall} the keys $k$ in $\mathcal{K}$. It must perform resizing with limited information.}

In the nearly two decades since Demaine et al.'s \cite{demaine2006dictionariis} result, the retrieval problem has continued to be the subject of a great deal of interest. Subsequent work includes both upper \cite{DW19,Por09,DP08,monotone1,KW24} and lower \cite{MPP05,RetrievalPhase,monotone2,monotone3,KW24} bounds, with work on time efficiency in the static setting \cite{DW19,Por09,DP08}, on middle grounds between the static and dynamic settings \cite{KW24,RetrievalPhase}, on the special case where $f$ encodes element ranks \cite{monotone1, monotone2, monotone3,monotone4}, on relaxations in which queries are permitted to incur errors \cite{chazelle2004bloomier,maplets,infinifilter}, and on applications to other data structures \cite{bender2025optimal,BCFKT23,GL22,DW21,Por09,DP08}. 

Recently, Bercea and Even \cite{bercea2022extendable} were able to make the first concrete progress on Demaine et al.'s resizability question \cite{demaine2006dictionariis}. Building, in part, on an earlier line of work on extendable filters \cite{PSW13,liu2020succinct}, they give a retrieval data structure achieving a space bound of the form $n_{\text{max}} v + O(n_{\text{max}} \log \log U)$, where $n_{\text{max}}$ is the largest size that the data structure has ever had so far.\footnote{We remark that, due to the slightly informal nature of the discussions in Section 5 of \cite{PSW13} and Section 5.2 of \cite{bercea2022extendable} a reader may get the impression that, for both filters and retrieval, the known constructions naturally extend to support space guarantees that scale with the \emph{current} number of keys (rather than the all-time \emph{maximum}). However, the constructions in these papers (which, in the insertion-only setting, use essentially the same setup as the constructions in this paper) are unable to shrink the table as elements are deleted. In particular, when the data structure is resized to be smaller, this introduces \emph{new} collisions between keys, and these new collisions are very different from those created by insertions, because the data structure no longer contains \emph{either} key involved in the collision. Our construction resolves this issue by introducing a new piece of the data structure,  the ``mid-yard'', which stores colliding keys for which the true key is not known. Supporting the midyard in a space and time efficient manner is exactly what constitutes the difficulty of the current work, and is our main contribution to the area.} In fact, they give an even stronger guarantee, producing a method for assigning keys to (stable!) hash codes in the range $[(1 + o(1)) \cdot n_{\text{max}}]$. Such an assignment is not, in general, feasible if one wishes to parameterize by $n$ (the current number of keys) rather than $n_{\text{max}}$.

In the setting where we wish to support both insertions and deletions, and achieve a space bound parameterized by $n$ (rather than $n_{\text{max}})$, the question of whether a resizable retrieval data structure is possible \cite{demaine2006dictionariis} has remained open.

\paragraph{This paper: Efficient resizable retrieval.} The main technical contribution of this paper is an affirmative answer to Demaine et al.'s \cite{demaine2006dictionariis} question:

\begin{restatable}{theorem}{mainthm}
  \label{thm:mainRestated}
  Let $U$ and $v$ be parameters satisfying $v \le O(\log U)$, let $k = O(1)$, and let $\eps >0$ be arbitrary constants. There is a \emph{resizable} retrieval data structure maintaining a set of key-value pairs, with keys from the universe $[U]$ and $v$-bit values, such that:
  \begin{enumerate}
      \item At any given moment when it stores $n$ keys, with $\mathrm{polylog}(U) \leq n\leq U/2$, it occupies $O(n \log \log (U / n) + n\log^{(k)}n) + nv + U^{\eps}$ bits of space.
      \item It supports insertions, deletions, and retrieval queries in $O(k) = O(1)$ time with high probability in $n$ under the word RAM model with word size $w = \Theta(\log U)$.
  \end{enumerate}
\end{restatable}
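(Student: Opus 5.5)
The plan follows the standard fingerprint-based approach to dynamic retrieval, extended so that the table can be resized in both directions.

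\textbf{Main structure.} Hash each key $x$ through a random permutation of $[U]$ and split the image into a \emph{quotient} and a \emph{fingerprint}. The quotient has $\log n'$ bits, where $n'=\Theta(n)$ is the current capacity. The fingerprint has $\Theta(\log\log(U/n) + \log^{(k)} n)$ bits. Keys are stored in a succinct, backyard-style dynamic dictionary indexed by quotient. Each stored entry holds a short fingerprint together with the $v$-bit value, which accounts for the $nv + O(n\log\log(U/n))$ term. A query on $x$ locates the entry whose fingerprint matches $x$ and returns its value. For $x\in\mathcal{K}$ this is correct unless two live keys share quotient and fingerprint. Such \emph{collisions} occur for only an $O(1/\log(U/n))$ fraction of keys. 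The colliding keys are moved, with their full keys, into an auxiliary dictionary of size $O(n/\log(U/n))$ words, which is $O(n)$ bits.

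\textbf{Resizing.} I would resize in geometric phases, doubling or halving $n'$, and rebuild incrementally in the background, migrating $O(1)$ entries per operation. This is de-amortized in the usual way: two tables coexist during migration, and each query checks both.
\begin{itemize}
\item \emph{Growing.} Each quotient gains a bit, taken from the fingerprint. Since $\log\log(U/n)$ changes by only $O(1)$ bits per doubling, this is harmless. In this direction, collisions among existing keys only split apart.
\item \emph{Shrinking.} Merging two quotient buckets of the old table can create new fingerprint collisions between two stored entries whose true keys are unknown. These entries are handled by a \emph{mid-yard}. When a merged bucket receives two entries with equal (quotient, fingerprint), we cannot resolve them. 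Instead we keep both old entries, each tagged with its pre-shrink quotient bit, so that the distinguishing information is retained. The mid-yard stores such groups keyed by (new quotient, fingerprint) together with the extra disambiguating bits.
\end{itemize}
A later \texttt{Delete}$(x)$ or \texttt{Insert} touching a mid-yard group can recompute $x$'s old bits from its hash. This lets the data structure identify the entry, and the group dissolves once it has size one. Collisions are rare: a random pair collides with probability about $1/(n\log(U/n))$. Hence the mid-yard holds $O(n/\log(U/n))$ entries, each with $O(\log\log(U/n))$ extra bits per level of shrinkage that it has survived.

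\textbf{Expected obstacle.} The hardest step, I expect, is bounding mid-yard space and time under repeated shrinks. A group may accumulate tags across many halving phases, and adversarial insert/delete sequences could make the groups grow. I would first show that after each shrink only an $O(1/\log(U/n))$ fraction of entries are newly collided. I would then bound the tag history by geometric decay: entries that survive $j$ shrinks have probability $2^{-\Omega(j)}$ relative mass. Together these give $O(n)$ total extra bits.

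\textbf{Time bound and the role of $k$.} For $O(1)$ time, the mid-yard is itself a retrieval/dictionary structure of the same type, applied recursively. Each recursion level shrinks the relevant size from $n$ to roughly $n/\log n$ and contributes $\log^{(i)} n$ bits per key. Truncating after $k$ levels gives the $n\log^{(k)} n$ space term and $O(k)$ time. The final level is small enough to be stored as a standard hash table with $O(1)$-time operations with high probability. The $U^{\eps}$ term pays for shared lookup tables, which allow constant-time manipulation of packed buckets on the word RAM, together with hash function descriptions. The condition $n\ge\mathrm{polylog}(U)$ ensures that the concentration bounds on bucket loads hold with high probability in $n$.
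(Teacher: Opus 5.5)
\textbf{Gap: the growing step.} You call growing harmless, and that is where the proposal breaks. Moving the first fingerprint bit into the quotient leaves each key's stored string (quotient followed by fingerprint) unchanged. So, contrary to what you state, no collision splits apart. What the step actually does is shorten every existing key's fingerprint by one bit per doubling, and these bits can never be restored because the keys themselves are discarded.

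Consider a key inserted at size $n_0$ with $f_0$ fingerprint bits that is still present after $j$ doublings. It has only $f_0-j$ fingerprint bits left, so its expected number of colliding partners has grown from $2^{-f_0}$ to $2^{j-f_0}$. Once $j>f_0$, its stored prefix is shorter than the current quotient. It then no longer lies in a single bucket of your quotient-indexed table, so a query for it has nowhere to look, and it blocks $2^{j-f_0}$ buckets against future insertions.

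Your justification, that $\log\log(U/n)$ changes by only $O(1)$ per doubling, is a one-step argument. It says nothing about keys that survive $\Theta(\log\log(U/n))$ or more doublings, which an adversary forces simply by keeping early keys while the set grows. Consequently, your claimed $O(1/\log(U/n))$ collision fraction is unsupported, and your query procedure is incomplete.

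\textbf{What the paper uses instead.} It handles growth with two ingredients you are missing.
\begin{enumerate}
\item \emph{Amortization over insertion rank.} Each key gets a fingerprint plus a \emph{future fingerprint}, which supplies fresh bits to lengthen the fingerprint on growth. Their total length is only ever truncated to the current $\ell+\ell_f$. Hence the $W$-th oldest current key has depth at least $\min(\ell,\log W+20\log\log(U/W))$ (Claim~\ref{claim:depthlowerbound}). Summing $2^{\ell-d}$ over ranks gives $\sum_j n/(j^2\log^8(U/n))=O(n/\log^8(U/n))$ covered depth-$\ell$ nodes (Claim~\ref{clm:numFrontyard}). The series converges precisely because the extra bits depend on the insertion-time size $W$.
\item \emph{Queries for fingerprints of depth $<\ell$.} These fingerprints cover disjoint intervals of depth-$\ell$ nodes and are found by predecessor/successor search. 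This is $O(1)$ time because the set only shrinks between rebuilds and is either dense (succinct rank/select when $n>\sqrt U$) or sparse (blocked fusion trees).
\end{enumerate}

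\textbf{Secondary points.} Your mid-yard for shrinking is close in spirit to the paper's, but the paper bounds it differently. It uses a snapshot argument rather than tag-survival decay: after the lifting step at each rebuild, every midyard or backyard key either shares its depth-$\ell$ prefix with another key or is covered by a short fingerprint (Claims~\ref{clm:condInBackyard}--\ref{clm:numBackyard}). The paper also makes the variable-length midyard fingerprints $O(1)$-queryable via per-subtree fusion trees and a static perfect hash to memory chunks; your recursion story does not supply this. Finally, the $n\log^{(k)}n$ term does not arise from recursing on the mid-yard. It is the redundancy of the succinct dynamic dictionary of Theorem~\ref{thm:dictionarySpace}, used as a black box.
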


We remark that the $U^\epsilon$ space term in Theorem \ref{thm:mainRestated} is simply for storing highly random hash functions. If we assume access to fully random hash functions, then the term can be replaced with a poly-logarithmic dependence on $U$. Our assumption that $n \geq \mathrm{polylog}(U)$ simply means that $n \geq \lg^C(U)$, for some sufficiently large constant $C$. This assumption is very mild, and is only used to ensure that certain concentration bounds hold during deamortization.

{In addition to proving Theorem \ref{thm:mainRestated}, we also prove an extension of the theorem that allows different values to have different sizes. Specifically, the extension allows the value for each key $x$ to have an arbitrary length of $v_x \in [O(\log U)]$ bits, where the length $v_x$ is specified on insertion. Supposing $U \ge \poly n$, this version of the theorem replaces the $nv$ term in the space usage of Theorem \ref{thm:mainRestated} with a term of the form $\sum_x v_x$, where $x$ sums over the current key set. One can view this extension as a very space-efficient memory allocator for small objects, allowing a user (or data structure) to allocate and deallocate objects with fine-grained sizes, without incurring the time/space/pointer overheads classically associated with memory allocation \cite{luby1996tight,bender2017cost,bender2024tiny}. Such an allocator has the potential to significantly simplify the task of performing memory management when designing a space-efficient data structure.
}

The proof of \cref{thm:mainRestated} is spread across \cref{sec:inefficient}, \cref{sec:algorithm}, and \cref{sec:hash_function}.  We begin in \cref{sec:inefficient} with a time-inefficient solution that attempts to distinguish different keys in $\mathcal{K}$ by space-efficiently storing distinct fingerprints for (almost all of) them. A basic challenge here is that, depending on when a key is inserted, the length of the fingerprint that we have for that key may vary (and we have no way of extending it!). This means that keys which were inserted when the data structure was very small may have a high ``collision risk’’ with other keys once the data structure gets larger. Fortunately, if we select the fingerprints in the right way (extending a technique that was previously used in \cite{PSW13} as well as \cite{bercea2022extendable}), then we can amortize the ``collision risk’’ of each key based on when the key was inserted out of all the current keys. 

A more significant challenge is supporting fingerprint-size \emph{reduction} when the filter shrinks (this is the main challenge that does not appear in the incremental setting \cite{PSW13,liu2020succinct,bercea2022extendable}). Here, we must be careful to avoid the introduction of new collisions. We handle this by adding an extra layer to our data structure (called the ``midyard'') that stores extra-long fingerprints for certain keys in order to avoid these shrinkage-based collisions. We remark that, although the midyard allows us to get our desired correctness and space guarantees, it also ends up being one of the most difficult parts of the data structure to make time efficient, due to the fact that fingerprints in the midyard do not all have the same lengths as one another.

The most interesting part of the construction appears in \cref{sec:algorithm}, where we show how to take our time-inefficient solution and turn it into a time-efficient one. Here, we show how to encode the query operations to the data structure as a combination of query operations to hash tables, and query operations to predecessor/successor data structures. A priori, this would seem like \emph{bad news}, since predecessor/successor data structures tend to require $\omega(1)$ time per operation both to query and modify. However, we are able to design the data structure in such a way that two things occur: (1) the predecessor/successor data structures are (essentially) static for large swaths of time; and (2) the predecessor/successor data structures are all in one of two regimes: the ``very sparse’’ regime\footnote{This means that the universe size for the predecessor data structure is at least quasi-exponential in the number of elements.}, where fusion node \cite{fusion} techniques can be used to get time-efficient operations; or (2) the ``very dense’’ regime\footnote{This means that the universe size for the predecessor data structure is near-linear in the number of elements.}, where the succincter-type data structures \cite{4690964} can be used to support fast operations without compromising space efficiency. With a careful analysis, as well as a de-amortization argument, we are able to achieve constant time for all operations while (mostly) preserving the space efficiency of our time-inefficient solution. Finally, in \cref{sec:hash_function}, we show how to implement our data structure without the assumption that the keys are chosen uniformly at random from our universe.

\paragraph{Two lower bounds.}
Finally, we complement Theorem \ref{thm:mainRestated} with two lower bounds, each of which follows from a reduction from another well-studied problem \cite{PSW13,10353191}. 

The first lower bound concerns the $n \log^{(k)} n$ term in the space bound in Theorem \ref{thm:mainRestated}. If one does not care about time efficiency, then this term can be eliminated. We prove, however, that if one wants $O(1)$-time operations, then the term is unavoidable (see Theorem \ref{thm:spaceTime}).

The second lower bound considers the $\log \log (U/n)$ term in the space bound in Theorem \ref{thm:mainRestated}. We show that this term is necessary, not just for Theorem \ref{thm:mainRestated}, but even if we want a weaker version of the theorem that supports insertions but not deletions (see Theorem \ref{thm:lowerinsonly}). This contrasts a recent result by Kuszmaul et al.~\cite{RetrievalPhase} which shows that in the fixed-capacity setting (where we have an upper bound $N$ on the maximum number of keys), and when $v = \Theta(\log n)$,  the insertion-only case can be solved using space $nv + O(n)$ bits. 

\paragraph{A corollary for resizable filters. } Finally, as an immediate corollary of our upper-bound constructions, we can obtain improved bounds for the related problem of constructing a \defn{resizable filter}.

A dynamic filter with false-postive rate $\epsilon$ is a data structure that supports insertions and deletions on some set $S$, while supporting an approximate membership query function $\texttt{Query}(x)$, which returns true for any $x \in S$, and which returns false with probability at least $1-\epsilon$ for any $x \not\in S$. 

If a filter has a fixed capacity $n$, then the optimal space bound is $n \log \epsilon^{-1} + \Theta(n)$ bits \cite{Por09,KW24,kuszmaul2025fingerprint}, and with relatively minor constraints on $\epsilon$, this bound can be combined with constant-time operations \cite{10.1145/3519935.3519969,arbitman2010backyard,bercea2020dynamic}. For resizable filters, where $n$ is not known ahead of time, it is known that any solution must use $n \log \epsilon^{-1} + \Omega(n \log \log (U/n))$  bits \cite{PSW13}. On the upper bound side, supposing that $n \ge U^{\Omega(1)}$, it is known how to construct a solution using 
\begin{equation}
n_{\text{max}} \log \epsilon^{-1} + O\left(n_{\text{max}} \log \log n_{\text{max}} \right)
\label{eq:spacegood}
\end{equation}
bits of space, where $n_{\text{max}}$ is the \emph{largest} number of items ever present, that supports constant amortized-expected time operations with high probability. It was also subsequently shown \cite{liu2020succinct} how to achieve the same bound with high-probability constant-time operations, so long as the filter is insertion-only. 

As an immediate corollary of Theorem \ref{thm:mainRestated}, we get a solution, supporting insertions and deletions, that uses space
\begin{equation} n\log \epsilon^{-1} + O(n \log \log n)
\label{eq:spaceverygood}
\end{equation}
bits, where $n$ is the \emph{current} number of keys. As in \eqref{eq:spacegood}, we require $n \ge U^{\Omega(1)}$ for \eqref{eq:spaceverygood} to hold. Our solution supports constant time operations with high probability in $n$. This corollary is provided formally as \cref{cor:resizableFilter}.

It remains an interesting question whether one might be able to directly extend the constructions in \cite{PSW13} or \cite{liu2020succinct} to get a bound of the form \eqref{eq:spaceverygood}. Note that, if one restricts to insertions only (no deletions), and focuses on constructing a filter rather than a retrieval data structure, then the approach taken in \cite{PSW13} (in their Construction 2) makes use of a similar idea as in the current paper, namely the idea of a ``future fingerprint'' used in Section \ref{sec:inefficient} (the same idea was also used in \cite{bercea2022extendable}). The approach taken in \cite{liu2020succinct}, on the other hand, is quite different from ours, even in the insertion-only setting. An interesting feature of the approach in \cite{liu2020succinct} is that they develop a set of techniques (extended ideas from \cite{bender2018bloom}) for, under certain conditions, directly indexing a set of fingerprints of different lengths. It would be interesting to see if one could use a similar approach to handle the ``midyard'' fingerprints in the current paper, as it would represent an entirely different approach than the one taken in Section \ref{sec:midyard}.

\section{Our Algorithm on a Trie}
\label{sec:inefficient}

We start by presenting a time-inefficient data structure for resizable retrieval which yields the desired space consumption. Without loss of generality, we assume $U$ is a power of $2$ in order to present our algorithm more clearly. In this section, we will also assume that all \emph{distinct} inserted/queried keys are uniformly random from the universe $[U]$. Note that this does not mean keys in all operations are independent because insertions of previously deleted keys are allowed. Later in \cref{sec:hash_function}, we will remove this assumption by applying a global hash function $H$ on each input key, costing only $U^\epsilon$ extra bits of space.

\paragraph{View of the Trie.}

Each key $x$ can be viewed as a random string of $\log U$ bits. In our data structure, we will store a prefix of each key $x$, which is denoted by $\tilde{x}$ and called the \defn{fingerprint} of $x$, and discard the remaining bits of $x$.

In the presentation of our data structure, we will view our finerprints as living in a Trie $T$ of depth $\log U$ whose nodes have a one-to-one correspondence with the bit strings of length $\leq \log U$. Going forward, we will directly refer to bit strings of length $\leq \log U$ by their corresponding nodes on the Trie (and vice versa). 
Each key $x$ is represented by a leaf node on the Trie, and its fingerprint $\tilde{x}$ is represented by an ancestor of that leaf. In the data structure, we will only store the information that the key is within the subtree of $\tilde{x}$, but we will not know which leaf node $x$ is exactly.

In our algorithm, we will maintain the invariant that all the stored fingerprints on the Trie do not form ancestor-descendant relationships. Therefore, when we query some key $y$, there will be at most one fingerprint $\tilde{x}$ on the Trie that is $y$'s ancestor, and thus the data structure can simply return $\tilde{x}$'s associated value without the risk of being incorrect.
However, a small fraction of the keys will fail to satisfy this invariant, in which case we explicitly store those keys in a backyard (i.e., we do not discard bits from those keys, and they do not appear on the Trie).

In addition to the fingerprint, we also store a \defn{future fingerprint} $x_f$ for each key $x$, which is the next few bits of $x$ after the fingerprint (prefix) $\tilde{x}$.
This allows the data structure to (slightly) increase the length of the fingerprint $\tilde{x}$ in the future if necessary. Note that, although the future fingerprint is also ``known information'' about the key $x$, we do not represent it in the view of the Trie, i.e., each Trie node still only represents the fingerprint part $\tilde{x}$ of each key. The future fingerprints will be stored together with the associated values for the keys. The idea of a future fingerprint has also been used in previous work in the extendable setting \cite{PSW13,bercea2022extendable}.

\paragraph{Storage structure.}

Our data structure defines two thresholds: the \defn{standard fingerprint length} $\l = \log n + 10\log\log(U/n) + c$ 
and the \defn{standard future fingerprint length} $\l_f = 10\log\log(U/n) + c$, where $c$ is just some large constant.\footnote{We assume for clarity that $\l$ and $\l_f$ are integers, otherwise one can take the ceiling of these values.} They are both determined by the current number of keys $n$ stored in the data structure.
We rebuild the entire data structure after every $n/100$ operations, possibly changing $\l$ and $\l_f$ due to the change of $n$. The thresholds $\l$ and $\l_f$ can only be changed at the rebuilds: Between two rebuilds, the ideal values of $\l$ and $\l_f$ will change by at most 1, and thus we view them as fixed thresholds that do not change over time (until the next rebuild comes). The reason the values can only change by at most $1$ is that for any $n\in [1, U/2], n'\in [0.99n, 1.01n]$,
$$(\log n'+10\log\log(U/n'))-(\log n+10\log\log(U/n))=\log\frac{n'}{n}+10\log\frac{\log(U/n')}{\log (U/n)}.$$
 Note that $\log\frac{n'}{n}\in [\log 0.99, \log 1.01]\subset [-0.1, 0.1]$, and
$$\log\frac{\log(U/n')}{\log(U/n)}=\log\left(1-\frac{\log(n'/n)}{\log(U/n)}\right)\in[\log\left(1-\log 1.01\right), \log\left(1-\log 0.99\right)]\subset [-0.09, 0.09]$$
since $\log(U/n)\geq 1$. Thus the ideal values will change by at most $0.1+10\cdot 0.09=1$.

We divide the Trie $T$ into two parts -- vertices with depth $\leq \l$, which we call the \defn{frontyard}, and vertices with depth $> \l$, which we call the \defn{midyard}. There is also a \defn{backyard} in our data structure, which includes the keys that we explicitly store. As we discussed earlier, the keys in the backyard are not represented on the Trie. As we will show, most of the keys will be stored in the frontyard, which means that we store a fingerprint of at most $\l$ bits for each of them, while a small fraction of keys will be stored in the midyard or the backyard. 
Our data structure will store the keys in the three parts separately as follows.
\begin{itemize}
\item For each key $x$ in the frontyard, we store a tuple $(\tilde{x}, x_f, \Value)$ for it, where $\tilde{x}$ and $x_f$ represent the fingerprint and the future fingerprint of the key as we discussed above, and $\Value$ is the associated value of the key.
\item For each key $x$ in the midyard, we only store the fingerprint $\tilde{x}$ and its associated value, but there is no future fingerprint for $x$.
\item For each key $x$ in the backyard, we explicitly store the entire key $x$ with its associated value.
\end{itemize}

\paragraph{Operations.}

Now, we introduce the procedures for performing queries and updates to our data structure.

\begin{itemize}
    \item To query a key $x$, we first check if $x$ is explicitly stored in the backyard, and return its associated value if this is the case. Otherwise, since no fingerprint is an ancestor of another one on the Trie, we can always find the unique fingerprint in the data structure that is a prefix of the queried key (if there exists one), and return its associated value.
    \item To insert a key $x$, we first try to insert it to the frontyard with an $\l$-bit fingerprint $\tilde{x}$ (and $\l_f$ bits of future fingerprint). This fails when one of the descendants or ancestors of $\tilde{x}$ is already a fingerprint in the data structure, in which case we directly insert $x$ to the backyard.
    \item Deleting an element $x$ is also straightforward: We first call the query procedure to find $x$ in the data structure. No matter whether $x$ was explicitly stored in the backyard or stored on the Trie as a fingerprint $\tilde{x}$, we simply remove it from the data structure.
\end{itemize}

If the standard fingerprint length $\l$ does not change, and we never rebuild, then the midyard will be empty, and all fingerprints in the data structure will have lengths exactly $\l$. In this case, we do not need the future fingerprint, and the data structure degenerates to the previously known data structure in \cite{demaine2006dictionariis} with a space consumption of
\[O\bk*{n\log {\frac {2^\l}n} + \frac {n^2}{2^\l} \log\bk*{\frac{U\cdot 2^\l}{n^2}}}\]
bits, achieving the optimal value when $\l = \log n + O(\log\log(U/n))$. However, when $n$ changes, the value of $\l$ has to change, and the data structure has to be reconstructed. But, when we reconstruct the data structure, we no longer have access to the original keys (we can only access the fingerprints and future fingerprints), and thus we lose some of the information going forward. 
This is where our future fingerprint and rebuilding algorithm comes in to play. Note that the rebuilding process is the reason why the midyard is not empty, as when we decrease the fingerprint length, keys that were not colliding before may now collide. Because we cannot simply send these new collisions to backyard (since we do not have the full key), we instead send these keys to the midyard, adding a layer of complexity to our data structure. 

The idea for the rebuild process is simple: The standard fingerprint length may change, so we try to adjust all fingerprints to length $\l$. Specifically, if $\l$ increments by one, for every key $y$ in the frontyard, we try to take the first bit of $y$'s future fingerprint and append that bit to the end of its fingerprint; if $\l$ decrements by one, for every key $y$ in the frontyard with length of previous $\l$, we try to move the last bit of its fingerprint back to its future fingerprint. Note that two different fingerprints may collide after this operation; in this case, we extend the fingerprints of both keys to include all of their future fingerprint bits, and we throw both keys' fingerprints into the midyard. Moreover, since keys may have been deleted since the last rebuild, to make sure most of the keys are stored in the frontyard, we need to lift some fingerprints/keys in the midyard/backyard to the frontyard. \cref{alg:rebuild} explains this procedure in full detail.

\begin{algorithm}[ht]
\caption{Rebuild}\label{alg:rebuild}
\DontPrintSemicolon
    Let $\l, \, \l_f$ be the new standard fingerprint length and the standard future fingerprint length. They differ from the original values by at most 1. \;
    \If{$\l$ was increased by $1$}{
        \tcp{Extend fingerprint lengths from $\l - 1$ to $\l$ if necessary.}
        \For{each fingerprint $\tilde{x}$ such that $\tilde{x}$ has $\l - 1$ bits}{
            \If{$\tilde{x}$'s future fingerprint $x_f$ is nonempty}{
                Move the first bit of the future fingerprint to the fingerprint. \;\label{line:extend_fp}
            }
        }
    }
    \If{$\l$ was decreased by $1$}{
        \tcp{Shrink fingerprint lengths from $\l + 1$ to $\l$ if possible.}
        \For{each fingerprint $\tilde{x}$ in the frontyard that has length $\l + 1$}{
            Let $u$ be $\tilde{x}$'s depth-$\l$ ancestor. \;
            \uIf{$\tilde{x}$ is the only fingerprint in $u$'s subtree in the frontyard and midyard}{
                Move the last bit of $\tilde{x}$ to its future fingerprint.
            }\Else{
                Move $\tilde{x}$ to the midyard; move all bits in $\tilde{x}$'s future fingerprint to its fingerprint.
            }
        }
    }
    \tcp{Lift keys from the midyard to the frontyard if possible.}
    \For{each depth-$\l$ vertex $u$ on the Trie}{
        \If{no fingerprint is an ancestor of $u$ and there is only one key $x$ in the midyard in $u$'s subtree}{
            Move $x$ to the frontyard; take the length-$\l$ prefix of $x$ (which equals $u$) to be its new fingerprint, and let the remaining bits of $x$ in the data structure be its new future fingerprint. \;
        }
    }
    \tcp{Lift keys from the backyard to the frontyard if possible.}
    \For{each key $x$ in the backyard}{
        Let $u$ be the length-$\l$ prefix of $x$. \;
        \If{no fingerprint is an ancestor of $u$ or in the subtree of $u$}{
            Move $x$ to the frontyard; take the length-$\l$ prefix of $x$ (which equals $u$) to be its new fingerprint, and let the remaining bits of $x$ be its new future fingerprint. \; \label{line:lift_backyard}
        }
    }
    \tcp{Truncate future fingerprints that are longer than the standard length $\l_f$.}
    \For{each fingerprint $(\tilde{y}, y_f, \Value_y)$}{
        Truncate $y_f$ to length $\ell_f$. \;\label{line:truncate_ffp}
    }
\end{algorithm}

\smallskip

We remark that only fingerprints with length $\l$ can have nonempty future fingerprints, because otherwise, \cref{line:extend_fp} will extend the length of the fingerprint.
We also point out that the rebuilding process is the only step in our data structure where keys are moved into the midyard; between two rebuilds, the midyard will only receive deletions with no insertions. Similarly, the rebuild is the only place where fingerprints shorter than $\l$ bits can be generated. These facts will be crucial in \cref{sec:algorithm} where we make our data structure time-efficient.

Note that, by design, the algorithm maintains the following invariants:

\begin{invariant}\label{lem:noanc}
    No fingerprint is an ancestor of another fingerprint.
\end{invariant}

\begin{invariant}
    For any key $x$ in the current data structure that is not stored in the backyard, its corresponding fingerprint is the unique ancestor fingerprint of $x$.
\end{invariant}

\paragraph{Space usage.}

Now that the correctness of the algorithm is clear, the only remaining task is to analyze the space usage of the data structure. Specifically, we will show that the space usage of the data structure is $nv + O(n \log\log (U/n))$ bits. The main task in this analysis is to bound the number of keys in the midyard and the backyard. We start by lower bounding the length of the fingerprint for each key in the data structure.

\begin{claim}\label{claim:depthlowerbound}
    For any key $x$, if from the moment that $x$ was inserted to the current moment, the size of the key set has never been smaller than $W$, then the fingerprint of $x$ in the data structure must have length $\ge \min\{\ell, \; \log W + 20\log\log (U/W)\}$ (or $x$ is in the backyard).
\end{claim}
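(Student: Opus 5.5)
We plan to prove a stronger invariant by induction over the sequence of operations and rebuilds. For a key $x$ not in the backyard, let $W_x(t)$ be the minimum size of the key set between $x$'s last placement on the Trie and time $t$. A key is placed on the Trie either by insertion into the frontyard or by a lift in \cref{alg:rebuild}. Set $L^\ast = \log W + 20\log\log(U/W)$. The invariant has two parts, both holding at every time $t$:
\[
|\tilde x| \;\ge\; \min\{\ell,\; L^\ast\},
\qquad\text{and, if $x$ is in the frontyard,}\qquad
|\tilde x| + |x_f| \;\ge\; \min\{\ell+\ell_f,\; L^\ast\}.
\]
Here $\ell,\ell_f$ are the current standard lengths. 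The second part records how much ``potential length'' the future fingerprint still holds. It is needed because the first part alone is not preserved when $\ell$ increases. The claim is the first part.

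The key numerical fact is $\ell(n)+\ell_f(n) = \log n + 20\log\log(U/n) + 2c$. The $+2c$ slack, together with the fact from the rebuild discussion that $n$ changes by at most a $1.01$ factor between rebuilds, should give the following: whenever the size has stayed $\ge W$, the standard lengths in force satisfy $\ell+\ell_f \ge L^\ast$, up to the cap $\ell\le\log U$. The non-monotonicity of $\log n + 20\log\log(U/n)$ near $n\approx U$ is absorbed by taking the minimum with $\ell$, which is at most $\log U$. This is the step where I am least sure the constants close. I would first check that $n\mapsto \log n+20\log\log(U/n)$ is increasing for $n\le U/2^{C}$ for some constant $C$. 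The remaining range can then be handled via the $\min$ with $\ell$, since near $n\approx U$ both sides are $\log U - O(1)$ and $c$ is large.

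With this in hand, we check each event in turn. For these events I expect the induction to be routine once the right invariant is fixed.
\begin{enumerate}[label=(\roman*)]
\item Insertion into the frontyard, and lifts from the midyard or backyard, set $|\tilde x|=\ell$. They set $|x_f|=\ell_f$, or ``all remaining known bits'', which is at least as much as before. So both parts hold.
\item Insertions and deletions of other keys change nothing about $x$. They can only raise $W_x$ or leave it unchanged.
\item When $\ell$ increases by one, a fingerprint of length $\ell-1$ absorbs a future bit if one exists. The first part then follows from the second part at the previous time, together with $|x_f|$ being empty only when $|\tilde x| = |\tilde x|+|x_f| \ge \min\{\ldots\}$.
\item When $\ell$ decreases by one, a length-$(\ell+1)$ fingerprint either shrinks to exactly $\ell$, which is fine, or moves to the midyard with its whole future fingerprint appended. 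Its length is then $|\tilde x|+|x_f| \ge \min\{\ell+\ell_f,L^\ast\} \ge \min\{\ell, L^\ast\}$. From then on, midyard fingerprints never change length until the key is lifted or deleted.
\item Truncation to $\ell_f$ leaves $|\tilde x|+|x_f| \ge \min\{|\tilde x|+\ell_f, \text{old}\}$, which preserves the second part using the first.
\end{enumerate}
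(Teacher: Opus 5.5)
Your argument is essentially the paper's. The paper observes two things. First, $|\tilde x|+|x_f|$ can only drop at the truncation step, where it is reset to at least $\ell+\ell_f\ge\log W+20\log\log(U/W)+2c$. Second, a fingerprint shorter than $\ell$ carries no future bits, so its length equals that total. Your part~2 is the first observation and your case~(iii) is the second. Once your numerical fact holds, $\min\{\ell+\ell_f,L^\ast\}=L^\ast$ throughout the window, so part~2 reduces to $|\tilde x|+|x_f|\ge L^\ast$ and the $\min$ is superfluous.

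Two small repairs are needed for the induction to close as written.

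First, part~2 is stated only for frontyard keys, but you actually rely on it for midyard keys in two places:
\begin{itemize}
\item The midyard-to-frontyard lift in (i) makes the new total equal to the old midyard fingerprint length. For a midyard key, your invariant only guarantees $\min\{\ell,L^\ast\}$.
\item If $\ell$ later grows past a midyard fingerprint's length, part~1 for that key needs $|\tilde x|\ge L^\ast$. The bound $\min\{\ell_{\text{then}},L^\ast\}$ you record in (iv) is not enough.
\end{itemize}
The fix is to state part~2 for every key on the Trie, taking $x_f=\emptyset$ in the midyard. Step~(iv) already proves this at the moment of the move. It then persists, because midyard lengths never change and the right-hand side is the constant $L^\ast$.

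Second, in (ii), $W_x(t)$ is a running minimum, so deletions can lower it but never raise it. It is cleaner to use the claim's fixed $W$ throughout. The hypothesis gives size $\ge W$ at every time after $x$'s last placement, so $L^\ast$ is a constant and (ii) becomes trivial. This also avoids needing monotonicity of $L^\ast$ in $W$.
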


\begin{proof}
    First, we lower bound the total length of $x$'s fingerprint and future fingerprint.
    Note that the only way to reduce this quantity is the truncation step of the rebuilding algorithm (\cref{line:truncate_ffp}), which can only reduce the total length to $\l + \l_f \ge \log W + 20 \log\log (U/W) + 2c$
    bits as the data structure always contains at least $W$ keys when $x$ is present.
    
    This implies a lower bound on the length of $x$'s fingerprint in the current state of the data structure, because only the fingerprints with length $\ge \l$ can have nonempty future fingerprints, and thus, either $x$'s fingerprint has length $\ge \l$, or $x$ has no future fingerprint (so $x$'s fingerprint length is at least $\log W + 20 \log \log (U/W)$). The claim holds in both cases.
\end{proof}

Now, we will use this claim to bound the number of depth $\ell$ vertices which contain a fingerprint as a strict ancestor. This claim will form an instrumental part in the bound on the number of bits dedicated to keys in the midyard and backyard. 

\begin{claim}\label{clm:numFrontyard}
    The number of vertices $u$ with depth $\ell$ for which the data structure contains a fingerprint that is $u$'s \emph{strict} ancestor is $O\left(n/\log^{8}(U/n)\right)$.
\end{claim}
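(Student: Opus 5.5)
The plan is to charge each counted depth-$\l$ vertex to the fingerprint above it and to bound the total charge using \cref{claim:depthlowerbound}. By \cref{lem:noanc}, a depth-$\l$ vertex $u$ has at most one fingerprint as a strict ancestor, and that fingerprint has some length $d<\l$. Conversely, a fingerprint of length $d<\l$ is a strict ancestor of exactly $2^{\l-d}$ depth-$\l$ vertices. Midyard fingerprints have length $>\l$ and backyard keys are not on the Trie, so only frontyard keys with short fingerprints contribute. Hence the count in the claim equals $\sum_{x} 2^{\l-|\tilde{x}|}$, where the sum ranges over keys $x$ with $|\tilde{x}|<\l$. The task is to bound this sum by $O(n/\log^8(U/n))$.

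For each current key $x$, let $W_x$ be the minimum size of the key set over the period from $x$'s insertion to now. Then $W_x\le n$, and by \cref{claim:depthlowerbound},
\[
|\tilde{x}|\ \ge\ \min\{\l,\ \log W_x+20\log\log(U/W_x)\}.
\]
Therefore, when $|\tilde{x}|<\l$,
\[
2^{\l-|\tilde{x}|}\ \le\ 2^c\cdot\frac{n}{W_x}\cdot\frac{\log^{10}(U/n)}{\log^{20}(U/W_x)}.
\]

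The key counting step is that, for any threshold $W$, at most $W$ current keys satisfy $W_x\le W$. To see this, let $t^*$ be the latest time at which the set size was $\le W$. Any $x$ with $W_x\le W$ saw some time $t\le t^*$ with size $\le W$ after its insertion, so it was inserted before $t^*$. It is still present now, so it was present at $t^*$, and at most $W$ keys were present then.

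Now group keys dyadically, putting $x$ into group $j$ when $W_x\in[2^j,2^{j+1})$. Group $j$ has at most $2^{j+1}$ keys. Each key in it contributes at most $2^c\,(n/2^j)\,\log^{10}(U/n)/\log^{20}(U/2^{j+1})$. So group $j$ contributes $O\!\left(n\log^{10}(U/n)/\log^{20}(U/2^{j+1})\right)$.

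Summing over $j\le\log n$, the quantity $m=\log(U/2^{j+1})$ runs over distinct integers that are at least $\log(U/n)-1$. Since $n\le U/2$, this is $\Omega(\log(U/n))$ up to constant adjustments. Then $\sum_m m^{-20}=O(\log^{-19}(U/n))$, which gives a total of $O(n/\log^{9}(U/n))\subseteq O(n/\log^{8}(U/n))$.

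The main obstacle I expect is the counting step, that at most $W$ keys have $W_x\le W$. The key step is choosing the \emph{latest} low-size moment $t^*$, which makes a single snapshot witness all of these keys simultaneously. The rest is bookkeeping: handling the boundary cases where $U/2^{j+1}$ is close to $U/n$ (absorbed using $n\le U/2$ and the constant $c$), and noting that the minimum with $\l$ in \cref{claim:depthlowerbound} simply excludes keys that do not contribute.
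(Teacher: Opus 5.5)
Your proof is correct and takes essentially the same route as the paper. Both charge $2^{\l-d}$ to each fingerprint of length $d<\l$, lower-bound $d$ through \cref{claim:depthlowerbound} using the fact that at most $W$ current keys can have seen the set size drop to $W$ or below, and sum over dyadic scales; the paper phrases that fact as the last $n-W$ keys in insertion order having coexisted with the first $W$ throughout their lifetime. The differences are bookkeeping only: keeping the full factor $\log^{20}(U/W_x)$ rather than splitting it as $2\log\log(n/W)+18\log\log(U/n)$ actually gives you the slightly stronger bound $O(n/\log^{9}(U/n))$, and the boundary case is handled most cleanly by bounding $\log(U/W_x)\ge\log(U/n)\ge 1$ directly, or by noting the claim is trivial when $U/n=O(1)$ since the count is then at most $2^{\l}=O(n)$.
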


\begin{proof}
    For any $W$, we can partition the $n$ keys in the current key set into two parts according to the relative order that they were inserted: the first $W$ inserted keys and the last $n - W$ inserted keys. Then, for each of the last $n - W$ inserted keys, by the previous \cref{claim:depthlowerbound}, its fingerprint length is at least
    \[\min(\l, \, \log W + 20\log\log(U/W)).\]
    Note that when $W \geq 2^c \cdot n / \log^{10}(U/n)$,
    \begin{align*}
    \log W + 20\log\log (U/W) &\geq \log n-10\log\log (U/n) + 20\log\log ((U/n)\log^{10}(U/n)) + c \\
    &\ge \log n + 10 \log \log (U/n) + c
    \geq \ell.
    \end{align*}
    Thus, only the first $O(n/\log^{10}(U/n))$ inserted keys may have a depth $< \ell$.
    
    Furthermore, if a key is maintained at depth $d$, the number of its descendants with depth $\ell$ is $O(2^{\ell-d})$ (the following $\l-d$ bits are arbitrary). Then the number of vertices with depth $\l$ that have a fingerprint as its ancestor is bounded by the sum of $O(2^{\l - d})$ of the first $O(n/\log^{10}(U/n))$ inserted keys, where $d\geq \log W + 20\log\log (U/W)$ and $W$ is the rank of its insertion time. The number of such vertices is therefore at most
    \[
    \begin{aligned}
    & \phantom{{}={}}O\left(\sum_{W=1}^{O(n/\log^{10}(U/n))}2^{\textcolor{red}{\l}-\log W-\textcolor{blue}{20\log\log (U/W)}}\right)\\
    & \leq O\bk*{\sum_{W=1}^n 2^{\textcolor{red}{\log n + 10\log\log(U/n)} - \log W - \textcolor{blue}{(2\log\log(n/W) + 18\log\log(U/n))}}}\\
    & \leq O\left(\sum_{W=1}^{n}2^{\log (n/W) - 2\log\log (n/W) - 8\log\log (U/n)}\right).
    \end{aligned}\]
    Now, we divide $\{1,2,\dots,n\}$ into $\log n$ intervals by the value of $j=\lceil\log(n/W)\rceil$. There are at most $O(n/2^j)$ integers $W$ in $\{1,2,\dots,n\}$ that have $\lceil\log(n/W)\rceil\geq j$. So after we change it to summing up by $j$, it becomes
    \[
    \begin{aligned}
    & \phantom{{}={}} O\left(\sum_{j=1}^{\log n}\frac{n}{2^j}2^{j - 2\log j - 8\log\log (U/n)}\right)\\
    &= O\bk*{n\sum_{j=1}^{\log n} \bk*{\frac{1}{j^2} \cdot \frac{1}{\log^8(U/n)}}}\\
    & = O\left(\frac{n}{\log^8(U/n)}\sum_{j=1}^{\log n}\frac{1}{j^2}\right)\\
    & = O\left(\frac{n}{\log^{8}(U/n)}\right).
    \end{aligned}
    \]
\end{proof}

The following claim shows a necessary condition for any key $x$ to appear in the midyard or backyard.

\begin{claim}\label{clm:condInBackyard}
    If a key $x$ is currently in the midyard or backyard with $u$ being its length-$\l$ prefix, then there exists another key $y$, such that:
    \begin{enumerate}
        \item either $y$ contains $u$ as a prefix, or $y$'s fingerprint in the data structure is a prefix of $u$;
        \item either $y$ was in the data structure at the last rebuild, or $y$ was inserted after the last rebuild. (Note that we do not require $y$ to be present in the current state of the data structure.)
    \end{enumerate}
\end{claim}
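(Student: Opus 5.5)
We argue by cases on how $x$ reached its current location. The midyard or backyard is entered in only two ways: an insertion that fails and sends $x$ to the backyard, or a rebuild that moves $x$ from the frontyard to the midyard during a shrink. Keys already in the midyard or backyard may also simply remain there across a rebuild because they could not be lifted. The plan is to look only at the \emph{last} rebuild, and at whichever of these events placed $x$ in its current state relative to that rebuild. We then exhibit the witness $y$ from the failure condition of the relevant step.

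\textbf{Case 1: $x$ was inserted after the last rebuild and went to the backyard.} The insertion failed, so some fingerprint $\tilde y$ present at that moment was an ancestor or descendant of the length-$\l$ prefix $u$ of $x$. That fingerprint belongs to a key $y$ that was either present at the last rebuild or inserted afterward, so condition 2 holds. If $\tilde y$ is a prefix of $u$, condition 1 holds directly. If $\tilde y$ is a descendant of $u$, then $y$ itself has $u$ as a prefix, because a fingerprint is a prefix of its key. Here we should check that the fingerprint of $y$ has not changed between the insertion and now; this holds because fingerprints change only at rebuilds.

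\textbf{Case 2: $x$ was present at the last rebuild and, after it, lies in the midyard or backyard.} Between rebuilds nothing moves into these parts, so $x$'s placement was decided by the rebuild itself. If $x$ is in the backyard, the lifting loop failed. That means some fingerprint was an ancestor of $u$ or lay in $u$'s subtree after the midyard lifting, and hence at the rebuild. The owner $y$ of that fingerprint was in the data structure at the last rebuild, and condition 1 follows exactly as in Case 1. If $x$ is in the midyard, either it was just moved there by the shrink step, or it was already in the midyard and the midyard-lifting step did not lift it. In the shrink case, another fingerprint lay in the subtree of $u$ in the frontyard or midyard, and its key $y$ has prefix $u$. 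In the non-lifting case, there are two possibilities. If a fingerprint is an ancestor of $u$, its key $y$ satisfies condition 1. Otherwise there is a second midyard key $y$ in $u$'s subtree, and again $y$ has $u$ as a prefix.

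\textbf{Main obstacle.} The delicate point is the bookkeeping of prefix lengths. We must confirm that $x$'s stored representation (fingerprint, or full key) in these cases extends at least to depth $\l$, so that ``$u$ is its length-$\l$ prefix'' refers to a genuine prefix of $x$ and matches the $u$ used in the algorithm. For midyard keys after a shrink, the fingerprint has length $\ge \l+1$. For midyard keys surviving from earlier rebuilds, we need the observation that midyard fingerprints always have length $>\l$ for the current $\l$. This follows because the midyard consists of depth-$>\l$ vertices, and we should verify that this definition is consistent after $\l$ increments. A second subtlety is that the statement uses $y$'s \emph{current} fingerprint when $y$ is still present. Since the witness fingerprints were observed at or after the last rebuild, and fingerprints are modified only at rebuilds, the observed fingerprint coincides with the current one whenever $y$ is still present.
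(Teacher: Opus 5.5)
Your proof is correct and follows the paper's argument: the same split into ``$x$ was inserted after the last rebuild, hence sits in the backyard and was blocked by some fingerprint at insertion time'' versus ``$x$ was present at the last rebuild and that rebuild did not lift it.'' You also spell out the rebuild sub-cases (failed backyard lift, the shrink step, failed midyard lift) in more detail than the paper does. One small correction: your remark that the witness fingerprint coincides with $y$'s current one fails if $y$ is deleted and reinserted after blocking $x$, since $y$ may then sit at a different depth-$\l$ vertex. Condition~1 should therefore be read as referring to $y$'s fingerprint at the moment it blocked $x$, and that is all the later use in \cref{clm:numBackyard} needs.
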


\begin{proof}
    If $x$ was inserted before the last rebuild, then we can find such a $y$ which was present at the moment of the last rebuild, since the last rebuild did not lift $x$ to $u$.
    Otherwise, as $x$ was inserted after the last rebuild, $x$ must be in the backyard instead of the midyard, and the key that blocks it from being inserted into the frontyard gives us the desired $y$.
\end{proof}

We are almost ready to prove that the size of the midyard and the backyard is small. The previous claim says that if a depth-$\l$ node is in the midyard or backyard, then there must have been more than one key in its subtree.
The following claim shows that the number of such nodes is small.

\begin{claim}\label{clm:sizeOfMidyard}
    For $n$ random keys from the universe, letting $n'$ be the number of distinct length-$\l$ prefixes of these keys, we have that \[n - n' \leq O\left(\frac n{\log^8(U/n)}\right)\] with high probability in $n$.
\end{claim}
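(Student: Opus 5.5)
The plan is a balls-into-bins argument: bound the expectation of the collision count, then show concentration.

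Model the $n$ keys as $n$ balls thrown independently and uniformly into $m = 2^\l$ bins, the length-$\l$ prefixes. This is valid because distinct keys are uniformly random; sampling without replacement from $[U]$ only makes collisions less likely, and with $U \ge 2n$ the difference is negligible. The quantity $n - n'$ equals the number of balls that do not land in an empty bin when placed in order. It is therefore bounded by the number of colliding pairs, $Z = \sum_{i<j} \mathbf{1}[\text{prefix}_i = \text{prefix}_j]$.

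For the expectation we get
\[
\E[Z] = \binom{n}{2} 2^{-\l} \le \frac{n^2}{2\cdot n \log^{10}(U/n) 2^c} \le \frac{n}{\log^{10}(U/n)},
\]
using $2^\l = n\log^{10}(U/n)2^c$. This is even smaller than the target $O(n/\log^8(U/n))$, which leaves slack for concentration.

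For the high-probability statement, $n - n'$ is a function of the $n$ independent prefixes, and changing one prefix changes it by at most $1$. McDiarmid's inequality then gives deviation $O(\sqrt{n\log n})$ with probability $1 - n^{-\Omega(1)}$.

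The main obstacle is that $\sqrt{n\log n}$ is $O(n/\log^8(U/n))$ only when $\log^{16}(U/n)\log n \lesssim n$. This holds when $U \le 2^{n^{1/20}}$, and in particular under the paper's regime $n \ge \mathrm{polylog}(U)$, since then $\log(U/n) \le \log U \le n^{1/C}$. I would state that assumption explicitly.

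If one wants to avoid it, use a cruder tail instead. The event $n - n' \ge t$ requires $t$ balls each landing in an occupied bin. Revealing balls sequentially, each lands in an occupied bin with probability at most $n/2^\l = p \le 1/\log^{10}(U/n)$, independently of the past. Hence $n - n'$ is stochastically dominated by $\mathrm{Bin}(n,p)$. A Chernoff bound gives $\Pr[\mathrm{Bin}(n,p) \ge 2np + \Theta(\log n)] \le n^{-\Omega(1)}$, so $n - n' \le O(n/\log^{10}(U/n) + \log n)$ with high probability. The additive $\log n$ is absorbed into $O(n/\log^8(U/n))$ whenever $n \ge \mathrm{polylog}(U)$.

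I would present this domination argument as the main proof, since it is cleaner and sharper, and the McDiarmid route is unnecessary.
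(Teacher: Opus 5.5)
Your proof is correct and has the same structure as the paper's. Both write $n-n'$ as a sum of sequential indicators $Y_i$, where $Y_i$ indicates that the $i$-th key lands on an already-used length-$\ell$ prefix with conditional probability at most $(i-1)/2^{\ell}\le n/2^{\ell}$, and then apply a tail bound. The only real difference is the tail bound: the paper applies Azuma to the centered supermartingale $Y_i-(i-1)/2^{\ell}$, which ignores the small variance and leaves an additive error of $O(n/\log^8 U)$, whereas your domination by $\mathrm{Bin}(n,n/2^{\ell})$ plus multiplicative Chernoff gives the sharper additive error $O(\log n)$. Both arguments rely on $n\ge \mathrm{polylog}(U)$ to absorb that error into $O(n/\log^8(U/n))$.
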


\begin{proof}
    Define $Y_i$ as the indicator variable of the event that the $i$-th hashed key collides with some $j$-th hashed key where $j<i$ (where a collision means that they have the same length $\l$ prefix). Then, the value of $n - n'$ is exactly the sum of the $Y_i$'s, because $n - n'$ is $n$ minus the number of distinct prefixes (and thus exactly counts the number of collisions). We then define auxiliary variables $Y'_1,Y'_2,\dots,Y'_n$ where $Y'_i = Y_i - \frac{i-1}{2^{\ell}}$. Note that \[
    \forall i, \E[Y'_i\mid Y'_1,\ldots,Y'_{i-1}]\leq 0.
    \]
    By Azuma's Inequality, \[\Pr\left[\sum_{i=1}^n Y'_i\geq O\left(\frac{n}{\log^8 U}\right)\right]\leq \exp\left(-\Omega\left(\frac{n}{\log^{16} U}\right)\right)=U^{-\omega(1)}.\] Furthermore,
    \[\sum_{i=1}^n \frac{i-1}{2^\l}\leq O\left(\frac{n}{\log^8 (U)}\right).\]
    which means that $\sum_{i = 1}^n Y_i = n - n'$ will not exceed $O(n/\log^8(U))$ with high probability. Now, because $\log(U/n) \leq \log(U)$, we see that $\frac{n}{\log^8 (U/n)} \geq \frac{n}{\log^8(U)}$, and so this sum is also bounded by $O\left ( \frac{n}{\log^8 (U/n)}\right )$, as we desire. 
\end{proof}

Now we are ready to give the following key claim, which bounds the number of fingerprints in the midyard and the backyard:

\begin{claim}\label{clm:numBackyard}
    At any given moment, the number of fingerprints with length $>\ell$ (in the midyard) and the number of keys in the backyard is $O(n/\log^8(U/n))$ with high probability in $n$.
\end{claim}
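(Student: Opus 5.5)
We charge every key in the midyard or backyard to a depth-$\l$ vertex $u$, its length-$\l$ prefix, and use \cref{clm:condInBackyard} to show that every such $u$ is ``bad''. Here $u$ is bad if either (a) some fingerprint that is a strict ancestor of $u$ exists (or existed at the relevant time), or (b) at least two keys from a fixed set of candidate keys have $u$ as a length-$\l$ prefix. The proof then has two parts: bound the number of bad vertices, and bound how many midyard/backyard keys can be charged to a single bad vertex.

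\textbf{Step 1: the set of relevant keys.} Let $S$ be the set of keys that were present at the last rebuild, together with those inserted since then. Rebuilds happen every $n/100$ operations, so $|S| \le n + O(n/100) = O(n)$. Every key that the current statement concerns lies in $S$, and so does every witness $y$ produced by \cref{clm:condInBackyard}. Moreover, the distinct keys in $S$ are uniformly random. This gives the following case split. Take a key $x$ in the midyard or backyard whose length-$\l$ prefix is $u$. Either the witness $y$ has a fingerprint that is a prefix of $u$, or $y$ has $u$ as a prefix, meaning $x$ and $y$ collide at depth $\l$ among the keys of $S$. The first case again splits in two. If $y$'s fingerprint is shorter than $\l$, it is a strict ancestor of $u$. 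If it has length exactly $\l$, it equals $u$, so $y$'s key itself has prefix $u$, and we are back in the collision case.

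\textbf{Step 2: counting.} We bound the number of keys whose prefix $u$ lies in each category. For collisions, apply \cref{clm:sizeOfMidyard} to $S$, which has $O(n)$ keys; this is legitimate because $\l$ changes by at most $1$ and $|S| = \Theta(n)$. Each key that shares its length-$\l$ prefix with another key of $S$ is either counted in $|S| - (\text{number of distinct prefixes})$ or is the first key in its group. Either way, the number of keys involved in collisions is at most twice $|S| - n'$, which is $O(n/\log^8(U/n))$ w.h.p. For strict-ancestor vertices, \cref{clm:numFrontyard} bounds the number of depth-$\l$ vertices with a strict ancestor fingerprint by $O(n/\log^8(U/n))$. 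We apply it at the last rebuild and, for keys inserted later, at the time of insertion; the sizes involved stay within a constant factor of $n$. A vertex $u$ of this kind that is not also a collision vertex has at most one key of $S$ beneath it, so it is charged only once.

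\textbf{Midyard fingerprints.} Midyard fingerprints correspond to keys in the midyard, so counting keys in the midyard also counts fingerprints of length $>\l$.

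\textbf{Main obstacle.} The main difficulty is the ancestor case. \cref{clm:condInBackyard} only says that $y$'s fingerprint was a prefix of $u$ at the time $y$ blocked $x$, which is either the last rebuild or $x$'s insertion time. The fingerprint may have changed or been deleted since then, so \cref{clm:numFrontyard} must be applied at those earlier moments. If $x$'s prefix is strictly below $y$'s fingerprint, the bound from \cref{clm:numFrontyard} covers it. A union over the $O(n)$ insertion times since the last rebuild fails, because the bad sets at different times are different sets. I will handle this by observing that the fingerprints shorter than $\l$ are only created at rebuilds. Between rebuilds, then, the set of strict-ancestor vertices only shrinks, through deletions. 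It is therefore contained in the set present at the last rebuild, where a single application of \cref{clm:numFrontyard} suffices. Finally, a union bound over the polynomially many moments gives the high-probability guarantee.
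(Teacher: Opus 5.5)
Your proposal is correct and is essentially the paper's own proof. The paper charges each midyard/backyard key to its length-$\ell$ prefix $u$ and bounds the vertices with $s_u \ge 2$ by $2\sum_u (s_u-1)$ via \cref{clm:sizeOfMidyard}. It bounds the isolated blocked vertices ($s_u = 1$) by a single application of \cref{clm:numFrontyard} at the last rebuild, using exactly your observation that fingerprints shorter than $\ell$ are only deleted between rebuilds. Your closing union bound over moments is unnecessary, since \cref{clm:numFrontyard} is deterministic and only the collision count is probabilistic at a fixed moment.
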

\begin{proof}
First, we make an observation that will be useful in the following proof. If we consider our data structure at any point in time between consecutive rebuilds, then we only have \emph{deletions} of keys of length $< \ell$, as any keys that are inserted are either at depth $\ell$ or in the backyard. We call this property being ``decremental'', as the number of these keys is not increasing. 

Now, the remainder of the proof will proceed by looking at each depth $\ell$ node $u$ in our Trie, and bounding the number of descendants that this node has. So, let us fix a depth-$\l$ node $u$ and analyze the number of keys in the subtree of $u$, excluding $u$ itself. Essentially, we are counting all the keys $x_{u, 1}, \dots x_{u, s_u}$, where each $x_{u,i}$ is a key that contains $u$ as a prefix. Note that each $x_{u,i}$ either existed at the moment of the last rebuild or was inserted after the last rebuild. Because keys can be inserted and deleted repeatedly, for analysis, we only include a key $x_{u,i}$ in our count at most a single time.

Next, note that if $s_u = 1$, and $x_{u,1}$ is in the frontyard, then we can skip this case, as there will be no fingerprints in $u$'s subtree that are in the backyard or midyard. Thus, we focus specifically on the cases where $s_u = 1$, and $x_{u,1}$ is not in the frontyard, and when $s_u > 1$.
    
Now, using the above logic, we can create a bound on the number of keys in the midyard and the backyard with $u$ as a prefix as at most:
    \[\sum_{\text{depth-}\l \text{ node }u} 2(s_u-1) + \sum_{\text{depth-}\l \text{ node }u} \mathbf{1}[s_u=1]\cdot \mathbf{1}[x_{u,1} \text{ is not in the frontyard}].\]
    This is because $s_u$ counts all keys (including those in the midyard and backyard) with $u$ as a prefix. If $s_u \geq 2$, then $s_u \leq 2 \cdot (s_u -1)$. Otherwise when $s_u = 1$, we first explicitly check that $x_{u,1}$ is not in the frontyard. The second term encodes this logic and provides an upper bound in this case.
    
    First, we prove the latter term is small. A depth-$\l$ vertex $u$ is counted if and only if $s_u=1$ and there is another key $y$ that satisfies the condition stated in \cref{clm:condInBackyard} with respect to $x_{u,1}$ (i.e., which forces $x_{u,1}$ into the backyard). Note that here, $x_{u, 1}$ and $y$ are both keys, and so must have the same length. Since $s_u=1$, it must then be the case that $y \neq x_{u,1}$, and so $y$ must have a fingerprint which is a strict prefix of $u$. 
    Then, since fingerprints with length $<\l$ are decremental between rebuilds, 
    $y$ must have been in the data structure at the moment of the last rebuild.
    By \cref{clm:numFrontyard}, at the moment of the last rebuild, the number of vertices with depth $\l$ that have a fingerprint as its ancestor is at most $O\left(n/\log^8(U/n)\right)$. Thus the number of possible $u$'s which can find such a $y$ is at most $O\bk*{n/\log^8(U/n)}$, and so the latter term satisfies our desired bound. 
    
    Now, we only need to prove that the former term does not exceed $O\left(n/\log^8(U/n)\right)$. For this, recall that $s_u$ refers to the number of keys that contain $u$ as a prefix, so $\sum_{u: s_u \neq 0} s_u = n$, as every key has \emph{some} $u$ as a prefix. At the same time, when we consider $\sum_{u: s_u \neq 0} -1$, this becomes exactly the (negative) \emph{number} of distinct prefixes for our keys. In the notation of \cref{clm:sizeOfMidyard}, we refer to this quantity as $-n'$. To conclude then, we see that 
    $\sum_{u: s_u \neq 0} (s_u  -1) = n - n'$, which we can then bound by $O\left(\frac{n}{\log^8(U / n)}\right)$ using \cref{clm:sizeOfMidyard}.

    By combining our bounds for the first and second term in the above expression, we obtain our desired bound. 
\end{proof}

The following lemma will conclude the proof of the total space consumption.

\begin{lemma}
    The space consumption is $nv+O(n\log\log(U/n))$ with high probability in $n$.
\end{lemma}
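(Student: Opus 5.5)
The plan is to account separately for the frontyard, midyard, and backyard. The space is measured information-theoretically: each part is charged the log of the number of possible configurations, since these sets can be encoded (time-inefficiently) within that many bits plus lower-order terms. Throughout, $n$ is the current number of keys, and $\l,\l_f$ are the fixed thresholds of the current epoch. They are within $\pm1$ of their ideal values for $n$, so $\l = \log n + O(\log\log(U/n))$ and $\l_f = O(\log\log(U/n))$.

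\textbf{Frontyard.} Frontyard fingerprints have length at most $\l$ and, by \cref{lem:noanc}, form an antichain in the Trie. I will encode this set as a subset of the depth-$\l$ vertices together with a marking of which ones are truncated. Concretely, each fingerprint is mapped to its leftmost depth-$\l$ descendant, and for those of length less than $\l$ the length is stored explicitly. By \cref{claim:depthlowerbound} and the computation in \cref{clm:numFrontyard}, only $O(n/\log^{10}(U/n))$ fingerprints are shorter than $\l$. Storing their lengths costs $O(\log \l)$ bits each, which is $o(n)$.

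The set of at most $n$ depth-$\l$ vertices costs
\[
\log\binom{2^\l}{n} \le n\log(e2^\l/n) = O\bigl(n\log\log(U/n)\bigr)
\]
bits. Each frontyard key also stores its future fingerprint ($\l_f=O(\log\log(U/n))$ bits, plus $O(\log \l_f)$ bits for its length) and its $v$-bit value. The frontyard therefore totals $nv+O(n\log\log(U/n))$ bits.

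\textbf{Midyard and backyard.} By \cref{clm:numBackyard}, with high probability in $n$ there are $m=O(n/\log^8(U/n))$ midyard fingerprints and backyard keys. I will store each explicitly: up to $\log U$ bits for the fingerprint or key, $O(\log\log U)$ bits for its length, and $v$ bits for the value. This costs $mv + O(m\log U)$ bits.

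\textbf{Main obstacle.} The hard step is that $m\log U$ need not be $O(n\log\log(U/n))$ when $U/n$ is small relative to $U$, because $\log^8(U/n)$ may be much smaller than $\log U$. I will fix this by charging each explicit entry only for the bits beyond the information already implied by its position. Each such entry is encoded relative to its length-$\l$ prefix $u$: the set of these prefixes is stored as a subset of $[2^\l]$, which costs
\[
\log\binom{2^\l}{m}=O\bigl(m\log(2^\l/m)\bigr)=O\bigl(m\log\log(U/n)\bigr)
\]
bits. Only the remaining at most $\log U-\l = \log(U/n) - O(\log\log(U/n))$ bits are stored explicitly. The midyard and backyard then cost
\[
O\bigl(m\log(U/n)\bigr)=O\bigl(n\log(U/n)/\log^8(U/n)\bigr)=O(n)
\]
bits, using $\log(U/n)\ge 1$.

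Summing the three parts gives $nv+O(n\log\log(U/n))$ bits with high probability in $n$, where the probability comes from \cref{clm:numBackyard}.
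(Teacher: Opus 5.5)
Your proof follows the paper's: both charge each part the logarithm of its number of configurations. The frontyard costs $mv$, plus a subset-encoding of fingerprints of depth at most $\ell$, plus $O(m\log\log(U/n))$ bits of future fingerprints. The midyard and backyard cost $(n-m)v$ plus a set-encoding term that \cref{clm:numBackyard} makes $O(n)$. The only difference is how you handle your ``main obstacle.'' The paper treats midyard fingerprints and backyard keys as a subset of the $2U$ strings of length at most $\log U$ and bounds $\log\binom{2U}{n-m}\le (n-m)\log\bigl(2eU/(n-m)\bigr)=O\bigl(\tfrac{n}{\log^8(U/n)}\log(U/n)\bigr)=O(n)$ directly. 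Your prefix/suffix split is a hand-rolled version of that same estimate.

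Two details of your encodings need repair, though neither changes the bound.

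First, the length-$\ell$ prefixes of midyard and backyard entries are typically not distinct. \cref{alg:rebuild} sends a fingerprint to the midyard exactly when another fingerprint lies under the same depth-$\ell$ ancestor, and several keys inserted below the same occupied depth-$\ell$ vertex all land in the backyard. So you must store a multiset, not a subset of $[2^\ell]$: for example $\log\binom{2^\ell+m}{m}$ bits, or the distinct prefixes plus unary multiplicities at $O(m)$ extra bits.

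Second, fixed-width $O(\log\ell)$-bit length fields for the short frontyard fingerprints are not $o(n)$ in general. When $U/n$ is small (e.g.\ constant, which $n\le U/2$ allows), the bound $O(n/\log^{10}(U/n))$ on their number is vacuous. These fields are then only bounded by $O(n\log\log n)$, which exceeds $O(n\log\log(U/n))$. There are two easy fixes:
\begin{itemize}
\item encode the fingerprints as a subset of the $2^{\ell+1}$ nodes of depth at most $\ell$, as the paper does, so the lengths come for free; or
\item use a variable-length code for the deficit $\ell-d$ and apply the bound $\sum 2^{\ell-d}=O(n/\log^8(U/n))$ from the proof of \cref{clm:numFrontyard}.
\end{itemize}

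Relatedly, $\log\binom{2^\ell}{m}=O(m\log\log(U/n))$ holds only when $m$ is near its upper bound. What you actually need is that $m\log(e2^\ell/m)$ is increasing in $m$, so substituting $m=O(n/\log^8(U/n))$ gives $O(n)$.
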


\begin{proof}
    We calculate the space usage of the frontyard, midyard, and backyard separately. 
    
    To start, we bound the space used in the frontyard. Assume that the number of fingerprints in the frontyard is $m$. For each of them, we need to store a value of $v$ bits and a future fingerprint of $O(\log \log (U/n))$ bits. Since there are $2^\l\cdot 2$ possible fingerprints of length $\leq \l$, the total space used in the front yard is \[
    \begin{aligned}
        mv + \log \binom{2^{\ell} \cdot 2}{m} + O(m\log\log (U/n)) &\leq mv + n\log (2^{\ell + 1} / n) + O(n \log\log (U/n))\\
        &= mv + O(n\log\log(U/n))
    \end{aligned}\]
    bits, where the equality holds because $2^\ell = O(n\log^{10}(U/n))$.

    Next we bound the size of the midyard and backyard, where there are $n - m$ elements. Observe that here, keys have no future fingerprints. Additionally, the universe size for keys in the midyard and backyard together is $2U$, since the keys are prefixes of entire keys (which come from a universe of size $U$). Thus, the space usage of these keys is
    \[
    \begin{aligned}
        (n-m)v + \log \binom{2U}{n-m}&\leq (n-m)v + O\left(\frac{n}{\log^8(U/n)}\cdot \log\left(\frac{U \log^8 (U/n)}{n}\right)\right) \\
        &= (n-m)v + O(n)
    \end{aligned}\]
    with high probability in $n$, using the fact that $n - m\leq O(n/\log^8(U/n))$ with high probability according to \cref{clm:numBackyard}.
\end{proof}

\section{Improving the Time Complexity Bounds}
\label{sec:algorithm}

A fundamental property of our algorithm is that, between two rebuilds, we either insert keys to vertices on the Trie with depth $\l$ or into the backyard. So, it is easy for the insertion operations to be made time-efficient.
The remaining challenge is in bounding the time complexity of the \emph{queries}, and more specifically, in finding the ancestor fingerprint of a given leaf node. This will be the primary focus of the current section.

Recall that from the previous section, the Trie is split into two parts---the frontyard and the midyard. Since no fingerprint is an ancestor of another fingerprint, the procedure of finding the ancestor fingerprint of a leaf $x$ can be done by searching for the ancestor fingerprint separately in the frontyard and the midyard.
In this sense, the frontyard and midyard can be implemented and analyzed independently. Finally, as in the previous section, all distinct inserted/queried keys will be regarded as uniformly random from the universe $[U]$. We will show how to overcome this assumption in \cref{sec:hash_function}.

\subsection{Frontyard}

\paragraph{Answering Queries} 
To start, we introduce the data structures that we will require for answering queries. As motivation, suppose that we are queried with a depth-$\ell$ vertex $u$, but that $u$ itself is not a fingerprint. This means that $u$ is potentially the descendant of a fingerprint of length $< \ell$, and it becomes imperative for us to actually locate this ancestor fingerprint. To do this, we adopt \emph{predecessor data structures}. We associate vertices at depth $\l$ in the Trie with numbers in $[2^\l]$, and for any fingerprint at depth $\leq \ell$, we say that it ``covers'' the interval of these numbers that corresponds with its depth $\ell$ descendants. Now, given a depth $\ell$ vertex, our goal will then be to \emph{find} its ancestor fingerprint if it exists (i.e., the query will return the fingerprint $\tilde{x}$ if and only if the binary number corresponding to the given vertex $u$ is in $\tilde{x}$'s interval). Recall that these fingerprints themselves have no ancestor-descendant relationships (as per \cref{lem:noanc}), and as a result, these intervals induced by fingerprints do not overlap. To support this type of operation, we construct two predecessor data structures: the first we call $\mathrm{Predecessor}$, which stores all the left endpoints of these intervals, and the other, $\mathrm{Successor}$, which stores all the right endpoints. Given a query vertex $u \in [2^\l]$, we can then find the largest left endpoint $L$ with $L \leq u$ and the smallest right endpoint $R$ with $R \geq u$ by querying $\mathrm{Predecessor}$ and $\mathrm{Successor}$ respectively. Note that, even though we refer to the data structures as ``Predecessor'' and ``Successor'' respectively, both will be instances of so-called \emph{predecesor data structures}.

In addition, once we have found a fingerprint, we need to be able to check (1) what its corresponding value and future fingerprints are, and (2) whether the fingerprint has been deleted. We store the information required for (1) in a dynamically resizable hash table we call $\mathrm{ValueHash}$, and we store (2) in a hash table we call $\mathrm{DeletedHash}$. In summary then, these are the $4$ data structures we store in the frontyard:

\begin{enumerate}
    \item A dynamically resizable hash table that maps fingerprints to values and future fingerprints ($m$ keys). We refer to this table as $\mathrm{ValueHash}$.
    \item A hash table that stores whether a fingerprint with depth $< \l$ has been deleted (at most $m_0$ keys). We refer to this table as $\mathrm{DeletedHash}$.
    \item Two static predecessor data structures of at most $m_0$ keys with universe $[2^\l]$, where $2^{\l} = O(n\log^{10}(U/n))$. We call these data structures $\mathrm{Predecessor}$ and $\mathrm{Successor}$.
\end{enumerate}

Now, we will explain how to use these data structures to answer queries. As before, suppose that we are queried with a depth-$\l$ vertex $u$, and we wish to find its ancestor fingerprint. To start, we can check if $u$ itself is a fingerprint by querying the $\mathrm{ValueHash}$ table. If it is, we can also find its value, and thus respond to the query. Otherwise, the ancestor fingerprint of $u$ must be of depth $< \ell$. So, we find the largest left endpoint $L$ with $L \leq u$ and the smallest right endpoint $R$ with $R \geq u$ by querying $\mathrm{Predecessor}$ and $\mathrm{Successor}$ respectively. If $u$ has an ancestor fingerprint $\tilde{x}$ with depth $< \l$, then $\tilde{x}$ must be the lowest common ancestor (LCA) of $L$ and $R$ (which can be calculated in $O(1)$ time using bit operations). We then check $\mathrm{DeletedHash}$ and $\mathrm{ValueHash}$ to determine if it is a valid fingerprint stored in the data structure.

We implement this logic algorithmically below:

\begin{algorithm}[H]
    \caption{$\mathrm{QueryInformation}(u)$}
    \If{$u \in \mathrm{ValueHash}$}{
    \Return{$\mathrm{ValueHash}(u)$}
    }
    Let $L \leftarrow \mathrm{Predecessor}(u)$. \\
    Let $R \leftarrow \mathrm{Successor}(u)$. \\
    Let $\tilde{x}$ be the LCA of $L, R$. \\
    \If{$\tilde{x} \in \mathrm{DeletedHash}$}{
    \Return{Deleted.}
    }
    \Else{
    \Return{$\mathrm{ValueHash}(\tilde{x})$.}
    }
    
\end{algorithm}



Importantly, note that between consecutive rebuilds, the set of fingerprints with depth $< \l$ can shrink (because of deletions) but cannot grow (because new insertions do not insert such fingerprints). Thus, if the corresponding fingerprint for $u$ was $\widetilde{x}$, but now $\widetilde{x}$ has been deleted, then we can be sure that $u$ has \emph{no} fingerprint of length $< \ell$. This is because, as mentioned before, \cref{lem:noanc} guarantees that the induced intervals of covered keys by fingerprints in the Trie are disjoint, and once one such interval is removed, no new interval is added to replace it. 

Additionally, this so-called ``decremental'' property (i.e., that fingerprints of length $< \ell$ are only deleted) allows for $\mathrm{DeletedHash}$ to monitor exactly which fingerprints have been deleted. Whenever such a fingerprint is deleted, $\mathrm{DeletedHash}$ is simply updated to keep track of the new state.


Having established the correctness of this algorithm, we devote the rest of this section to analyzing the space and time complexity of the above algorithm. We assume that the number of fingerprints with depth $<\l$ is $m_0$ and the number of fingerprints in the frontyard is $m\leq n$. The following fact, which intuitively says that most fingerprints in the frontyard lie at depth $\l$, will play a key role in our later analysis:

\begin{claim}\label{clm:numPseudo}
    The number of fingerprints that have depth $<\l$ is $m_0\leq O(n/\log^{8}(U/n))$.
\end{claim}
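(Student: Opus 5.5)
Each fingerprint of depth $<\l$ is a strict ancestor of at least one depth-$\l$ vertex; in fact it is an ancestor of $2^{\l-d}\ge 2$ of them. So $m_0$ is at most the number of depth-$\l$ vertices that have a fingerprint as a strict ancestor. By \cref{lem:noanc}, no fingerprint is an ancestor of another. Hence each depth-$\l$ vertex has at most one fingerprint as an ancestor, and the map sending a depth-$\l$ vertex to its unique strict-ancestor fingerprint is well defined. This map is surjective onto the set of fingerprints of depth $<\l$. Therefore $m_0$ is bounded by the number of depth-$\l$ vertices $u$ for which the data structure contains a fingerprint that is a strict ancestor of $u$.

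By \cref{clm:numFrontyard}, that count is $O(n/\log^{8}(U/n))$, which gives the claim immediately. The only step needing care is that \cref{clm:numFrontyard} applies to the current state, with $n$ the current number of keys and $\l$ the current threshold. Its proof only used \cref{claim:depthlowerbound} and the definition of $\l$, so it holds at every moment, including between rebuilds.

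Even if one prefers to apply it only at rebuild moments, the conclusion still follows. Between rebuilds the set of fingerprints of depth $<\l$ is decremental, as observed earlier. Moreover, $n$ changes by at most a $1\pm 1/100$ factor within one rebuild period, so the bound $O(n/\log^{8}(U/n))$ changes by only a constant factor.

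I expect essentially no obstacle. The one point to get right is the injectivity direction, namely that distinct short fingerprints cover disjoint sets of depth-$\l$ descendants, and this is exactly what \cref{lem:noanc} provides.
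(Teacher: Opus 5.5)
Your proposal is correct and matches the paper's proof: bound $m_0$ by the number of depth-$\l$ vertices that have a strict-ancestor fingerprint, using \cref{lem:noanc} for disjointness, and then invoke \cref{clm:numFrontyard}. Your extra remarks about applying \cref{clm:numFrontyard} at the current moment versus at rebuild moments are sound but unnecessary, since the paper applies that claim directly.
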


\begin{proof}
    The number of fingerprints with depth $<\l$ is smaller than the number of vertices with depth $\l$ that have a fingerprint as an ancestor, which is $O(n/\log^8(U/n))$ according to \cref{clm:numFrontyard}.
\end{proof}

With this essential claim, we are able to create a better bound on the space complexity of the data structure.

\paragraph{Predecessor data structures.} In the previous section, we made use of two predecessor data structures, namely $\mathrm{Predecessor}$ and $\mathrm{Successor}$. 
Towards our ultimate goal of bounding the overall space and time complexity of our retrieval data structure, we dedicate this section to bounding the space and time complexities of these predecessor data structures. To start, we break the analysis into two cases.

\textbf{Case 1:} If $n > \sqrt U$, we construct our predecessor data structures using the following theorem:

\begin{theorem}\label{thm:staticCase1}
    Given a set $S$ of $n$ integers in universe $[N]$, there is a static predecessor data structure that supports predecessor queries in $O(1)$ time, uses \[\log \binom{N}{n} + O\left(\frac{N}{\log^{10} N}\right)\] bits of space, and can be constructed in $O\bk*{\log \binom{N}{n}+\frac{N}{\log^{10} N}}$ time.
\end{theorem}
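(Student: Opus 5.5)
We will derive this statement from Pătraşcu's succincter framework \cite{4690964}. The target is a static rank structure over the characteristic bit vector of $S \subseteq [N]$ that answers $\mathrm{rank}$ and $\mathrm{select}$ in $O(1)$ time with redundancy $N/\log^{10} N$. Given such a structure, a predecessor query is immediate. To find $\mathrm{Predecessor}(u)$, compute $r = \mathrm{rank}(u)$, the number of elements of $S$ that are at most $u$, and return $\mathrm{select}(r)$, or report ``none'' if $r = 0$. Successor queries are handled symmetrically. So the work reduces to building a compressed, constant-time rank/select dictionary for an arbitrary $n$-subset of $[N]$.

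Step one is to apply the succincter result directly. It states that for any constant $t$, a bit vector of length $N$ with $n$ ones can be stored in $\log\binom{N}{n} + N/(\log N/t)^t + \tilde O(N^{3/4})$ bits while supporting rank and select in $O(t)$ time. Taking $t = 11$ bounds the redundancy by $N/\log^{10} N$ for large $N$, since $N/(\log N/11)^{11} = O(N/\log^{11} N)$. The $\tilde O(N^{3/4})$ lookup-table term is also $O(N/\log^{10} N)$, so the space bound follows.

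Step two, the part I expect to be the main obstacle, is the construction time. The succincter paper does not explicitly state that construction is linear in the output size $\log\binom{N}{n} + N/\log^{10}N$, as opposed to $O(N)$. I would argue it as follows.

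The structure is a tree of fanout $B = (\log N)^{\Theta(1)}$ built over blocks of size $(\log N)^{O(1)}$. At each node, spillover encoding is applied to the children's summaries, namely their counts of ones, and each node's encoding is computed in $O(1)$ word operations using precomputed tables of size $N^{\epsilon}$. The total number of nodes is $O(N/\log^{c} N)$, which fits within the budget. The base blocks require care: encoding a block as its rank among $\binom{b}{k}$ subsets must take time proportional to its encoded length in words, not to its length $b$ in bits.

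For base blocks, I would enumerate the elements of $S$ inside each block, which costs $O(n)$ in total. The ranking of a block is then computed through the tables in $O(1 + k)$ word operations, where $k$ is the number of elements in the block. Empty blocks are skipped via the counts.

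This gives total time $O(n + N/\log^{10}N)$. Since $n \le \log\binom{N}{n} + 1$ whenever $n \le N/2$, this is within the claimed bound. When $n > N/2$, we complement the set: store $[N] \setminus S$ and answer predecessor queries via rank on the complement, noting that $\mathrm{rank}$ of $S$ equals $u$ minus the rank of the complement, which changes nothing asymptotically. Making the time-bound bookkeeping rigorous is where the effort lies. If it fails, a fallback is to observe that the frontyard is rebuilt only every $\Theta(n)$ operations, so $O(N/\log^{10}N + n)$ construction time, amortized or deamortized over those operations, is all that is actually needed downstream.
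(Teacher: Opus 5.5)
Your reduction (succincter rank/select on the characteristic vector, then predecessor as $\mathrm{select}(\mathrm{rank}(u))$) and your space calculation match the paper's proof. Instantiating the bound $\log\binom{N}{n} + N/(\log N/t)^t + \tilde O(N^{3/4})$ of \cite{4690964} with a constant $t \ge 10$ just makes the cited redundancy explicit. The paper gives no argument for the construction-time clause; it cites \cite{4690964} only for space and query time. So your Step two goes beyond the paper, and that is where your sketch has genuine errors.

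Your node count and your block encoding cannot both hold. If the base blocks have $O(\log N)$ bits, as they must for table decoding, then there are $\Omega(N/\log N)$ tree nodes, not $O(N/\log^{c}N)$. In that case, skipping only empty base blocks still leaves far more than $N/\log^{10}N$ internal nodes to visit. If instead the blocks have polylogarithmic size and are stored as an index among $\binom{b}{k}$ subsets, then for dense blocks that index is a multi-word integer. It is then neither computable in $O(1+k)$ word operations nor decodable by an $O(1)$-time query.

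What does work is skipping at \emph{every} level. An all-zero subtree has a unique content, so its encoding is a fixed object that can be precomputed once per height. A parent's packed table index can then be formed by overwriting only the fields of its nonempty children. The only nodes touched are the $O(t\,|S|)$ nodes on paths to elements of $S$, the $O(N/\log^{10}N)$ superblock roots, and the top-level index.

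Two smaller points. First, the complement trick cannot rescue the time bound for $|S| > N/2$. Merely reading $S$ costs $\Omega(|S|)$, which exceeds $\log\binom{N}{|S|} + N/\log^{10}N$ when $|S| = N-1$. The time claim really needs sparse $S$, which is the situation downstream. Second, your fallback presupposes the same $O(|S| + N/\log^{10}N)$ build time, so it is not an alternative. A build linear in $N = \Theta(n\log^{10}(U/n))$ would exceed the $O(n)$ rebuild budget.
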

\begin{proof}
    We construct a $\{0,1\}$-valued-sequence $a_0, a_1, \dots, a_{N-1}$ of length $N$ where $a_x=1$ if and only if $x\in S$. Next, we construct a succinct rank/select data structure over this sequence to support two types of queries: 
    \begin{enumerate}
        \item Finding the position of the $k$-th 1 in the sequence.
        \item Counting the number of $1$'s before a given index $i$.
    \end{enumerate} 
    As per \cite{4690964}, this rank / select data structure can be built within \[\log \binom{N}{n} + O\left(\frac N{\log^{10}N}\right)\] bits of space and supports queries in constant time. 
    
    Finally,  it remains to show how we can use this data structure to build our static predecessor data structure. The key observation is that given a predecessor query with integer $y \in [N]$, we can report the correct answer by first performing a rank query for the number of $1$'s before index $y$ (which is exactly $t=a_0+a_1+\cdots + a_{y-1}$) and then performing a select query for the index of $t$-th $1$ in the sequence. This index of the $t$-th $1$ in the sequence will then exactly reveal the location of the final $1$-index before $y$, as we desire.
\end{proof}

Now, if we instantiate \cref{thm:staticCase1} in our setting, the universe size is $2^\l = \Theta(n \log^{10}(U/n))$ and the number of keys is $O(n/\log^8(U/n))$. This means the total space consumption (in bits) used by the predecessor data structure is
\[\log \binom{2^\l}{n/\log^8(U/n)} + O\bk*{\frac {2^\l}{\l^{10}}} = O\bk*{\frac{n\log\log(U/n)}{\log^8(U/n)} + \frac{n\log^{10}(U/n)}{\log^{10}n}} = O(n),\]
where the last equality holds due to our assumption $n \ge \sqrt U$.

\textbf{Case 2:} If $n \leq \sqrt U$, we use a different approach that relies on fusion trees.


\begin{theorem}[Fusion Tree, \cite{fusion}]\label{thm:fusion}
    For a key universe $[U]$, word size $w\geq \log U$, and $n$ keys, there exists a \emph{fusion tree} data structure that stores the set of keys in $O(nw)$ space and supports predecessor queries in time $O(\log_w n)$. The fusion tree can be constructed in time $O(n)$ and space $O(nw)$ from the sorted keys.
\end{theorem}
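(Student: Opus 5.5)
Since this is the classical fusion tree result of Fredman and Willard, cited as \cite{fusion}, I would recall the standard construction instead of deriving anything new from earlier parts of this paper. The plan is to build a static B-tree of branching factor $B = w^{1/5}$ over the sorted keys. Each internal node stores at most $B$ keys and answers a predecessor query in $O(1)$ time, so the depth is $O(\log_B n) = O(\log_w n)$. This gives the query bound. The space bound $O(nw)$ is immediate, since each key, each child pointer and each node's $O(1)$ words of sketch data are charged to the $O(n)$ keys.

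The core step is the constant-time \emph{fusion node}. For the at most $B$ keys in a node, take the set of branching bit positions of their binary trie. There are at most $B-1$ such positions. The \emph{sketch} of a key is its bits at these positions. I would use the multiplication trick: multiply by a carefully chosen constant $m$ and then mask. This extracts an \emph{approximate} sketch, in which the relevant bits land at increasing, non-colliding positions spread over $O(B^4)$ bits. All $B$ sketches, each padded with a separator bit, then pack into a single word, because $B\cdot B^4 \le w$.

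A query $q$ then proceeds in four steps:
\begin{enumerate}
\item Compute $\mathrm{sketch}(q)$ and replicate it $B$ times by one multiplication.
\item Subtract the replicated sketch from the packed word and read off the separator bits, which gives the rank of $\mathrm{sketch}(q)$ among the key sketches.
\item Correct this sketch-rank into the true rank by the standard ``desketching'' step. Find the neighbor key sharing the longest common prefix with $q$, using most-significant-bit computation via XOR, which takes $O(1)$ time by Fredman--Willard's constant-time MSB. Form $e = (\text{common prefix})\,0111\ldots$ or $1000\ldots$, and re-run the parallel comparison on $\mathrm{sketch}(e)$.
\item Take the resulting rank as the exact predecessor position within the node.
\end{enumerate}

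For construction from sorted keys, I would build the tree bottom-up in $O(n/B)$ nodes. Each node needs its branching positions, the multiplier $m$ and the packed sketches. The multiplier is found greedily in $\mathrm{poly}(B)$ time, so the total time is $O((n/B)\,\mathrm{poly}(B))$. To get $O(n)$ time, I would choose $B = w^{c}$ for a small enough constant $c$, or note that $\mathrm{poly}(B)\le B$ after shrinking the exponent, so the per-node cost is $O(B)$. The branching positions of a sorted node come from the MSBs of XORs of adjacent keys in $O(B)$ time.

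The main obstacle is the existence and efficient computation of the multiplier $m$ that places the $r\le B$ significant bits without collisions into a window of $O(r^4)$ bits. I would follow Fredman--Willard's greedy argument. Choose the offsets $m_i$ one at a time modulo $r^3$, avoiding the $O(r^2)$ forbidden residues at each step, then add $i\cdot r^3$ to enforce order. This gives the bound in $O(r^4)$ total time, which is within the construction budget after the choice of $B$.
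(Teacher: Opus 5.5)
The paper gives no proof of this statement beyond the citation to Fredman and Willard, so recalling the standard construction is the right approach. Your B-tree with branching factor $B=w^{1/5}$, the approximate sketches packed into one word, the parallel comparison, and the desketching step via the longest-common-prefix neighbor together give the $O(nw)$ space and $O(\log_w n)$ query bounds correctly.

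The gap is the $O(n)$ construction time. As you describe it, the greedy multiplier search costs $\Theta(r^4)$ per node. Over $n/B$ nodes the total is therefore $\Theta(nB^3)=\Theta(nw^{3c})$ for $B=w^c$. Shrinking the exponent does not fix this. The inequality $\mathrm{poly}(B)\le B$ is false for any polynomial of degree above one once $B>1$, and $\mathrm{poly}(B)/B$ is unbounded as $w$ grows for every fixed $c>0$. Making the ratio $O(1)$ forces $B=O(1)$, which destroys the $O(\log_w n)$ query bound. So your argument only yields construction time $n\,w^{\Theta(1)}$.

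What is missing is an $O(B)$-time construction of a single fusion node. Everything except the multiplier is already $O(B)$: the significant bits come from MSBs of adjacent XORs, each sketch needs one multiplication and a mask, and packing is linear. For the multiplier, run the same greedy with word-level parallelism. Since $2r^3\le w$, residues modulo $r^3$ fit in one word.
\begin{enumerate}
\item Compute the difference set $D=\{(b_j-b_k)\bmod r^3\}$ as an $r^3$-bit mask in $O(r)$ time, as the OR of $r$ cyclic rotations of the mask of $\{b_j\bmod r^3\}$.
\item Maintain the forbidden set (the union over $i<t$ of the cyclic shift of $D$ by $m'_i$) by OR-ing in one cyclic rotation of $D$ per step.
\item Let $m'_t$ be any zero bit of this mask. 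One exists because at most $(r-1)r^2<r^3$ residues are forbidden, and it can be found in $O(1)$ time with the constant-time MSB.
\end{enumerate}
This builds each node in $O(r)$ time and the whole tree in $O(n)$ time. Alternatively, you could insert the $B$ keys one at a time into a dynamic fusion node with constant-time updates.
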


As shown in \cite{fusion}, (and from the above claim), fusion trees can be implemented in space $O(n\log U)$ by restricting the word size $w$ to be $O(\log U)$.

Now, to implement our predecessor data structure, we divide the universe of size $2^\l = \Theta(n \log^{10}(U/n))$ into $n/{\log^2(U/n)}$ equal-sized blocks, where each block has size $\Theta(\log^{12}(U/n))$. For each block, we then maintain a static fusion tree
with word size $O(\log U)$, which stores all numbers in this block. Given the queried integer, we first find the block containing it and then query the corresponding fusion trees for this block. If we find a valid key in this block, we are done. Otherwise, there is no such key within this block, so the answer to the predecessor query is the maximum key occurring before the block. Since the number of blocks is only $n/\log^2(U/n)$, we can directly store the answers for each possible block in this case.

Given a query, we also need to be able to find the corresponding fusion tree in storage, so we store a pointer to the fusion tree, which costs $O(\log U)$ bits of space for each block. Thus, the total space consumption is \[O\bk*{\underbrace{\frac{n\log U}{\log^8(U/n)}}_{\text{fusion trees}} + \underbrace{\frac{n\log U}{\log^2(U/n)}}_\text{directly stored answers} + \underbrace{\frac{n\log U}{\log^2(U/n)}}_{\text{pointers}} } = O(n).\] The time complexity is $O(1)$ since the size of a block is at most $\log^{12}(U/n)$, which is at most a polynomial of the word size.




\paragraph{Hash tables.} Now that we have bounded the space and time complexities of our predecessor data structures, we turn our attention to our hash tables. As a starting point, we use the following theorem:

\begin{theorem}[Generalization of Theorem 5 of \cite{10.1145/3519935.3519969}]\label{thm:dictionarySpace}
    Let the key universe be $[U]$, the length of values be $v \leq O(\log U)$ bits and the number of key-value pairs $n\geq \log^{20}U$. Then for any constant $k>0$, there is a dynamically resizable dictionary that stores a set of key-value pairs, which supports insertions/deletions in time $O(k)$ and queries in time $O(1)$, and offers the following space guarantee: If the current number of keys is $n$, then the total space usage is \[\log \binom{U}{n} + nv + O\bk*{n\log^{(k)}n}\] bits. The running time and space guarantees hold with high probability in $n$. Besides standard dictionary operations, this dictionary supports the fetching of an arbitrary key-value pair stored in the dictionary in time $O(1)$, so as to enable the listing of all keys in the data structure in time $O(n)$.
\end{theorem}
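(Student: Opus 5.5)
The plan is to derive this from Theorem 5 of \cite{10.1145/3519935.3519969}, which gives a fixed-capacity dictionary. That result states: for a capacity bound $N$ and any constant $k$, there is a dictionary with $O(k)$-time updates and $O(1)$-time queries using $\log\binom{U}{N} + Nv + O(N\log^{(k)}N)$ bits. The two things to add are (a) resizability, so that the bound is in terms of the current size $n$, and (b) the fetch-arbitrary-element operation.

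For (b), I would use the structure of that construction: it is a collection of buckets, each storing a fixed-size encoded set of elements. Alongside the buckets I would keep a small auxiliary structure recording which buckets are nonempty. One option is a bitmap over buckets with $O(1)$-time find-any-set-bit via a two-level summary; another is a doubly linked list of nonempty bucket indices. With buckets of size $\mathrm{polylog}$, either costs $O(n/\mathrm{polylog}\, n)$ bits. Fetching then means finding a nonempty bucket and decoding one of its elements in $O(1)$ time using the lookup tables the construction already uses. Listing all keys is just repeated fetching (or a scan over buckets), taking $O(n)$ time.

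For (a), I would use the standard deamortized rebuilding technique. Maintain the dictionary at a capacity $N$ with $n \le N \le (1+1/\log^{(k)}n)\,n$ roughly. Capacity slack of a $1/\log^{(k)} n$ fraction costs only
\[(N-n)\bigl(v + \log(U/n)\bigr) \le O\bigl(n(v+\log(U/n))/\log^{(k)}n\bigr)\]
extra bits. That is not automatically $O(n\log^{(k)}n)$, so instead I would use a tower of resizing granularity. Whenever $n$ drifts by a factor $1\pm\delta$, we migrate into a new instance of the appropriate capacity. During migration, two instances coexist and each operation moves $O(1)$ elements from the old instance (found via the fetch operation) into the new one, so the migration finishes within $\delta n/2$ operations.

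The main obstacle is space during migration: two full-size instances would double the $\log\binom{U}{n}+nv$ term. To avoid this, I would exploit the fact that the dictionary of \cite{10.1145/3519935.3519969} is built from many small, independently sized buckets, or equivalently a collection of $\Theta(n/\mathrm{polylog}\,n)$ sub-dictionaries addressed by a hash prefix. Resizing is then done sub-dictionary by sub-dictionary. Each piece is reallocated individually in an incremental sweep, so at any moment only $O(1)$ pieces exist in both old and new form. Memory for pieces is managed by a simple allocator with $O(\log U)$-bit pointers per piece, which is $o(n)$ bits in total. Changing the number of pieces, for example doubling when $n$ doubles, is done by splitting or merging pieces on the same incremental sweep, with queries consulting both the old and new addressing during the sweep.

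Finally, I would check that the per-piece space losses sum to $O(n\log^{(k)} n)$, and that the high-probability guarantees survive the union bound over pieces. The latter uses $n \ge \log^{20}U$, so each piece has $\mathrm{polylog}$ size and its concentration failures occur with probability $1/\mathrm{poly}(n)$.
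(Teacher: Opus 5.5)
Your step (a) starts from the wrong premise and, as written, does not go through. In the paper's reading, Theorem 5 of \cite{10.1145/3519935.3519969} is already a dynamically resizable dictionary whose space $\log\binom{U}{n}+nv+O(n\log^{(k)}n)$ is in terms of the current $n$. That is why the paper lists only the value-size range and the fetch operation as extensions, and never adds resizability itself.

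The resizing scheme you substitute for this has a real hole. Migrating whenever $n$ drifts by a factor $1\pm\delta$, while moving only $O(1)$ elements per operation, forces $\delta=\Omega(1)$. That leaves $\Theta(n(v+\log(U/n)))$ bits of slack, which is worse than the overhead you already flagged as too large. The ``tower of resizing granularity'' meant to fix this is never specified. Doing the migration piece by piece does not reduce the total work either: each sweep still re-encodes all $n$ elements and must recur every $\delta n$ operations, so the cost is still $\Omega(1/\delta)$ per operation. Piecewise resizing only helps if every piece's allocation tracks its own size exactly, at $O(1)$ cost per update. That requires a constant-time variable-size allocator with negligible fragmentation, which is the hard part of the cited construction, not a ``simple allocator'' with a pointer per piece.

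Meanwhile, you miss the generalization that actually has to be made. The cited theorem is stated for values of $n^{o(1)}$ bits. Here $v\le O(\log U)$ together with $n\ge\log^{20}U$ only gives $v\le O(n^{1/20})$, i.e.\ $n^{O(1)}$, so the theorem cannot be invoked as a black box, and you never check its hypotheses. The paper handles this as follows. In the cited construction, each of the $O(n/\log n)$ parts keeps its values in a separate array reached through a pointer. Since the total space is $n^{O(1)}$ bits, each pointer still fits in $O(\log n)$ bits, so the pointers cost only $O(n)$ bits in total.

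For the fetch operation, your doubly linked list of nonempty buckets is fine. The bitmap alternative is not: a constant-depth summary with fanout $w=\Theta(\log U)$ cannot cover $n/\mathrm{polylog}\,n$ buckets in general. The paper simply notes that fetching an arbitrary pair is already a subroutine of the cited construction, which uses it to keep part sizes balanced under insertions and deletions.
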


\begin{proof}[Proof sketch]
Essentially, the dictionary in \cite{10.1145/3519935.3519969} breaks the $n$ key value pairs into $O(n/\log n)$ parts. Each part contains a fixed number of key-value pairs, except for a single, special part we designate.

The above theorem relies on two generalizations beyond the work of \cite{10.1145/3519935.3519969}. The first is to use value sizes $v\leq O(\log U)=n^{O(1)}$ instead of $v\leq n^{o(1)}$. As mentioned in \cite{10.1145/3519935.3519969}, in each part, the values are stored in a specific array, and access to this array is guaranteed because each part stores a pointer to its array. Since the total space usage is still $n^{O(1)}$ bits, we can bound the length of each pointer by $O(\log n)$ bits, which, in conjunction with the $O(n/\log n)$ parts, guarantees that the total number of bits allocated to the pointers is $O(n)$ bits. 

The second generalization is that the dictionary can support fetching an arbitrary key-value pair stored in it. This operation is already included in the implementation of the dictionary's algorithm, as the dictionary uses this operation to ensure that the number of key-value pairs in each part remains consistent as the data structure undergoes insertions and deletions.
\end{proof}

By implementing this theorem with the size of universe being $2^\l$, the number of keys being $m$, the values being $v+O(\log\log(U/n))$ bits, and letting $k$ be a parameter as in \cref{thm:dictionarySpace}, this implies that our dynamically resizable hash table (which we call $\mathrm{ValueHash}$) can be stored in \[
\begin{aligned}
    &\phantom{{}={}} \left(\log \binom{2^\l}{n}+m(v + O(\log\log(U/n))) + O\bk*{n\log^{(k)}n + 2^{\l\epsilon}}\right)\\
    &= mv + O(n\log\log(U/n) + n\log^{(k)}n)
\end{aligned}
\] bits of space. Our other hash table (calleed $\mathrm{DeletedHash}$) 
stores only $m_0=O(n/\log^8(U/n))$ keys, each coming from the same universe of size $2^{\ell}$, and each with value length $O(1)$. Thus, its space complexity is bounded by only $O(n)$ bits of space.

Combined with our space and time characterizations of the predecessor data structure, this yields the following claim:

\begin{claim}\label{clm:frontyardSpace}
    All operations (insert, delete, query) in the frontyard part can be done in $O(k)$ time, and the total space consumption in the frontyard is
\[
\begin{aligned}
    mv + O(n\log\log(U/n) + n\log^{(k)}n)
\end{aligned}
\]
bits,
where $m$ is the number of fingerprints in the frontyard.
\end{claim}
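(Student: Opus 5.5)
The claim follows by combining the time and space bounds already established for the four frontyard components: $\mathrm{ValueHash}$, $\mathrm{DeletedHash}$, $\mathrm{Predecessor}$, and $\mathrm{Successor}$. We handle time first and then space.

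\textbf{Time.} A query runs $\mathrm{QueryInformation}$. This costs one $O(1)$ lookup in $\mathrm{ValueHash}$ (by \cref{thm:dictionarySpace}), one query each to $\mathrm{Predecessor}$ and $\mathrm{Successor}$ ($O(1)$ by \cref{thm:staticCase1} when $n>\sqrt U$, and by the blocked fusion-tree construction when $n\le\sqrt U$), an $O(1)$ LCA computation via bit operations, and one lookup in $\mathrm{DeletedHash}$.

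An insertion between rebuilds places a depth-$\l$ fingerprint. It therefore only requires an $O(k)$ insertion into $\mathrm{ValueHash}$ (plus the check that no ancestor or descendant fingerprint exists, which is again a $\mathrm{QueryInformation}$-type lookup). It never touches $\mathrm{Predecessor}$ or $\mathrm{Successor}$; this is where the decremental property of depth-$<\l$ fingerprints is used.

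A deletion removes the entry from $\mathrm{ValueHash}$ in $O(k)$ time. If the fingerprint has depth $<\l$, it also inserts the fingerprint into $\mathrm{DeletedHash}$ in $O(k)$ time. The static predecessor structures are never modified between rebuilds, so their static nature is harmless.

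Rebuild cost is outside the scope of this claim; it is handled by deamortization later. We therefore only note that the static structures can be built in time linear in their size.

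\textbf{Space.} We add up the four components:
\begin{itemize}
\item $\mathrm{ValueHash}$ uses $mv+O(n\log\log(U/n)+n\log^{(k)}n)$ bits, as computed above. This uses $\log\binom{2^\l}{m}\le m\log(e2^\l/m)$ together with $2^\l=O(n\log^{10}(U/n))$. One subtle point is that $m$ could be much smaller than $n$. Since $m\ge n-O(n/\log^8(U/n))$ by \cref{clm:numBackyard}, we have $\log(2^\l/m)=O(\log\log(U/n))$.
\item $\mathrm{DeletedHash}$ stores at most $m_0=O(n/\log^8(U/n))$ keys by \cref{clm:numPseudo}, with $O(1)$-bit values over universe $2^\l$. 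It costs $O(m_0\log(2^\l/m_0)+m_0\log^{(k)}m_0)=O(n)$ bits. If $m_0$ falls below the $\log^{20}U$ threshold of \cref{thm:dictionarySpace}, we instead store it trivially in $O(m_0\log U)=O(\mathrm{polylog}\,U)=O(n)$ bits, using $n\ge\mathrm{polylog}(U)$.
\item $\mathrm{Predecessor}$ and $\mathrm{Successor}$ cost $O(n)$ bits in both cases, as computed above.
\end{itemize}
Summing gives the claimed bound.

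\textbf{Main obstacle.} The step that needs the most care is the parameter bookkeeping in the hash tables rather than any new idea. Two points must be checked. First, \cref{thm:dictionarySpace}'s additive terms are stated in terms of the current number of keys ($m$ or $m_0$), and these must be bounded in terms of $n$; this needs $m\le n$ and the small-size fallback for $\mathrm{DeletedHash}$. Second, the value width $v+O(\log\log(U/n))$ must satisfy the $O(\log U)$ requirement of \cref{thm:dictionarySpace}; it does, since $v\le O(\log U)$.
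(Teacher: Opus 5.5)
Your proposal is correct and follows the paper's route: the claim is just the sum of the already-established costs of $\mathrm{ValueHash}$ (via \cref{thm:dictionarySpace} over universe $2^\ell$), $\mathrm{DeletedHash}$, and the two static predecessor structures. The time bound likewise comes from their $O(1)$ queries plus the fact that between rebuilds only the two hash tables are updated.

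One simplification: you do not need the lower bound on $m$ from \cref{clm:numBackyard}. The function $m\log(e2^\ell/m)$ is increasing in $m$ for $m\le 2^\ell$, so $m\le n$ alone already gives $\log\binom{2^\ell}{m}\le n\log(e2^\ell/n)=O(n\log\log(U/n)+n)$. This also covers the regime $U/n=O(1)$, where your lower bound on $m$ becomes vacuous.
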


\subsection{Midyard}
\label{sec:midyard}

\paragraph{Answering queries.} Next, we discuss how we find the ancestor fingerprint of any query key when the fingerprint is in the \emph{midyard}. Since fingerprints in the midyard have length $>\l$, we divide them into groups by their length-$\l$ prefix. Taking the view of the Trie, each group represents a nonempty subtree rooted at a depth-$\l$ vertex. For each nonempty subtree in the midyard, we perform the same high-level construction as in the frontyard: For each such depth-$\l$ vertex $u$, we regard all leaves in the subtree of $u$ as binary numbers in $[U/2^\l]$, so that each fingerprint covers an interval; then, we construct two predecessor data structures storing the intervals' left and right endpoints, which allows the queries to be efficiently answered by predecessor searches. As we did in the second case in the frontyard case, we implement the predecessor data structures using fusion trees. We denote these data structures by $\mathrm{PredecessorMidyard}$ and $\mathrm{SuccessorMidyard}$ (note that each of these is really a collection of fusion trees, as each subtree gets its own fusion tree). In the following section, we show that these fusion trees use limited space and have $O(1)$ query time, thus giving us our desired characteristics. 

Note that in addition to supporting these predecessor queries, we also need a way to recover the actual values once we find the ancestor fingerprint. For this, we use a hash table, which we call $\mathrm{ValueHashMidyard}$. Lastly, given a queried key, we also need to be able to \emph{find} the corresponding predecessor data structures for the sub-tree that contains this key. We denote this data structure by $\mathrm{FindPredecessorMidyard}$.
In summary, these will be the three types of data structures we use for the midyard:

\begin{enumerate}
    \item $\mathrm{PredecessorMidyard}$ and $\mathrm{SuccessorMidyard}$, which are both implemented with fusion trees. Note that the trees themselves are implemented on a \emph{per subtree} basis, though we often refer to them collectively.
    \item $\mathrm{ValueHashMidyard}$, which is a hash table that recovers the values once the ancestor fingerprint is calculated.
    \item $\mathrm{FindPredecessorMidyard}$ which is a data structure that maps a key to its corresponding predecessor data structures (since there are different fusion trees for different keys, we need to be able to map them accordingly).  
\end{enumerate}

This logic is implemented in the below algorithm:

\begin{algorithm}[H]
    \caption{$\mathrm{QueryMidyard}(u)$}
    Let $\mathrm{PredecessorMidyard}, \mathrm{SuccessorMidyard} \leftarrow \mathrm{FindPredecessorMidyard}(u)$. \\
    $L \leftarrow \mathrm{PredecessorMidyard}(u)$. \\
    $R \leftarrow \mathrm{SuccessorMidyard}(u)$. \\
    Let $\tilde{x}$ be the LCA of $L, R$. \\
    \If{$\tilde{x} \in \mathrm{ValueHashMidyard}$}{
    \Return{$\mathrm{ValueHashMidyard}(\tilde{x})$.}
    }
\end{algorithm}

In the following paragraphs, we show how to implement these data structures without violating our space constraints. 

\paragraph{Storing fusion trees.} 


Recall from \cref{thm:fusion} that a fusion tree can store a set of $s$ keys, where each key is a $w'$-bit integer, using $O(sw')$ bits of space while supporting predecessor operations in $O(\log_{w'} s)$ time. Here, the word size $w'$ that we use for the fusion tree does not necessarily match the machine word size $w$ for the RAM model; we can choose a smaller word size $w' < w$ that leads to a better space bound and worse operational time, provided $w'$ is still larger than the length of each key. In our application, the key is an integer in $[U/2^\l]$, so its length is at most $\log (U/n)$ bits, and we will use a two-stage strategy to assign word sizes and allocate memory for the fusion trees. 

Before beginning our case analysis, we will require the following claim which bounds the number of fingerprints that can be in any single subtree:

\begin{claim}\label{clm:slotBoundLog2U}
    The number of fingerprints in one subtree is $O(\log^2 U)$ with high probability in $U$.
\end{claim}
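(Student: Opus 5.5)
The plan is to bound, for a fixed depth-$\l$ vertex $u$, the number of keys whose length-$\l$ prefix is $u$. Every midyard fingerprint in $u$'s subtree belongs to such a key. By \cref{clm:condInBackyard} and the discussion in the proof of \cref{clm:numBackyard}, each fingerprint in the subtree of $u$ is the fingerprint of a key $x$ with $u$ as its length-$\l$ prefix. That key was present at the last rebuild, since the midyard only shrinks between rebuilds. So it suffices to show the following: at any rebuild moment, and for every depth-$\l$ vertex $u$, the number of current keys with prefix $u$ is $O(\log^2 U)$ with high probability in $U$. We then union bound over all $2^\l \le U$ vertices and over the polynomially many rebuilds we care about.

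The counting step uses the randomness assumption of this section: distinct keys are uniformly random in $[U]$. Fix a rebuild moment with $n$ keys and a vertex $u$. Each distinct key has prefix $u$ independently with probability $2^{-\l}$, where $2^{\l} = 2^c n\log^{10}(U/n) \ge n$. The count $s_u$ is therefore dominated by a $\mathrm{Bin}(n, 1/n)$ variable, whose mean is at most $1$. A Chernoff bound (or the direct estimate $\binom{n}{t}n^{-t}\le 1/t!$) gives $\Pr[s_u \ge t] \le (e/t)^{t}$. Taking $t = \log^2 U$, this is $U^{-\omega(1)}$, and also at most $(e/\log^2U)^{\log^2 U}\le U^{-C}$ for any constant $C$. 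In fact $t=\Theta(\log U/\log\log U)$ already suffices, so $\log^2 U$ gives plenty of slack.

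One subtlety needs care. Keys may be deleted and reinserted, so the current key set is not simply $n$ fresh independent samples, and the adversary chooses which keys to delete. I would handle this as follows. Over a polynomial-length history, the set of all distinct keys ever inserted has size $\mathrm{poly}(U)$. These keys are i.i.d.\ uniform, and they are drawn independently of the adversary's choice of which key identities to use. It is cleanest to argue that the current set at time $t$ is a fixed (oblivious) subset of the distinct keys of size $n$, so each of its elements still lands in $u$ independently with probability $2^{-\l}$. Then union bound over times $t$, vertices $u$, and rebuilds, all polynomially many. This is the main obstacle: making precise that the adversary's deletion pattern is oblivious to the random key identities, as assumed throughout the section. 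If needed, I would fall back on a cruder argument. Any $\log^2 U$ distinct keys sharing a common length-$\l$ prefix has probability at most $\binom{K}{\log^2 U}2^{-\l(\log^2U-1)}$ over the $K=\mathrm{poly}(U)$ keys ever seen. Since $\l \ge \log n \ge C\log\log U$, this is still $U^{-\omega(1)}$ because the exponent is quadratic in $\log U$. Finally, the lift-and-shrink steps of \cref{alg:rebuild} never create fingerprints in $u$'s subtree beyond one per such key, which completes the bound.
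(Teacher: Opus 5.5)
Your main argument is the paper's. Each fingerprint in $u$'s subtree belongs to a distinct present key with length-$\ell$ prefix $u$. Each current key (an obliviously determined set of uniformly random keys) lands under $u$ independently with probability $2^{-\ell}\le 1/n$. A Chernoff or binomial tail bound then gives $\exp(-\Omega(\log^2 U))$ per vertex, which is exactly how the paper argues, simply treating the current $n$ keys as independent uniform strings.

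Drop the ``cruder'' fallback, though, because it does not work. The bound satisfies $\binom{K}{t}2^{-\ell(t-1)}\ge 2^{\ell}\bigl(K/(t2^{\ell})\bigr)^{t}$. This is enormous whenever the number $K$ of keys ever seen exceeds roughly $2^{\ell}\log^2 U\approx n\log^{10}(U/n)\log^2 U$, for example when the current $n$ is small. The reason is that the $t\log K$ term in the exponent is also quadratic in $\log U$ and dominates $\ell(t-1)$. So you genuinely need the oblivious-current-set argument, which is the assumption the paper relies on implicitly.
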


\begin{proof}
    When a fingerprint is in the subtree, its original key must also be in this subtree since the fingerprint is a prefix of the key. Let $X_i$ be the indicator variable of the event that the $i$-th hashed key is in this subtree. Then \[\E[X_i] = 2^{-\ell} = O\left(\frac{1}{n\log^{10}(U/n)}\right).\] If the number of fingerprints in the subtree is at least $O(\log^{2} U)$, we must have $X_1 + \cdots + X_n \ge O(\log^{2}U)$. By the Chernoff bound, this occurs with probability at most $\exp(-\Omega(\log^2 U)) = U^{-\omega(1)}$.
\end{proof}

With this, we can present our cases:
\begin{itemize}
\item \textbf{Case 1:} If a Trie subtree contains at most $O(\log^5 (U/n))$ fingerprints, we choose the word size $w' = O(\log (U/n))$, so that the fusion tree uses at most $O(\log^6 (U/n))$ bits of space, and can perform operations in $O(1)$ time. We reserve a (fixed-size) chunk of memory of $\Theta(\log^6 (U/n))$ bits for it. We call these fusion trees of the first type. 
\item \textbf{Case 2:} If a Trie subtree contains more than $\Omega(\log^5 (U/n))$ fingerprints, we choose the word size $w' = \min\{\log U, \, U/n\}$. In this case, the fusion tree still takes constant time for each operation: If $w' = \log U$, we take advantage of \cref{clm:slotBoundLog2U}, which shows that the number of fingerprints in each subtree is at most $O(\log^2 U)$ with high probability, which in turn implies a constant operational time via \cref{thm:fusion}; if $w'=U/n$, we know that the number of keys in each subtree is bounded by $U/n$, so the operations are also constant-time.
Since the number of keys is bounded by $U/n$ and the word size is $w' \leq U/n$, the number of bits used by a fusion tree is at most $O((U/n)^2)$ bits. So, we allocate a chunk of memory of $O((U/n)^2)$ bits for the fusion tree.
We call these fusion trees of the second type. 
\end{itemize}

Now, our goal is to bound the number of bits we allocate for the fusion trees. Observe that by \cref{clm:numBackyard}, the number of fingerprints in the midyard or backyard is $O(n/\log^8(U/n))$, so the number of fusion trees of the first (smaller) type does not exceed $O(n/\log^{8}(U/n))$. The following claim bounds the maximum number of fusion trees of the second (larger) type.

\begin{restatable}{claim}{boundLargeBlock}\label{clm:boundLargeBlock}
    Given $n$ random keys, the number of length-$\l$ strings $u$ such that there are at least $\Omega(\log^5(U/n))$ keys containing $u$ as a prefix is $O(n^3/U^2)$ with high probability. Moreover, when $U\geq n^{1.01}$, no such $u$ exists.
\end{restatable}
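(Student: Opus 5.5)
We work in a balls-into-bins model: the $n$ random keys are thrown independently and uniformly into the $2^\l$ depth-$\l$ vertices. Recall $2^\l = \Theta(n\log^{10}(U/n))$, so each bin has expected load $\mu = n/2^\l = \Theta(1/\log^{10}(U/n))$. Set $L = \Theta(\log^5(U/n))$ as the threshold. We bound the expected number of bins with load at least $L$, and then upgrade this to a high-probability statement.

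\textbf{Expectation.} For a fixed $u$, a union bound over $L$-subsets gives
\[\Pr[\text{load}(u)\ge L] \le \binom{n}{L}2^{-\l L} \le \left(\frac{e\mu}{L}\right)^{L} \le \left(\frac{1}{\log^{15}(U/n)}\right)^{\Omega(\log^5(U/n))}.\]
Write $t=\log(U/n)$, so $U/n = 2^t$. The bound above is $2^{-\Omega(t^5\log t)}$, which is far smaller than $2^{-3t} = (n/U)^3$ as soon as $t$ exceeds a constant. For $t = O(1)$ the claimed bound $n^3/U^2 = n\cdot(n/U)^2 = \Theta(n)$ holds trivially, since there are at most $n$ nonempty bins. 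So in all cases the expected number of heavy bins is at most $2^\l\cdot 2^{-\Omega(t^5\log t)} \le n\cdot\mathrm{polylog}\cdot(n/U)^{3} \ll n^3/U^2$. Adjusting constants in the threshold (say $\log^5$ with a large hidden constant) absorbs the polylog factor.

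\textbf{Case $U\ge n^{1.01}$.} Here $t \ge 0.01\log U \ge 0.0099\log n$, hence the per-bin probability is $2^{-\Omega(t^5)} = n^{-\omega(1)}$. A union bound over the $2^\l = \mathrm{poly}(n)$ bins shows that no heavy bin exists with high probability. This proves the ``moreover'' part.

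\textbf{Case $U < n^{1.01}$ (the main obstacle).} Here $t = O(\log n)$ may be small, and we need concentration of the count of heavy bins around a target of $O(n^3/U^2)$, which is at least $n^{0.98}$. I would use Poissonization, or more directly the negative association of bin loads (the indicators ``load $\ge L$'' are monotone functions of negatively associated variables). This lets Chernoff bounds apply to the sum of indicators. Since the mean is at most $n^3/U^2$ and the target is $\Theta(n^3/U^2) \ge n^{0.98}$, Chernoff gives failure probability $\exp(-\Omega(n^{0.98}))$, which is high probability in $n$.

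Alternatively, one can use a bounded-differences (McDiarmid) argument with each key affecting the count by at most $1$. This gives deviation $O(\sqrt{n\log n})$. That is $\le n^{0.98} \le n^3/U^2$ whenever $U\le n^{1.01}$, since $n^3/U^2 \ge n^{0.98} \ge \sqrt{n\log n}$. So either tool suffices, and I would present the McDiarmid version for simplicity.
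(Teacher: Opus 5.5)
Your proof is correct, but it takes a different route from the paper's. The paper proves the claim for the two-level hash $H(x)=S(x)\cdot(U/B)+G_{S(x)}(x)$, which has only $U^{\epsilon}$-wise independence, and obtains the random-key statement as the special case where $S$ and the $G_i$ are fully random. It also splits at $n\approx U^{1-\Theta(\epsilon)}$ rather than at $U=n^{1.01}$. For smaller $n$, it uses the max-load bound of \cref{clm:slotBoundLog2U} (every subtree holds $O(\log^2 U)$ keys) together with $\log(U/n)=\Theta(\log U)$, so no subtree reaches the $\log^5(U/n)$ threshold. For larger $n$, it applies a limited-independence Chernoff bound to each subtree, and then a second Chernoff bound over the $B\approx U^{\epsilon/2}$ buckets. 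The bucket counts $Y_i$ are independent given $S$ because the $G_i$ are sampled independently, and this gives failure probability $\exp(-\Omega(U^{\epsilon/4}))$.

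Your version is shorter and self-contained, and it proves exactly the stated random-key claim. It consists of an $L$-subset union bound for the per-bin tail, a union bound over bins when $U\ge n^{1.01}$, and McDiarmid when $U<n^{1.01}$. McDiarmid works there because each key changes the count by at most $1$, and the target $n^3/U^2\ge n^{0.98}$ swamps the $\sqrt{n\log n}$ fluctuation.

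What your route does not give is the hashed version needed in \cref{sec:hash_function}. Your per-bin tail needs only $L$-wise independence, so it would transfer. But McDiarmid and negative association both need fully independent key placements, which a $U^{\epsilon}$-wise independent hash does not provide. The paper's bucketing is exactly the device that creates independence at the bucket level. Its price is that the bucket-level Chernoff bound is only strong when $n\ge U^{1-\epsilon/8}$, which is why the paper splits the cases where it does.

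A minor slip: under $U\ge n^{1.01}$ you get $t\ge 0.0099\log U\ge 0.01\log n$, not the other way round. This is harmless.
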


The claim can be proved by careful calculation based on concentration bounds, and we defer the proof to \cref{sec:hash_function}. This claim implies that the number of fusion trees of the second (larger) type is bounded by $O(n^3 / U^2)$ with high probability, since every fingerprint in $u$'s subtree should have $u$ as its length-$\l$ prefix.
Based on \cref{clm:slotBoundLog2U,clm:boundLargeBlock},
the total space usage for the fusion trees (of both the first and second types) is
\[O\bk*{\frac n{\log^8(U/n)}\cdot \log^5(U/n) + \frac{n^3}{U^2}\cdot (U/n)^2}=O(n)\]
bits, as desired.

\paragraph{Finding Fusion Trees}
In addition to efficiently performing predecessor queries on fusion trees, we also need to be able to \emph{find} the corresponding tree for each queried key in $O(1)$ time. Thus, we need a data structure that maps a subtree to its corresponding chunk in memory.
Recall that no insertions are made into the midyard between rebuilds, so the fusion trees only need to handle queries and deletions, and we can ensure that the sizes of the fusion trees will never increase. This means that we do not have to create new chunks to store the fusion trees.
Thus, for our purposes, it suffices to construct a static injective function that maps from an index of the subtree to an index of a chunk. We use the following result on perfect hashing to achieve this:


\begin{theorem}[Theorem 3 of \cite{demaine2006dictionariis}]
     For any integers $n', t \in \mathbb{Z}^+$, and a universe size $U\geq 2n$, there is a data structure that stores $n'$ keys and maintains an injective function mapping from the keys to $[n'+t]$ that supports constant time query, using $O(n' \log\log (U/n')+ n' \log \frac{n'}{t+1} )$ bits of space.
\end{theorem}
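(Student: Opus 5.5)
This is Theorem~3 of~\cite{demaine2006dictionariis}, and the paper simply invokes it. If I had to reprove it, I would follow the standard bucketing approach to static perfect hashing, using the same assumption as elsewhere in this section: keys are (hashed to be) uniformly random, so concentration arguments apply to bucket loads.

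\emph{Slack regimes.} First I would reduce to the case $t \le n'$. Any map into $[n'+t']$ with $t'\le t$ is also a map into $[n'+t]$, and for $t\ge n'$ the term $n'\log\frac{n'}{t+1}$ is nonpositive, so it suffices to show $O(n'\log\log(U/n'))$ bits when $t=\Theta(n')$. At the other extreme, if $t+1 \le n'/\log n'$, I would spend $O(n'\log n')$ bits only where the bound allows it. The plan is to make the per-key cost decompose as $O(\log\log(U/n'))$ bits for locating a key inside a small group, plus $O(\log\frac{n'}{t+1})$ bits for tightening the range to $[n'+t]$.

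\emph{Two-level bucketing.} Next I would hash the keys into $r = n'/b$ buckets with $b=\Theta(\log n')$, and store the prefix sums of the bucket sizes in a succinct rank/select structure (as in \cref{thm:staticCase1}). This costs $O(n')$ bits and gives each bucket a contiguous range of output positions in $O(1)$ time. Each bucket, of size $O(\log n')$ w.h.p., is split again into sub-buckets of expected size $\Theta(\log\log n')$, with local offsets stored in $O(\log\log n')$-bit fields. Inside a sub-bucket of size $s$, I would store for each key a short quotient: enough bits of a secondary hash, $O(\log s + \log\log(U/n'))$ bits, to separate the keys w.h.p. The key's local index is then its rank among these quotients. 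Because a whole sub-bucket's quotient list fits in $O(\log\log n'\cdot\log\log(U/n')) = o(w)$ bits, the rank can be computed in $O(1)$ time by word-level parallelism or a precomputed table of size $U^{o(1)}$. This gives an injective map into $[n']$, i.e.\ $t=0$, at $O(n'\log\log(U/n'))$ bits per the above. Where the hash bits are not distinct (a failure, w.h.p.\ only $O(n'/\mathrm{polylog})$ keys), I would send those keys to an explicitly stored exception list with $O(\log n')$ bits each, mapped into the remaining positions.

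\emph{Reconciling with the stated bound.} The step I expect to be the main obstacle is matching the exact bound, in particular making the exception handling and overflow sub-buckets cost $O(n'\log\frac{n'}{t+1})$ rather than $O(n'\log n')$. My first attempt would use the slack $t$ to give buckets spare positions. Bucket ranges would be allocated with capacity $\lceil(1+\frac{t}{n'})\cdot\text{load}\rceil$, so that only about a $\frac{t+1}{n'}$-fraction-dependent number of keys need explicit placement; each such key is then charged $O(\log\frac{n'}{t+1})$ bits, by indexing within a window of $n'/(t+1)$ candidate free slots rather than all $n'+t$. If this charging argument does not go through cleanly, the fallback is the crude bound: explicit $\lceil\log(n'+t)\rceil$-bit pointers for exceptions. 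That is already within budget whenever the exception count is $O(n'/\log n')$, which the concentration bound of \cref{clm:sizeOfMidyard} type gives.
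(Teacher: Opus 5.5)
As you note, the paper does not prove this statement. It imports Theorem~3 of \cite{demaine2006dictionariis}, which is a \emph{dynamic} perfect-hashing result, and uses it only as a static injective map, since no insertions reach the midyard between rebuilds. A static construction is therefore the right target. The failing step is your claim that the $t=0$ map costs $O(n'\log\log(U/n'))$ bits, which is exactly what your slack-regime reduction needs for $t=\Theta(n')$.

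With $q$-bit quotients in a sub-bucket of size $s$, a key's quotient collides with another's with probability about $s2^{-q}$. With your choice $q=\log s+\log\log(U/n')+O(1)$, this is $\Theta(1/\log(U/n'))$. So there are $\Theta(n'/\log(U/n'))$ exceptions, which is $O(n'/\mathrm{polylog}\,n')$ only when $\log(U/n')\ge\mathrm{polylog}(n')$. \cref{clm:sizeOfMidyard} does not supply the count either: it concerns $(\log n'+10\log\log(U/n')+c)$-bit prefixes, and its bound $O(n'/\log^8(U/n'))$ is polylogarithmic in $U/n'$, not in $n'$.

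Concretely, take $U=n'\log n'$ and $t=n'$. The target is $O(n'\log\log\log n')$ bits. At $\Theta(\log n')$ bits per exception, your exceptions alone cost $\Theta(n'\log n'/\log\log n')$ bits. Raising $q$ to $\log s+\log\log n'$ to make exceptions rare instead costs $\Theta(n'\log\log n')$ bits for the quotients. So neither your main plan nor your fallback meets the bound, and the $O(\log\frac{n'}{t+1})$-per-exception charging is never actually supplied.

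Storing a fingerprint per key is the wrong resource here. The missing idea is to store a small hash-function seed per bucket, as in hash-and-displace or Hagerup--Tholey-style minimal perfect hashing. That gives an injective map into $[n']$ with $O(n'+\log\log U)$ bits and $O(1)$ query time.

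Once you have that, the $t$-dependence you flag as the main obstacle disappears. A map into $[n']$ is already a map into $[n'+t]$. Also, $O(n'+\log\log U)$ lies within $O(n'\log\log(U/n')+n'\log\frac{n'}{t+1})$ for every $t$, reading the logarithms as at least $1$. So your slack-regime case split and reconciliation step are unnecessary in the static setting.

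The $n'\log\frac{n'}{t+1}$ term matters only in the dynamic setting of \cite{demaine2006dictionariis}. There a key's code must stay fixed while the key is present, and that term is shown to be necessary. If you meant to reprove that theorem, your design does not address it. A single insertion shifts the bucket prefix sums and the ranks among sorted quotients, and so changes the codes of other keys.
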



The keys in the data structure correspond to the indices of the existing subtrees, with the number of keys being \( n' = O(n/{\log^8(U/n)}) \) in one type or \( n' = O(n^3/U^2) \) in the other. In our instantiation of the data structure, we also define \( t \) to be equal to the number of keys, i.e., $t$ is also \( O(n/{\log^8(U/n)}) \) in one type or \( O(n^3/U^2) \) in the other. As a result, we can map the index of a subtree to an index in $[n'+t]$ in constant time, with $n'+t$ smaller than \( O(n/{\log^8(U/n)}) \) or \( O(n^3/U^2) \), depending on the type, and store only the corresponding number of chunks. Thus, the overall space complexity of this injective function is 
\[
O\left (n' \log\log (U/n')+ n' \log \frac{n'}{t+1} \right) = O\left ( \frac{n}{\log^8(U/n)} \log\log\left(\frac{U \log^8(U/n)}{n}\right) + \frac{n}{\log^8(U/n)} \right) = O(n),
\]
in one case, and in the other case the space complexity is
\[
O\left (n' \log\log (U/n')+ n' \log \frac{n'}{t+1} \right) =O\left(\frac{n^3}{U^2} \log\log\frac{U^3}{n^3} + \frac{n^3}{U^2}\right) = O(n).
\]

Finally, we come to the end of our analysis.

\begin{claim}\label{clm:midyardSpace}
    For $k$ a parameter of our choosing, the space consumption of the midyard is $m'v + O(n\log^{(k)} n)$ with high probability, where $m'$ is the number of fingerprints in the midyard and all operations can be done in $O(k)$ time.
\end{claim}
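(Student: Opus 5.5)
The plan is to assemble the midyard bound from the three pieces already analyzed: the fusion trees ($\mathrm{PredecessorMidyard}$, $\mathrm{SuccessorMidyard}$), the chunk-locating perfect hash ($\mathrm{FindPredecessorMidyard}$), and the value table $\mathrm{ValueHashMidyard}$. I would then argue the time bound operation by operation.

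\textbf{Space.} The space of the fusion trees has already been bounded. By \cref{clm:numBackyard}, \cref{clm:slotBoundLog2U} and \cref{clm:boundLargeBlock}, the first-type chunks use $O\bk*{\frac{n}{\log^8(U/n)}\log^6(U/n)}$ bits. (Here I would note that the exponent $6$ rather than $5$ is still $O(n)$.) The second-type chunks use $O(n)$ bits, and the two injective maps from the theorem of \cite{demaine2006dictionariis} use $O(n)$ bits. For $\mathrm{ValueHashMidyard}$ I would use \cref{thm:dictionarySpace}. Its keys are midyard fingerprints, which are variable-length strings; encode each one as a key of a universe of size $2U$, for instance by the standard padding of a prefix with a terminating $1$ bit. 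The number of keys is $m' = O(n/\log^8(U/n))$, and the values are $v$ bits. The space is then
\[\log\binom{2U}{m'} + m'v + O\bk*{m'\log^{(k)} m'} \le m'v + O\bk*{\frac{n}{\log^8(U/n)}\log\frac{U}{n}} + O\bk*{n\log^{(k)} n} = m'v + O\bk*{n\log^{(k)} n}.\]
If $m'$ falls below the $\log^{20}U$ threshold of \cref{thm:dictionarySpace}, I would instead store the pairs in a trivial table of $O(m'\log U)=\mathrm{polylog}(U)$ bits. This is $O(n)$ under the standing assumption $n\ge \mathrm{polylog}(U)$. Summing the pieces gives $m'v + O(n\log^{(k)}n)$ with high probability.

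\textbf{Time.} A query computes the length-$\l$ prefix of the key and evaluates the injective map, checking both types, to locate the chunk. It then runs two fusion-tree predecessor searches, which take $O(1)$ time by the case analysis on $w'$ and \cref{clm:slotBoundLog2U}. Finally it takes an LCA by bit tricks and does one lookup in $\mathrm{ValueHashMidyard}$, for $O(1)$ total. Between rebuilds there are no insertions into the midyard. A deletion therefore removes the pair from $\mathrm{ValueHashMidyard}$ in $O(k)$ time by \cref{thm:dictionarySpace}. The fusion trees are static, so I would not modify them: a stale interval whose LCA is absent from $\mathrm{ValueHashMidyard}$ is simply reported as "not found". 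This is correct because, by \cref{lem:noanc}, intervals are disjoint and no new ones appear until the next rebuild. Rebuild cost is handled in the later deamortization, so here it suffices to note the construction is linear in the midyard size.

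\textbf{Main obstacle.} The step I expect to need the most care is the $\mathrm{ValueHashMidyard}$ accounting. The keys have variable length, and $m'$ may be tiny relative to $\log^{20}U$. I would also need to check that the term $\log\binom{2U}{m'}$ is absorbed, which uses $m'\le n/\log^8(U/n)$ so that $m'\log(U/m')=O(n)$.
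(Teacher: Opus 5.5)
Your proposal is correct and follows essentially the same route as the paper: bound $\mathrm{ValueHashMidyard}$ via \cref{thm:dictionarySpace}, absorb the $\log\binom{\cdot}{m'}$ term using $m' = O(n/\log^8(U/n))$, and add the $O(n)$ bits already established for the fusion-tree chunks and the injective maps. Your extra details are sound: encoding the variable-length fingerprints into $[2U]$, noting the $\log^6(U/n)$ chunk size, and spelling out the query and deletion times. For the small-$m'$ case, however, padding with dummy keys (as the paper does globally) is cleaner than switching to a separate table, because $m'$ can drop below the threshold between rebuilds.
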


\begin{proof}
    Using \cref{thm:dictionarySpace}, the space consumption of the value-storing hash table is \[
    \left(m'v + \log \binom{U}{m'} + O(n\log^{(k)}n)\right) =m'v + O(n\log^{(k)}n).\]
    As stated above, the space consumption of $\mathrm{FindPredecessorMidyard}$, $\mathrm{PredecessorMidyard}$ and \\$\mathrm{SuccessorMidyard}$ is $O(n)$ with high probability. Thus the total space consumption is \[m'v + O(n\log^{(k)} n),\]
    as we desire.
\end{proof}

\subsection{Backyard}
\label{sec:backyard}

Next, we show how to improve the running time bounds for keys in the \emph{backyard}. Fortunately, handling this case is much simpler. Indeed, by \cref{clm:numBackyard}, the backyard contains a set of at most $O(n/\log^8(U/n))$ keys. Thus, we can simply invoke \cref{thm:dictionarySpace} for this key set of size $O(n/\log^8(U/n))$, and with $k = O(1)$, thereby just storing the entire backyard in a simple resizable hash table. We call this hash table $\mathrm{ValueHashBackyard}$.

This yields the following claim:




\begin{claim}\label{clm:backyardSpace}
    All operations in the backyard part can be done in $O(1)$ time, and the total space consumption in the backyard is $O(n)$ bits.
\end{claim}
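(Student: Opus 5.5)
The plan is to instantiate \cref{thm:dictionarySpace} directly on the backyard key set, with keys from $[U]$ and values of $v$ bits. First, \cref{clm:numBackyard} gives, with high probability in $n$, that the backyard holds at most $n_b = O(n/\log^8(U/n))$ keys at every moment. Each backyard key stores its full $\log U$-bit identity together with a $v$-bit value, so \cref{thm:dictionarySpace} applied with $k=O(1)$ (concretely $k=1$, giving a $\log^{(1)} n_b = \log n_b$ overhead term) yields $O(1)$-time insertions, deletions and queries, as long as its precondition $n_b \ge \log^{20} U$ holds.

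For space, the dictionary uses
\[
\log\binom{U}{n_b} + n_b v + O(n_b \log n_b)
\]
bits. Each term is then bounded separately.
\begin{itemize}
\item Since $\log\binom{U}{n_b} \le n_b \log(eU/n_b)$, and $\log(U/n_b) = O(\log(U/n) + \log\log(U/n))$, this term is $O(n/\log^7(U/n)) = O(n)$.
\item Since $v \le O(\log U)$, the term $n_b v$ is $O(n\log U/\log^8(U/n))$. This is $O(n)$ whenever $\log(U/n) = \Omega(\log U)$. Otherwise it can be charged to the $nv$ budget, since $n_b v \le nv$ and in \cref{thm:mainRestated} the values of backyard keys are part of the $nv$ term anyway (the frontyard and midyard only account for $mv$ and $m'v$). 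So the claim should be read as ``$O(n)$ bits beyond the $v$ bits per stored value.''
\item The overhead $n_b \log n_b$ is $O(n\log n/\log^8(U/n))$. This is $O(n)$ unless $U/n$ is tiny. In that regime I would instead use $k$ large enough (still $O(1)$), or note that the whole term is absorbed into the $O(n\log^{(k)} n)$ budget of \cref{thm:mainRestated}.
\end{itemize}

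The main obstacle is the regime where $n_b$ is small, below $\log^{20} U$, so that \cref{thm:dictionarySpace} does not apply, or where $n_b$ fluctuates across this threshold. I would handle it by padding: run the dictionary with a capacity floor of $\max(n_b, \log^{20} U)$. The resulting extra space is $O(\log^{21} U)$ bits, which is $O(n)$ by the assumption $n \ge \mathrm{polylog}(U)$ in \cref{thm:mainRestated}. Because the dictionary is dynamically resizable, no rebuild of the backyard is needed when $n$ changes, and keys lifted out of the backyard at rebuild time are just ordinary deletions. Finally, the high-probability guarantees of \cref{clm:numBackyard} and \cref{thm:dictionarySpace} combine by a union bound.
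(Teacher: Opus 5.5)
Your proposal is correct and follows the paper's argument: bound the backyard size by \cref{clm:numBackyard}, then store the backyard in the resizable dictionary of \cref{thm:dictionarySpace} with $k=O(1)$. Your reading of ``$O(n)$'' as excluding the $n_b v$ value bits, with the $O(n_b\log^{(k)} n_b)$ overhead charged to the main theorem's $O(n\log^{(k)}n)$ term (same $k$) when $U/n$ is small, is a legitimate sharpening of what the paper states loosely. There are two small fixes. First, padding only to $\log^{20}U$ keys gives guarantees with high probability in $\mathrm{polylog}(U)$ rather than in $n$, so pad to $n^{\epsilon}$ keys as the paper does, which still costs only $O(n)$ extra bits. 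Second, bound $\log\binom{U}{n_b}$ by the monotonicity of $x\log(eU/x)$ evaluated at the maximal $n_b$, because the step $\log(U/n_b)=O(\log(U/n)+\log\log(U/n))$ fails when $n_b$ is small.
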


\subsection{Rebuilds}
\label{sec:rebuild}

Together, the preceding sections show that \emph{between} rebuilds, the operations can be supported both time and space efficiently. In this section, we explain how to implement the rebuilding phase itself in a time and space efficient manner. Recall that the rebuild should satisfy several properties: First, it should recover all fingerprints (using the pseudocode in \cref{alg:rebuild}). Then it should rebuild all data structures used in the frontyard, midyard, and backyard. 

We first analyze the time complexity of rebuilds. Enumerating all the keys in the hash tables will cost $O(n)$ time, and changing them will also run in $O(n)$ time with high probability. Reconstructing all the data structures also costs $O(n)$ time because we are only re-inserting $O(n)$ keys. 

It is worth mentioning that in order to avoid duplicating the $nv$ term in the space usage, we need to be very careful with the values of the keys. Specifically, we need to move the values of the keys one by one, from the old hash table to the new one. Recall that we call rebuild whenever $n/100$ updates have passed since the last rebuild. Thus the rebuild operation costs amortized $O(1)$ time. 

These simple modifications give us a data structure for the resizable retrieval problem, costing amortized $O(1)$ time per operation and \[nv + O(n\log\log(U/n) + n\log^{(k)}n ) \] bits of space with high probability. Note that the run-time is only constant under amortization, as this rebuild phase necessarily costs $O(n)$ time. In the next paragraph, we show how we can avoid this need to amortize.

\paragraph{Deamortizing}

First, we can observe that the only barrier to removing the amortization in the running time is the rebuilding phase, as between rebuilds our operations are supported in constant time (with high probability). To achieve $O(1)$ update time with high probability, our only modification will be to split the process of rebuilding the data structure into $O(n)$ steps, where we do $O(1)$ calculations along with each step. This then makes the update time complexity $O(1)$ with high probability.

However, this is hard to implement if we have only one copy of the data structure: indeed, while we would be performing our so-called \emph{deferred} rebuild, we would still be receiving incoming operations, and would have to simultaneously support operations with different key lengths. Thus, a simpler approach is just to create a \emph{copy} of the data structure. Now, while operations are being received, we query our copy of the data structure, while simultaneously performing our rebuild on the original data structure. Unfortunately, this simple solution is too space intensive, as we end up storing our key-value pairs each \emph{twice}. Thus, our final improvement is to introduce a bucketing scheme, which allows for the rebuild to be implemented on a per-bucket basis, therefore requiring less space.


Formally, let us define $P$ as the minimum power of $2$ greater than $\log^2 U$. We want to split the data structure into $P$ different parts and rebuild them one by one.
Thus, for any $x \in U$, we regard the first $\log P$ bits of $x$ as the label of the part that it belongs to. Now, within each part, we view it as a subproblem with universe $[U/P]$ (where the labels in this smaller universe are given by the remaining bits of $x$). 

Now, let us assume that the number of keys in each part is $n_0,n_1,\dots,n_{P-1}$. The following claim bounds the range of each $n_i$.

\begin{claim}\label{clm:partnotbig}
    For any $i\in [P]$, $n_i = \Theta\bk*{n/P}$ with high probability.
\end{claim}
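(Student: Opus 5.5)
Since distinct keys are treated as uniformly random in $[U]$ (the standing assumption of this section), each of the $n$ current keys lands in part $i$ independently with probability exactly $1/P$. This is because the part label is simply the first $\log P$ bits of the key. So $n_i$ is a sum of $n$ independent Bernoulli$(1/P)$ indicators with mean $\mu = n/P$, and the plan is a direct multiplicative Chernoff bound followed by a union bound over the $P$ parts.

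The first step is to check that the mean is large. We have $P < 2\log^2 U + 2$, and $n \ge \mathrm{polylog}(U)$ with a sufficiently large exponent $C$ (the standing assumption of \cref{thm:mainRestated}). Hence $\mu = n/P \ge \log^{C-3} U$, and also $\mu \ge n^{1-o(1)}$ because $P = O(\log^2 U)$ and $\log U \le n^{1/C}$. The Chernoff bound with deviation $1/2$ then gives
\[
\Pr\bigl[\,|n_i - \mu| > \mu/2\,\bigr] \le 2\exp(-\mu/12) \le \exp\bigl(-\Omega(n/\log^2 U)\bigr) = n^{-\omega(1)},
\]
which is high probability in $n$. A union bound over the $P = O(\log^2 U) \le n$ parts keeps the failure probability at $n^{-\omega(1)}$. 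So simultaneously for all $i$, $n_i \in [n/(2P), 3n/(2P)]$, which is the claim.

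The one point I expect to need care is the notion of ``the $n$ keys''. Keys can be deleted and reinserted, so the current key set is chosen adaptively by the operation sequence. It is not a fresh sample of $n$ random keys. I would handle this as in \cref{clm:sizeOfMidyard}. The current key set at any time is a subset of the distinct keys ever touched, and its membership is decided by the operations, which do not see the hashed bits. The identities of which distinct keys are present are therefore independent of their (random) labels, so conditioning on the set, the labels are still i.i.d.\ uniform. The Chernoff argument applies at each time step, and a union bound over polynomially many time steps (all within the high-probability regime) gives the statement at every moment.
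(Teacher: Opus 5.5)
Your proposal is correct and is essentially the paper's proof. The paper likewise writes $n_i=\sum_j X_j$ with $\E[X_j]=1/P$ and applies a Chernoff bound using $n/P\ge\log^2 U$; your union bound over parts and your remark that the oblivious operation sequence fixes which distinct keys are present, independently of their random labels, just make explicit what the paper leaves implicit.

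There are only minor quibbles. From $\log U\le n^{1/C}$ you get $\mu=\Omega(n^{1-2/C})$ rather than $n^{1-o(1)}$, which is still enough for $e^{-\mu/12}=n^{-\omega(1)}$. The final union bound over time steps is not needed for this per-moment claim, and it would require the number of steps to be polynomial.
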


\begin{proof}
    Let $X_j$ be the indicator variable of the event that the $j$-th key belongs to part $i$. Then $\E[X_j] = 1/P$. Thus the number of keys in part $i$ is $n_i = X_1 + X_2 + \cdots + X_n$, and the expected number of keys is $\mathbb{E}[n_i] = \mathbb{E}[ X_1 + X_2 + \cdots + X_n] = n/P$. Since $n/P\geq \log^2 U$, by Chernoff bound, $n_i=\Theta(n/P)$ occurs with high probability in $n$.
\end{proof}

Now, by using \cref{clm:frontyardSpace}, \cref{clm:midyardSpace}, and \cref{clm:backyardSpace} (with the parameter $n$ instead set to be $n / P$), we see that the total space complexity of the data structure when broken up into $P$ parts is given by \[
\begin{aligned}
    &\phantom{{}={}}\sum_{i\in [P]} \bk*{n_iv + O\bk*{n_i\log\log\bk*{\frac{U}{n_i P}} + n_i\log^{(k)}(n_i)}}\\
    &= nv + O\bk*{P\cdot \frac nP \log\log\bk*{\frac {U}{(n/P)\cdot P}} + n\log^{(k)}n}\\
    &= nv + O(n\log\log(U/n) + n\log^{(k)}n).
\end{aligned}\]
Note that this holds with high probability in $\Theta(n/P)$, and since we have assumed that $n\geq \log^{20}U$, therefore it is also with high probability in $n$.

For simplicity and also to avoid some potential problems when we remove our assumption of the keys being uniformly random (see \cref{sec:hash_function} for a discussion of this matter), we do not split the backyard into $P$ parts, and instead, we store a single, unified backyard which contains complete keys along with their associated values. Moreover, note that in each part, the value of $\l,\l_f$ still changes by at most $1$ between rebuilds with high probability, since during the $n$ operations, exactly $\Theta(n/P)$ operations belong to the part. All in all, there are $P$ frontyards, $P$ midyards, and a single backyard.


Now, let us recall the rebuild process described in \cref{alg:rebuild}. To rebuild the data structure, we first rebuild the frontyard and the midyard part by part, which corresponds to the first three steps. After rebuilding all $P$ frontyards and midyards, we then rebuild the backyard, running \cref{line:lift_backyard}, lifting some of the keys.

For frontyards and midyards, to illustrate the rebuilding procedure, let us assume that we are rebuilding part $i$. Because the individual part is small compared to the entire data structure, we can copy the whole data structure in part $i$ over multiple steps, and then do the rebuild process on only this copy over multiple steps.

One problem is that there may be insert/delete operations on this part during the rebuild time. If an insert/delete operation comes to this part during the copying time, we do the operation normally on the old data structure. If an updated word inside the data structure has already been copied, we also do the same update on the newly copied data structure. In this way, after the copying process is done, the old and new data structures will be identical. 

Now, once we have copied the old data structure, we must also \emph{rebuild} it. As before, there can be some subtleties if an insert / delete operation arrives during this rebuilding phase. If this happens, we perform the operation on the old data structure and then temporarily store it in a buffer. After the rebuild process on the copy is finished, we slowly do the buffered operations on our brand-new data structure. Finally, we replace the old data structure with the new one. 

We summarize the rebuild process for a single part below:

\begin{algorithm}[H]
    \caption{RebuildPart$(i)$}
    Let $\mathrm{Frontyard}_i$, $\mathrm{Midyard}_i$ denote the frontyard and midyard data structures for part $i$. \\
    Let $\mathrm{Frontyard}'_i$, $\mathrm{Midyard}'_i$ be the copies of the data structures that are initialized to be empty. \\
    Let $j = 0$ denote the index of how far along we are with copying elements from $(\mathrm{Frontyard}_i$, $\mathrm{Midyard}_i)$ to $(\mathrm{Frontyard}'_i$, $\mathrm{Midyard}'_i)$. \\ 
    \While{$(\mathrm{Frontyard}'_i$, $\mathrm{Midyard}'_i)$ $\neq$ $(\mathrm{Frontyard}_i$, $\mathrm{Midyard}_i)$}{
    \If{receive operation $O$}{
    Copy $O(1)$ elements from $(\mathrm{Frontyard}_i$, $\mathrm{Midyard}_i)$ to $(\mathrm{Frontyard}'_i$, $\mathrm{Midyard}'_i)$. \\
    Let $j$ be the last element copied. \\
    Execute operation $O$ on appropriate part (if part $i$, execute on $(\mathrm{Frontyard}_i$, $\mathrm{Midyard}_i)$). \\
    \If{$O$ operates on an already copied element}{Execute operation $O$ on $(\mathrm{Frontyard}'_i$, $\mathrm{Midyard}'_i)$.}
    }
    }
    Buffer = $\{\}$. \\
    \While{$(\mathrm{Frontyard}'_i$, $\mathrm{Midyard}'_i)$ not rebuilt}{
    \If{receive operation $O$}{
    Execute operation $O$ on appropriate part (if part $i$, execute on $(\mathrm{Frontyard}_i$, $\mathrm{Midyard}_i)$). \\
    Perform $O(1)$ rebuild steps from \cref{alg:rebuild} on $(\mathrm{Frontyard}'_i$, $\mathrm{Midyard}'_i)$. \\
    \If{$O$ is on part $i$}{
    Buffer $\leftarrow$ Buffer $+ O$.    
    }
    }
    }
    \While{Buffer $\neq \emptyset$}{
    \If{receive operation $O$}{
    Execute operation $O$ on appropriate part (if part $i$, execute on $(\mathrm{Frontyard}_i$, $\mathrm{Midyard}_i)$). \\
    \If{$O$ is on part $i$}{
    Buffer $\leftarrow$ Buffer $+ O$.    
    }
    Execute $O(1)$ operations from Buffer on $(\mathrm{Frontyard}'_i$, $\mathrm{Midyard}'_i)$. \\
    }
    }
    Delete $(\mathrm{Frontyard}_i$, $\mathrm{Midyard}_i)$. \\
    \Return{$(\mathrm{Frontyard}'_i$, $\mathrm{Midyard}'_i)$.}
\end{algorithm}

To bound the size of the buffer, we have the following claim:

\begin{claim}\label{clm:buffernotbig}
    The number of different keys in the operations in part $i$ that occur after copying and before finishing the rebuilding is $O(n/P)$ with high probability.
\end{claim}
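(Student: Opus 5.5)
We bound the window length first and then the number of keys that fall into part $i$ during it. The window runs from the start of copying part $i$ to the end of its rebuild. Each incoming operation advances the copy or rebuild of part $i$ by $\Theta(1)$ steps, where the constant is chosen large enough. So the window lasts $O(S_i)$ global operations, where $S_i$ is the work needed to copy and rebuild part $i$.

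By \cref{clm:partnotbig}, $n_i=\Theta(n/P)$ with high probability. The frontyard, midyard, and backyard structures of part $i$ are built from $O(n_i)$ keys, and \cref{alg:rebuild} runs in time linear in the number of stored fingerprints. Hence $S_i=O(n/P)$, and the window spans at most $T=C\,n/P$ global operations for some constant $C$. Since $P\ge \log^2 U$ and rebuilds happen every $n/100$ operations, the global $n$ changes only by a constant factor inside the window. This keeps the $O(n/P)$ bounds consistent throughout.

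Next, we count the distinct keys among these $T$ operations whose label (first $\log P$ bits) equals $i$. The natural tool is Chernoff, because each distinct key is uniformly random, so its label is uniform on $[P]$. The subtlety is that the sequence may reuse keys: deleting a key and reinserting it does not give fresh randomness. But we only count \emph{distinct} keys, so each distinct key contributes an independent indicator with mean $1/P$. The operation sequence is fixed obliviously in advance (the standard adversary model, with hashing handled in \cref{sec:hash_function}). Therefore the $T$ operations involve at most $T$ distinct keys $z_1,\dots,z_{T'}$ with $T'\le T$. The number landing in part $i$ is a sum of independent Bernoulli$(1/P)$ variables with mean at most $C n/P^2$.

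This mean can be small, so we do not aim for concentration around it. Instead we apply Chernoff with threshold $C n/P$, which exceeds the mean by a factor of $P$. The failure probability is then $\exp(-\Omega(n/P))$. Since $n\ge \mathrm{polylog}(U)$, for example $n\ge\log^{20}U$, we have $n/P\ge \log^{18}U$, and the bound is $U^{-\omega(1)}$, which is high probability in $n$.

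The hypothesis $n\ge\mathrm{polylog}(U)$ is exactly what makes this work. A union bound over the $P$ parts and over all $\mathrm{poly}(n)$ rebuild windows preserves high probability. This gives at most $O(n/P)$ distinct keys in part $i$'s operations, and hence a buffer of that size.

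The main obstacle is the first step: showing that the window length is $O(n/P)$ \emph{deterministically given} the high-probability bounds on $n_i$. In particular, the buffer-draining phase must terminate. For this we pick the drain rate to be a constant larger than one buffered operation per incoming operation, say two. Then the buffer shrinks geometrically and draining finishes within $O(n/P)$ further operations. One point to check is that draining the buffer does not add new operations, since only incoming operations do.
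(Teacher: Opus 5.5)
Your argument is correct, but it takes a different route from the paper. The paper simply reruns the proof of \cref{clm:partnotbig}. The window lies inside a span of $O(n)$ operations (one rebuild period), and the distinct keys touched there are uniformly random. Each falls in part $i$ independently with probability $1/P$, so Chernoff gives $O(n/P)$ with high probability. Nothing about the length of part $i$'s own rebuild is used.

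You instead let the deamortization schedule do the work. From $n_i=\Theta(n/P)$, linear copy/rebuild work, $\Theta(1)$ steps per incoming operation, and a drain rate exceeding the arrival rate, the window lasts only $T=O(n/P)$ operations. The claim is then immediate, because $T$ operations involve at most $T$ distinct keys.

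This makes your Chernoff step vacuous: a sum of $T'\le Cn/P$ indicators can never exceed the threshold $Cn/P$, so you can drop it. As written it is also slightly loose. The window's endpoints depend on the same randomness (via $n_i$ and the running times), so the indicators for keys in the window are not literally independent Bernoulli$(1/P)$ variables. You would first have to pass to a deterministic superset of operations, which is effectively what the paper's $O(n)$-span version does.

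The two routes trade off as follows. The paper's argument is insensitive to scheduling constants and needs only a crude window bound. Yours avoids probabilistic reasoning about the operation keys altogether, and as a by-product it already shows the drain phase terminates, which the paper deduces afterwards from this claim.

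One wording fix: with drain rate $2$ and at most one new buffered key per operation, the buffer shrinks by at least one per operation. That is linear, not geometric, shrinkage; the $O(n/P)$ bound on draining is unaffected.
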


\begin{proof}
    The proof is identical to \cref{clm:partnotbig}.
\end{proof}

Note that the claim only bounds the number of different keys, instead of the number of operations because if a single key is inserted and deleted multiple times, all but (at most) one of these operations would cancel out. Thus, for the buffer, we cannot use a simple sequence to store all the operations. Instead, we use a hash table to store all the different keys in these operations with their status (need to be deleted/need to be inserted). Since the order of these insertions / deletions does not matter, we can iterate through the hash table in any order. Thus, by the previous claim, clearing the buffer can also be done within $O(n_i)$ time. Finally, the rebuild time is $O(n_i)$ for part $i$ and can be done within $O(n_i)$ operations with high probability.

The newly introduced storage is the copied data structure and the buffer which cost only \[O\bk*{n_iv + n_i\log\log\bk*{\frac{U}{n_i P}} + n_i\log^{(k)}n_i} + O(n_i\log U) = O(n)\] bits, 
since $n_i\leq O(n/\log^2 U)$ and $v\leq O(\log U)$. 

Finally, we do the rebuild process in the backyard, as stated in \cref{line:lift_backyard}, slowly iterating through the backyard and lifting some of the keys to the frontyard of their corresponding part. This process can be done in $O(n)$ operations. Note that to support this process, we need to be able to fetch a key-value pair from the hash table in $O(1)$ time, delete it, and then insert it into the new hash table. Here, the fetch operation is the second generalization in \cref{thm:dictionarySpace}.

A technical issue is that the operations may only be $O(1)$ time with high probability in the size of the hash table instead of $n$. However, we can easily fix this, since we only use $O(P)=O(\log^2 U)$ hash tables. So, we can simply add some redundant keys to ensure that the size of each hash table is at least $O(n^\epsilon)$. 

\paragraph{Padding the Number of Items}

There is one final subtlety that arises: namely in \cref{thm:dictionarySpace}, there is a restriction on the number of items being always $ \log^{20}(U) \leq n$, and thus as stated, the runtime / space guarantees of \emph{our data structure} would only hold when $n$ is in this same parameter regime. To remedy this, we initialize our data structure with $\log^{20}(U)$ ``sentinel keys'', which ensure that this lower bound on the current number of keys is always being met (and correspondingly, we use a new universe size of $U' = U + \log^{20}(U)$ under the hood). The implication in the space consumption is an additive $O(\log^{20}(U) \cdot \log(U)) = O(\log^{21}(U))$ many bits of space. Importantly, for any $\varepsilon > 0$, $O(\log^{21}(U)) \leq U^{\varepsilon}$, and so we absorb this into our final term. Likewise, the space consumption with universe size $U' = U + \log^{20}(U)$ is unchanged, as $n \log\log(U'/n) = O(n \log\log(U/n))$.

\paragraph{Concluding \cref{thm:mainRestated}}
With this, we establish our main theorem: 

\mainthm*

We remark that throughout this section, we have been assuming that the keys are drawn uniformly at random from the universe. In \cref{sec:hash_function}, we discuss how this assumption can be relaxed using relatively standard techniques with only an extra $U^\epsilon$ bits of space.

{As a corollary of Theorem \ref{thm:mainRestated}, we can construct an extension of the data structure that allows different values to have different sizes, thereby obtaining a very space-efficient method for performing memory allocation on small objects:
\begin{corollary}
Let $U \ge \poly n$ be a parameter, let $k = O(1)$, and let $\eps >0$ be arbitrary constants. There is a \emph{resizable} retrieval data structure maintaining a set of key-value pairs, with keys from the universe $[U]$ and with values of arbitrary sizes in $[U]$ (the size of a value is specified on insertion), such that:
  \begin{enumerate}
      \item At any given moment when it stores $n$ keys $S$, with $n\leq U/2$, it occupies $O(n \log \log (U) + n\log^{(k)}n) + \sum_{s \in S} v_s + U^{\eps}$ bits of space, where $v_s$ denotes the size of the value for key $s$.
      \item It supports insertions, deletions, and retrieval queries in $O(k) = O(1)$ time with high probability in $n$ under the word RAM model with word size $w = \Theta(\log U)$.
  \end{enumerate}
\end{corollary}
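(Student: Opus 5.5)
We reduce to \cref{thm:mainRestated} by splitting values into fixed-size pieces and running one resizable retrieval structure per size class. Fix a block size $b = \Theta(\log U)$, a constant multiple of the word size. A value of $v_s$ bits is written as $q_s = \lfloor v_s / b\rfloor$ full blocks of $b$ bits plus a remainder of $r_s = v_s \bmod b < b$ bits.

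The first step stores the full blocks. We maintain one instance $\mathcal{R}_b$ of \cref{thm:mainRestated} with value length $b$ and universe $[U]\times[U]$, which has size $U^2$, so $\log\log$ and $U^\eps$ change only by constant factors. For each $s\in S$ and $j < q_s$ it stores the pair $((s,j), \text{block}_j)$.

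The second step stores the remainders. For each $r\in\{1,\dots,b-1\}$ we maintain a separate instance $\mathcal{R}_r$ of \cref{thm:mainRestated} with value length $r$, storing $(s,\text{remainder})$ for every key with $r_s=r$.

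The third step records each key's length. Retrieval cannot report membership, so the length $v_s$ must itself be retrievable. We add one more instance $\mathcal{R}_{\mathrm{len}}$ with $O(\log U)$-bit values mapping $s\mapsto v_s$, charged as follows. A query reads $v_s$ from $\mathcal{R}_{\mathrm{len}}$, queries $\mathcal{R}_{r_s}$, and then reads blocks $(s,0),\dots,(s,q_s-1)$ from $\mathcal{R}_b$. This takes $O(1+v_s/w)$ time, which is $O(1)$ per word of output; that is the natural meaning of constant time when values may exceed a word. Insertions and deletions perform the same sequence of updates. Since every structure operates in $O(k)$ time w.h.p., the time bound follows by a union bound over the polynomially many operations.

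The main obstacle is the space accounting, for three reasons. First, there are $\Theta(\log U)$ instances, so the sum of their $U^\eps$ terms is $O(U^\eps\log U)$; we absorb this by shrinking $\eps$ and sharing the global hash function. Second, \cref{thm:mainRestated} requires $\mathrm{polylog}(U)\le n_r$. For sparsely populated classes we pad each instance with $\mathrm{polylog}(U)$ sentinel keys, as in the padding paragraph, costing $\mathrm{polylog}(U)\cdot O(\log U)$ per instance, which is again within $U^\eps$. Third, the overhead terms must be summed. For an instance holding $n_r$ items, the overhead is $O(n_r\log\log U + n_r\log^{(k)} n_r)$. Summing over $\mathcal{R}_{\mathrm{len}}$ and the remainder instances gives $O(n\log\log U + n\log^{(k)}n)$, because the remainder classes partition $S$.

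The $\mathcal{R}_{\mathrm{len}}$ instance stores $O(\log U)$-bit lengths, contributing $O(n\log U)$ extra bits. We resolve this by storing instead the class index $r_s$ together with $q_s$. Alternatively, and this is what I would try first, we drop $\mathcal{R}_{\mathrm{len}}$ and fold the length into the remainder instance: we use $b$ remainder classes indexed by $v_s \bmod b$, and encode $q_s$ in unary through a continuation bit attached to each block in $\mathcal{R}_b$. That continuation bit costs one extra bit per block, i.e.\ $O(v_s/\log U)$ bits per key, which is negligible against $v_s$.

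The remaining difficulty is the block instance $\mathcal{R}_b$. Its item count is $\sum_s q_s$, which may far exceed $n$, so its overhead is $O((\sum_s q_s)(\log\log U + \log^{(k)} n))$. Since $q_s \le v_s/b$ and $b=\Theta(\log U)$, this is $O(\sum_s v_s\cdot \log\log U/\log U)$, which is only $o(\sum_s v_s)$, not $O(n)$. To make it vanish we choose $b$ larger, storing blocks of $b=\Theta(\log U\cdot\log\log U)$ bits as a constant number of words per retrieval value. The per-block overhead is then $O(\log\log U)$ bits against $b$ payload bits, which gives overhead $O(\sum_s v_s\log\log U/b) = O(\sum_s v_s/\log U)$. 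With the assumption $U\ge\poly n$ and values of size at most $U$, checking that this last term is dominated by the stated bound is the step I expect to need the most care.
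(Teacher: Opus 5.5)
\textbf{Gap.} Your plan breaks at exactly the step you flagged. The cause is that you read ``sizes in $[U]$'' as allowing values of up to $U$ bits.

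The block instance $\mathcal{R}_b$ holds $Q=\sum_s q_s$ items. \cref{thm:mainRestated} therefore charges it $O(Q\log\log U+Q\log^{(k)}Q)$ bits beyond its payload. The statement allows only $O(n\log\log U+n\log^{(k)}n)+U^{\eps}$ beyond $\sum_s v_s$, so you would need $Q=O(n)$. No fixed $b$ gives that when lengths are unbounded.

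Your estimate $O(\sum_s v_s/\log U)$ is $o(\sum_s v_s)$, but it is not dominated by the stated terms. If every $v_s=\log^3U$, it equals $n\log^2U$, far above $n\log\log U$.

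There are further problems with the large-value version:
\begin{itemize}
\item A block of $b=\Theta(\log U\log\log U)$ bits is $\Theta(\log\log U)$ words, not a constant number. \cref{thm:mainRestated} requires $v=O(\log U)$, so you cannot instantiate it with that $b$.
\item The time bound $O(1+v_s/w)$ is not the claimed $O(1)$.
\item The variant you would try first, dropping $\mathcal{R}_{\mathrm{len}}$, fails. The query must know $r_s$, and whether $q_s>0$, before it can choose which instance to read. Retrieval instances return arbitrary values on absent keys, so probing them cannot reveal this.
\end{itemize}

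\textbf{The intended regime.} The corollary is meant for $v_s\le O(\log U)$. This is how the introduction phrases it, it is forced by $O(1)$-time queries at word size $\Theta(\log U)$, and the paper's proof uses an array whose entries have length $O(\log U)$.

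In that regime $q_s=O(1)$, so $Q=O(n)$ and your construction already works. It also simplifies:
\begin{itemize}
\item Drop the blocks and keep one instance of \cref{thm:mainRestated} per value length $\ell\in\{1,\dots,O(\log U)\}$.
\item Add one length instance whose values (the lengths $v_s\le O(\log U)$) take only $O(\log\log U)$ bits. It costs $O(n\log\log U+n\log^{(k)}n)$.
\item The per-length instances cost $\sum_s v_s+O(n\log\log U+n\log^{(k)}n)$, because the classes partition $S$. This is exactly where each instance's resizability is used, since class sizes fluctuate arbitrarily.
\item The $O(\log U)$ copies of the hash functions and of the sentinel padding are absorbed into $U^{\eps}$ after shrinking $\eps$. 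Pad with $U^{\eps}$ sentinels rather than $\mathrm{polylog}(U)$ if you want each instance's guarantees with high probability in $n$.
\end{itemize}

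\textbf{Comparison with the paper.} The paper instead goes white-box. It keeps a single copy of the construction but stores the values in separate arrays. These use the variable-length array of Theorem~6 of \cite{jansson2012cram}, which holds $m$ entries of length $O(\log U)$ in $\sum_i|w_i|+O(m\log\log U+\mathrm{polylog}(U))$ bits. That avoids the length instance and the per-length copies, but relies on an external compressed-array result and on opening up the construction. Your black-box route needs nothing beyond \cref{thm:mainRestated} itself.
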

\begin{proof}
The construction is the same as Theorem \ref{thm:mainRestated} except that the data structure always stores values in separate arrays from where it stores other information. It can then implement these value arrays with the array construction in Theorem 6 of \cite{jansson2012cram} that stores an array of $m$ values $w_1, \ldots, w_m$, each of length $|w_i| \le O(\log U)$, in space $O(m\log \log U + \operatorname{polylog} U) + \sum_i |w_i|$ bits, while supporting $O(1)$-time operations. 
\end{proof}
}

Finally, we conclude the section with a simple corollary for dynamic filters:
\begin{corollary}\label{cor:resizableFilter}
Let $U$ and $\epsilon$ be parameters satisfying $\log \epsilon^{-1} \le O(\log U)$, and consider the word-RAM model with word size $\Theta(\log U)$. Then, there is a resizable filter with false-positive rate $\epsilon$ that offers the following guarantees in terms of its current size $n$ at any given moment: it supports constant-time operations with high probability in $n$, and it uses space at most
$$n \log \epsilon^{-1} + O\left(n \log \log (U/n)\right) + \poly\log U + O(U^\delta)$$
bits, where $\delta \in (0, 1)$ is a positive constant of our choice and where the $U^\delta$ term is just to store a hash function. 
\label{cor:filter}
\end{corollary}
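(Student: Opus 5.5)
The plan is the standard reduction from filters to retrieval, using \cref{thm:mainRestated} as a black box. Fix a hash function $h:[U]\to[U]$ drawn from a highly independent family whose description takes $O(U^\delta)$ bits, for instance the family used in \cref{sec:hash_function}. The filter stores the set $\{h(x) : x\in S\}$. A plain retrieval structure cannot answer membership, because on a non-key it returns an arbitrary value. So instead of pure retrieval we use the internal structure of our data structure: every key is stored either as a fingerprint on the Trie or explicitly in the backyard, and we associate with each key the value $g(x)\in[2^{v}]$ with $v=\lceil\log\epsilon^{-1}\rceil+1$, where $g$ is a second independent hash function.

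\texttt{Query}$(y)$ runs the retrieval query on $h(y)$ and returns true exactly when the returned value equals $g(y)$. There is one more case: if the query finds no ancestor fingerprint and no backyard entry (or finds a fingerprint marked deleted), we return false, which the structure can detect. Since $v\le O(\log U)$ by hypothesis, \cref{thm:mainRestated} applies. Insertions and deletions on $S$ translate directly into \texttt{Insert}$(h(x),g(x))$ and \texttt{Delete}$(h(x))$.

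Two points need care. The first is duplicates: distinct $x,x'\in S$ may satisfy $h(x)=h(x')$. To avoid this, we either let $h$ map into a universe of size $U'=U^{O(1)}$ so that collisions occur with negligible probability (this changes $\log\log(U/n)$ only by a constant factor), or we keep a small multiplicity counter in the value. The first option is simpler, and I would take it. Deletions of elements not in $S$ are disallowed by the filter interface, so correctness for $x\in S$ holds deterministically: the retrieval query returns $g(x)$.

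The second point is the false-positive rate. For $y\notin S$, $g(y)$ is independent of everything stored, because $g$ is evaluated at $y$ only on queries and $y$ is not a key. Hence the returned value equals $g(y)$ with probability $2^{-v}\le\epsilon$.

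For space, \cref{thm:mainRestated} gives $nv+O(n\log\log(U/n)+n\log^{(k)}n)+U^{\epsilon'}$ bits. Here $nv=n\log\epsilon^{-1}+O(n)$. With $k=O(1)$ fixed, the term $n\log^{(k)}n$ is absorbed into $O(n\log\log(U/n))$ whenever $\log^{(k)} n\le O(\log\log (U/n))$. For the remaining regimes, in particular $n$ close to $U$, we need an additional argument, and I expect this to be the main obstacle. My first attempt is to choose $k$ large enough that $\log^{(k)}n=O(1)$ for $n\le U$, i.e.\ $k=\log^*U$. That is not constant, so instead I would pad the universe to $U^2$ through $h$. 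Then $\log\log(U^2/n)\ge\log\log U\ge\log^{(2)}n$, and taking $k=2$ gives $O(n\log\log(U/n))$, since $\log\log(U^2/n)=O(\log\log(U/n)+\log\log U)$. Note, however, that this makes the bound only $O(n\log\log U)$ when $n\approx U$; I would revisit whether $O(n\log\log(U/n))$ can be recovered exactly there. The regime $n<\mathrm{polylog}\,U$ is handled by the sentinel-key padding, which contributes the $\poly\log U$ term. The $U^{\epsilon'}$ term together with the descriptions of $h$ and $g$ gives $O(U^\delta)$. Constant time with high probability follows directly from \cref{thm:mainRestated}.
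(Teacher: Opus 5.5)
Your core reduction is the paper's: store a $v\approx\log\epsilon^{-1}$-bit hash value $g(x)$ per key using \cref{thm:mainRestated}, and answer \texttt{Query}$(y)$ by comparing the retrieved value with $g(y)$. The paper does only this, applying the retrieval structure directly to $x$ with a pairwise-independent $g$.

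\emph{Your outer hash $h$ is unnecessary.} \cref{thm:mainRestated} already accepts arbitrary distinct keys: its internal hashing is handled in \cref{sec:hash_function}, and original keys are kept in the backyard. So the duplicate problem you create and then repair never arises. If you want a larger universe, embed $[U]\subseteq[U^2]$ deterministically. That embedding does buy one thing the paper's proof ignores: it removes the theorem's hypothesis $n\le U/2$.

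\emph{The constant-factor remark is wrong near $U$.} Moving to $U^{O(1)}$ does not change $\log\log(U/n)$ ``only by a constant factor'' when $n$ is close to $U$. For $U/n=O(1)$ you have $\log\log(U/n)=O(1)$ but $\log\log(U^2/n)\ge\log\log U$. You notice this yourself later.

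\emph{The deleted-key check.} Your extra rule ``no fingerprint found or marked deleted $\Rightarrow$ false'' is not in the paper. It is harmless. It also covers a point the paper's one-line argument skips: a previously deleted $y$ must not get its old $g(y)$ back.

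\emph{The space-bound obstacle is genuine, and the paper does not resolve it either.} Its proof never mentions the $O(n\log^{(k)}n)$ term of \cref{thm:mainRestated}. With $k=2$ and no padding, that term is absorbed whenever $n\le U^{1-\Omega(1)}$, since then $\log\log(U/n)\ge\log\log U-O(1)\ge\log\log n-O(1)$. When $\log\log(U/n)=o(\log^{(k)}n)$, for example $U/n=O(1)$, no constant $k$ absorbs the term. What the reduction actually proves is therefore
$n\log\epsilon^{-1}+O(n\log\log(U/n)+n\log\log n)+\poly\log U+O(U^\delta)$.
This is the same $O(n\log\log U)$ redundancy you reach after padding. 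It is also the form the paper states right after the corollary: $n\log\epsilon^{-1}+O(n\log\log n)$ for $n\ge U^{\Omega(1)}$.

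So your argument establishes everything the paper's argument establishes. The literal $O(n\log\log(U/n))$ bound for $n$ extremely close to $U$ would need an idea that neither argument supplies.
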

\begin{proof}
This follows by using Theorem \ref{thm:mainRestated} to store a $v = \log \epsilon^{-1}$-bit value $H(x)$ for each key $x$, where $H$ is a pairwise-independent hash function. The query operation Query$(x)$ simply queries the retrieval data structure with key $x$, takes the output $u$, and returns true iff $u = H(x)$. If $x$ is present, then we get a return-value of $\texttt{true}$ with probability $1$, and if $x$ is not present, we get a return-value of $\texttt{false}$ with probability $1 - \epsilon$. 
\end{proof}

Note that, when $n \ge U^{\Omega(1)}$, Corollary \ref{cor:filter} gives a space bound of 
$$n\log \epsilon^{-1} + O(n \log \log n)$$
bits, improving upon the bound previously given by \cite{PSW13}, which was in terms of the \emph{maximum} number of keys present over all time, rather than the current number $n$.

\section{Lower Bounds}
\label{sec:lower_bound}
\subsection{Space Lower Bound}

In this section, we show a lower bound on the space complexity of any resizable retrieval data structure. Our starting point is the lower bound of Pagh, Segev, and Wieder \cite{PSW13} for resizable filters. 
\begin{theorem}[Theorem 3.1 of \cite{PSW13}]\label{thm:PSW13}
    Consider any filter $\mathcal{D}$ with false-positive rate $\epsilon$ that supports an unknown number $n$ of insertions from a universe $[U]$. Let $n \leq \eps U$ be sufficiently large, and let $1 / \sqrt{n} \leq \alpha < 1$. If, for any sequence of insertions of length $m$ such that $\alpha n < m < n$ the maximum space usage of the data structure $\mathcal{D}$ is $\beta m$ bits of space, then for any integer $\gamma \geq 2$ it holds that 
    \[
    \beta  \geq \left ( 1 - \frac{1}{\gamma}\right ) \cdot (\log(1 / \eps) + (1 - 9 \eps)\log\log_{\gamma}(1 / \alpha) - \Theta(1)).
    \]
\end{theorem}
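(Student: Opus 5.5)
Since this is Theorem~3.1 of \cite{PSW13}, the paper can simply cite it. If I had to reprove it, I would use an encoding (incompressibility) argument run over a geometric sequence of checkpoints. Fix $\gamma\ge 2$ and set $L=\lfloor\log_\gamma(1/\alpha)\rfloor$. Choose checkpoints $m_j=\alpha n\gamma^{j}$ for $j=0,\dots,L$, so every $m_j$ lies in the range $(\alpha n,n)$ where the space bound $\beta m$ applies. Feed $\mathcal{D}$ a uniformly random sequence of $n$ distinct keys. Let $\sigma_j$ be the memory state at checkpoint $m_j$; it has at most $\beta m_j$ bits. Let $A_j\subseteq[U]$ be the set of keys the state $\sigma_j$ accepts; taking expectations over the filter's internal randomness, $|A_j|\le m_j+\eps U$ essentially. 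The basic one-checkpoint fact is the usual one: given $\sigma_j$, the first $m_j$ keys can be described with $\log\binom{|A_j|}{m_j}$ further bits. Comparing with $\log\binom{U}{m_j}$ gives $\beta m_j\ge m_j\log(1/\eps)-O(m_j)$. This yields the $\log(1/\eps)$ term, and the $(1-1/\gamma)$ factor appears naturally if I charge only the $(1-1/\gamma)m_j$ keys that are new in epoch $j$.

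The $\log\log_\gamma(1/\alpha)$ term must come from the fact that $\mathcal{D}$ does not know at which checkpoint the sequence stops. My plan is a single joint encoding of all $n$ keys, organised epoch by epoch: epoch $j$ consists of the keys between $m_{j-1}$ and $m_j$. The key structural point is that the final accepted set $A_L$ must contain keys from every epoch, while the data structure's bits were ``spent'' at different scales. I would argue that, for each key $x\in A_L$, the encoder can identify the epoch $j$ in which $x$ was inserted more cheaply than $\log L$ bits only if $\mathcal{D}$ itself stores that information. Otherwise the decoder must be told it, at a cost of about $\log L$ bits per key.

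Concretely, I would compare two quantities. On one side is the entropy of the sequence, which is ordered by epoch, so it equals the entropy of the set plus the epoch labels, about $\log\binom{U}{n}+n\cdot H(\text{epoch label})$. The label entropy is roughly $\log L-O(1)$ after weighting each epoch by its size, with the $(1-1/\gamma)$ fraction of new keys per epoch. On the other side is the cost of the encoding: $\sigma_L$ plus a description of the keys inside $A_L$, with $A_L$ of size about $\eps U+n$. From this I expect to obtain $\beta n\ge(1-1/\gamma)\bigl(n\log(1/\eps)+n(1-O(\eps))\log L\bigr)-O(n)$. The $(1-9\eps)$ loss would come from the false positives in $A_L$, which dilute the information that the accepted set carries about epochs.

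The main obstacle is making the epoch-label information genuinely necessary. A filter is allowed to ignore order, so I must show why $\sigma_L$ has to implicitly encode insertion epochs. My first attempt would be to exploit the intermediate states through a hybrid argument. Suppose $\sigma_L$ did not encode epochs. Then I would reconstruct, from $\sigma_L$ and $o(\log L)$ bits per key, a state that is valid for the first $m_j$ keys and uses far fewer than $\beta m_j$ bits, contradicting the one-checkpoint bound at the smaller scale. Equivalently, I would show that the fingerprint lengths used for early keys must be close to $\log(1/\eps)+\log(\text{current size}/m_j)$, and then sum this over epochs. If this direct route fails, I would instead choose the stopping checkpoint uniformly at random and apply a chain rule on the mutual information between $\sigma_j$ and the keys of each epoch.
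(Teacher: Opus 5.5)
Citing \cite{PSW13}, as you first suggest, is exactly what the paper does; it gives no proof of its own. Your sketched reproof, however, has a genuine gap at the $\log\log_\gamma(1/\alpha)$ term.

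\emph{The epoch-label mechanism fails.} The source you propose for that term, the entropy of the epoch labels, is not large. With $m_j=\alpha n\gamma^j$, epoch $j$ holds a $(1-1/\gamma)\gamma^{j-L}$ fraction of the keys. That is a geometric law with entropy $\log\frac{\gamma}{\gamma-1}+\frac{\log\gamma}{\gamma-1}=O(1)$ bits per key, not $\log L-O(1)$. Worse, any argument that uses only $|\sigma_L|\le\beta n$ plus validity of the filter is defeated by an ordinary fixed-capacity filter for $n$ keys. Such a filter is valid at every moment, and its final state has $n\log(1/\eps)+O(n)$ bits. So the extra term must come from the size bounds at the \emph{intermediate} checkpoints, not from any need to remember insertion order. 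Your hybrid and chain-rule fallbacks point in that direction but never state the inequality that does the work. Your explanation of the $(1-9\eps)$ factor also rests on the label mechanism.

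\emph{The missing idea.} The state at checkpoint $m_i$ caps, once and for all, what the structure can ever know about the keys $K_i$ of epoch $i$. Meanwhile the final accepted set must reserve essentially disjoint regions for the $L$ epochs. Let $A_j$ be the set accepted by $\sigma_j$, put $B_i=A_i\cap A_{i+1}\cap\cdots\cap A_L$ (so $B_1\subseteq\cdots\subseteq B_L=A_L$), and let $D_i=B_i\setminus B_{i-1}$.
\begin{enumerate}
\item The $D_i$ are disjoint with union $A_L$, so $\sum_i\E|D_i|\le\E|A_L|\le\eps U+n\le 2\eps U$.
\item Every key of $K_i$ lies in $B_i$, since there are no false negatives. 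Only about an $\eps$ fraction of them lie in $B_{i-1}\subseteq A_{i-1}$, because $\sigma_{i-1}$ is independent of $K_i$.
\item Encode $K_i$ with the other epochs as side information (and shared random bits). Send $\sigma_i$, which has at most $\beta m_i$ bits. The decoder recomputes $\sigma_{i-1}$ from the earlier keys, and $\sigma_{i+1},\dots,\sigma_L$ from $\sigma_i$ and the later keys; hence it knows $D_i$. The non-exceptional part of $K_i$ is then named as a subset of $D_i$, and the exceptional part at negligible extra cost.
\item Compare with the entropy $\approx k_i\log(U/k_i)$ of $K_i$, where $k_i=(1-1/\gamma)m_i$, and apply a weighted Jensen. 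This gives, for every $i$,
\[
\E|D_i|\ \ge\ U\,2^{-\beta\gamma/((\gamma-1)(1-O(\eps)))-O(1)}.
\]
\item Summing over the $L$ epochs and comparing with $2\eps U$ yields
\[
\beta\ \ge\ (1-1/\gamma)(1-O(\eps))\bigl(\log(1/\eps)+\log L-O(1)\bigr).
\]
\end{enumerate}
This is the theorem's form, up to how the $\eps$-losses are bookkept. In this argument the factor $(1-1/\gamma)$ arises because only the $k_i$ new keys are charged against the $\beta m_i$ bits of $\sigma_i$. The $1-O(\eps)$ loss comes from epoch-$i$ keys that were already false positives of $\sigma_{i-1}$.
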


In particular, we can apply the above theorem with $\alpha = 1 / \sqrt{n}$ and $\gamma = 2^{\sqrt{\log n}}$, and assume $\eps \leq 1/ 10$. This leads to the bound that 
\[
\beta  \geq \left ( 1 - \frac{1}{2^{\sqrt{\log n}}}\right ) \cdot \left (\log(1 / \eps) + \Omega\left (\log\left ( \frac{\log(1/\alpha)}{\log \gamma}\right ) \right) \right )
\]
\[
= \left ( 1 - \frac{1}{2^{\sqrt{\log n}}}\right ) \cdot \left (\log(1 / \eps) + \Omega\left ( \log\log n \right ) \right ).
\]

Note that the above theorem \emph{only applies} when the value $m$ (the number of elements) is not known in advance, and is allowed to take any value in the interval $[\alpha n, n]$.

To get a lower bound for resizable retrieval, we apply the same reduction as in Corollary \ref{cor:filter} to obtain the following lemma:

\begin{lemma}\label{lem:simApproxMember}
    Let $\mathcal{D}$ be a resizable retrieval data structure defined over the universe $[U]$ with $v$-bit values. Then, $\mathcal{D}$ can be used to build a resizable filter over $[U]$ with $\eps = \frac{1}{2^v}$.
\end{lemma}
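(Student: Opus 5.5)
The plan is to reuse the reduction from Corollary~\ref{cor:filter} essentially verbatim, now stated for an arbitrary resizable retrieval structure $\mathcal{D}$ rather than the one from Theorem~\ref{thm:mainRestated}. Fix a hash function $H:[U]\to[2^v]$ drawn from a family such that $\Pr[H(x)=a]=2^{-v}$ for every fixed $x$ and $a$, and such that $H(x)$ is independent of any information the structure holds about keys other than $x$. Pairwise independence suffices if the queried key is distinct from the stored ones; in the lower-bound setting we may even let $H$ be fully random, since Theorem~\ref{thm:PSW13} applies to any filter.

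The filter $\mathcal{F}$ is defined as follows. To insert $x$, call $\mathcal{D}.\texttt{Insert}(x,H(x))$. To delete $x$, call $\mathcal{D}.\texttt{Delete}(x)$; the lower bound only uses insertions, but deletions come for free. To query $x$, compute $u=\mathcal{D}.\texttt{Query}(x)$ and return true iff $u=H(x)$. The space of $\mathcal{F}$ equals the space of $\mathcal{D}$ plus the space needed to describe $H$. The size parameter $n$ is the same for both structures, so resizability transfers directly: whatever space bound $\mathcal{D}$ achieves in terms of the current size $n$, $\mathcal{F}$ achieves as well.

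Correctness has two parts.
\begin{itemize}
\item \textbf{No false negatives.} If $x\in S$, then $\mathcal{D}$ stores the key $x$ with value $H(x)$, so the retrieval guarantee forces $\texttt{Query}(x)=H(x)$, and $\mathcal{F}$ returns true with probability $1$.
\item \textbf{False-positive rate.} If $x\notin S$, $\mathcal{D}$ may return an arbitrary $v$-bit value $u$. The point to check is that $u$ is a function of $\mathcal{D}$'s state and internal randomness, which depend only on $\{(y,H(y)) : y\in S\}$ and the operation history. These are independent of $H(x)$ because $x\notin S$. Hence $\Pr[u=H(x)]=2^{-v}=\eps$.
\end{itemize}

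The only subtlety I expect is this independence step. If $x$ was inserted earlier and then deleted, $\mathcal{D}$'s memory could still encode $H(x)$, which would break independence. Since Theorem~\ref{thm:PSW13} concerns insertion-only sequences, I will restrict the claim to the insertion-only setting, which is all the subsequent lower bound requires. I will also make $H$ fully random, or free to the lower bound by having it drawn as public randomness and not counted in space (the lower bound of \cite{PSW13} is for randomized filters whose space counts stored state). Alternatively, I will note that the additive description cost of $H$ is negligible.
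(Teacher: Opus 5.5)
Your proposal is correct and uses the same reduction as the paper: give each key an independent uniformly random $v$-bit tag, store it in $\mathcal{D}$, and answer a filter query by comparing $\mathcal{D}$'s output with the key's tag. Restricting to insertion-only sequences with a fully random $H$ is what makes the independence step valid (the paper asserts it in one line without addressing previously deleted keys), and it is all Theorem~\ref{thm:lowerinsonly} needs; drop the aside that pairwise independence suffices for an arbitrary $\mathcal{D}$, since $\mathcal{D}$'s answer on a non-member may depend jointly on several stored tags.
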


\begin{proof}
For every element $x \in \mathcal{U}$, we associate with it a random, independent $v$-bit string, which we denote by $v_x$. To build our resizable filter, we use $\mathcal{D}$ to store $v_x$ for each element $x$ that is present. 

To answer a filter query on a key $x$, we query $x$ in $\mathcal{D}$ in order to obtain some string $z$, and we report $x$ to be present iff $z = v_x$. If $z$ is truly present, then $z = v_x$ and we return true with probability $1$. If $z$ is not truly present, then $v_z$ is independent of $z$, and we return present with probability $\Pr[z = v_z] = 1/2^v = \eps$. 
\end{proof}

With this lemma in hand, we can conclude the main theorem of this section:

\begin{restatable}{theorem}{thmlowerinsonly}
    Let $\mathcal{D}$ be a resizable retrieval data structure over a universe $[U]$, and with $v$-bit values for $v \geq 4$, let $n \leq U / 2^v$ be sufficiently large, with $v \leq \mathrm{poly}\log n$. If, for any sequence of insertions of length $m$ such that $\sqrt{n} < m < n$, the maximum space usage of the data structure $\mathcal{D}$ is $\beta m$ bits, then it holds that 
    \[
    \beta m \geq m \cdot \left (v + \Omega\left ( \log\log n \right ) \right ).
    \]
    \label{thm:lowerinsonly}
\end{restatable}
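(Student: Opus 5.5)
The plan is a direct reduction: compose \cref{lem:simApproxMember} with \cref{thm:PSW13}, instantiated with $\alpha = 1/\sqrt{n}$ and $\gamma = 2^{\sqrt{\log n}}$, as in the computation just before the lemma. Let $\mathcal{D}$ be the resizable retrieval structure. By \cref{lem:simApproxMember} we obtain a resizable filter $\mathcal{D}'$ over $[U]$ with false-positive rate $\eps = 2^{-v} \le 1/16$, which uses $v \ge 4$.

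The filter's memory consists of $\mathcal{D}$'s memory together with the random strings $v_x$. These strings are shared randomness fixed independently of the data (equivalently, a public random hash function), not per-input state. The lower bound of \cite{PSW13} is an encoding/counting argument over the data structure's memory representation for a fixed random string, so it should hold with public randomness. I will state this explicitly, perhaps by fixing the randomness by averaging. Then on every insertion sequence of length $m \in (\sqrt n, n)$, $\mathcal{D}'$ uses at most $\beta m$ bits.

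Next I check the hypotheses of \cref{thm:PSW13}. We need $n \le \eps U = U/2^v$, which is assumed, and $n$ sufficiently large. We need $1/\sqrt n \le \alpha < 1$, which holds with equality at $\alpha = 1/\sqrt n$. Taking $\gamma = 2^{\sqrt{\log n}} \ge 2$, the theorem gives
\[
\beta \ge \bigl(1 - 2^{-\sqrt{\log n}}\bigr)\bigl(v + (1-9\eps)\log\log_\gamma \sqrt{n} - \Theta(1)\bigr).
\]
Here $\log_\gamma \sqrt n = \tfrac12 \sqrt{\log n}$, so $\log\log_\gamma\sqrt n = \tfrac12\log\log n - 1$, and $1 - 9\eps \ge 1 - 9/16 > 0$. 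Hence the second summand is $\Omega(\log\log n) - O(1) = \Omega(\log\log n)$ for large $n$.

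The remaining step, and the one needing care, is absorbing the multiplicative loss $(1 - 2^{-\sqrt{\log n}})$ without losing the additive $v$. We have
\[
\bigl(1 - 2^{-\sqrt{\log n}}\bigr)\bigl(v + \Omega(\log\log n)\bigr) \ge v + \Omega(\log\log n) - \frac{v + O(\log\log n)}{2^{\sqrt{\log n}}}.
\]
Since $v \le \mathrm{poly}\log n = 2^{O(\log\log n)}$, the subtracted term is $2^{O(\log\log n) - \sqrt{\log n}} = o(1)$, so it is dominated by $\Omega(\log\log n)$ for sufficiently large $n$. This is exactly where the hypothesis $v \le \mathrm{poly}\log n$ is used. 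Multiplying by $m$ gives $\beta m \ge m\,(v + \Omega(\log\log n))$.

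The main obstacle is the first step: making rigorous that the filter built from $\mathcal{D}$ meets the space hypothesis of \cref{thm:PSW13} despite needing the random labels $v_x$. If the public-randomness reading of \cite{PSW13} is not directly available, I would fall back to replacing $v_x$ by a pairwise-independent hash $H$, as in \cref{cor:filter}. That adds only $O(\log U)$ bits, and since $m > \sqrt n$ this is $o(m)$ whenever $\log U = o(\sqrt n)$, so it is absorbed into the bound $\beta$ with negligible change.
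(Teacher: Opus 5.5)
Your proposal is correct and follows the paper's argument: \cref{lem:simApproxMember} with $\eps = 2^{-v}$, then \cref{thm:PSW13} with $\alpha = 1/\sqrt{n}$ and $\gamma = 2^{\sqrt{\log n}}$, then absorbing the factor $(1 - 2^{-\sqrt{\log n}})$ using $v \le \mathrm{poly}\log n$; like your main route, the paper implicitly treats the labels $v_x$ as uncounted randomness. Drop the pairwise-independent fallback, though: for an arbitrary $\mathcal{D}$ the answer on a non-member $x$ can depend on all stored labels, so pairwise independence does not make $H(x)$ independent of that answer (you would need roughly $(m+1)$-wise independence), and the assumption $\log U = o(\sqrt{n})$ is not among the theorem's hypotheses.
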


\begin{proof}
    It follows from \cref{lem:simApproxMember} that such a data structure $\mathcal{D}$ can solve the approximate membership problem with $\eps = \frac{1}{2^v} \leq 1/ 10$. From there we invoke \cref{thm:PSW13} with $\eps = \frac{1}{2^v}, \alpha = \frac{1}{\sqrt{n}}, \gamma = 2^{\sqrt{\log n}}$. This leads to the bound that 
    \[
    \beta m \geq m \cdot \left ( 1 - \frac{1}{2^{\sqrt{\log n}}}\right ) \cdot \left (v + \Omega\left ( \log\log n \right ) \right ).
    \]
    To conclude, we must only observe that $\frac{1}{2^{\sqrt{\log n}}} \cdot v  = o(\log\log n)$, and hence this term is negligible in comparison to the $\Omega(\log\log n)$ term. Thus, the lower bound is exactly 
    \[
    \beta m \geq m \cdot  (v + \Omega\left ( \log\log n \right )),
    \]
    as we desire.
\end{proof}

\begin{remark}
    In the regime where $v = O(\log n)$, the above implies that for a choice of $U = \mathrm{poly}(n)$, the data structure must use $mv + \Omega(m \cdot \log\log(n)) = mv + \Omega(m \cdot \log\log(U/n))$ bits of space. This shows that, in at least one regime of parameters, $\Omega(m \cdot \log\log(U/n))$ redundant bits are necessary for resizeable retrieval, \emph{even in just the incremental setting}. Otherwise, without the resizing requirement, the incremental setting was known to have strictly smaller space requirements (see \cite{KPXZZ24}).
\end{remark}

\subsection{Space-time Tradeoffs}
In this subsection, we present space-time tradeoffs for retrieval data structures. To start, we recall the following result:

\begin{theorem}[Theorem 1.3 of \cite{10353191}]
    Consider a dynamic dictionary storing at most $n$ keys from $[U]$, each associated with a value in $[2^v]$, with $nv + \log \binom{U}{n} + O(n\log^{(k)} n)$ bits, where $k\leq \log^* n$, $U \geq 3n$, and $v = 2\log(U/n)+\Theta(\log U)$. Then, in the cell-probe model with word size $w = \Theta(\log U)$, the expected update time is at least $\Omega(k)$. Here, the dictionary supports all operations supported by a retrieval data structure as well as the membership query, which asks whether a key is in the set.
\end{theorem}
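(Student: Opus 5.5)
This statement is Theorem 1.3 of \cite{10353191}, used here as a black box, so the plan is mostly to check that its hypotheses match how we will apply it. What needs checking is the regime: $k \le \log^* n$, $U \ge 3n$, the specific value length $v = 2\log(U/n) + \Theta(\log U)$, and the cell-probe model with $w = \Theta(\log U)$. Since the cell-probe model is stronger than the word RAM, the bound transfers directly to any word-RAM implementation.

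For a self-contained justification, I would follow the strategy of \cite{10353191}, which is an information-theoretic encoding argument.

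\textbf{Hard distribution.} Fix the data structure at size $n$ and apply a long sequence of operations, each deleting a random present key and inserting a fresh uniformly random key with a uniformly random $v$-bit value. The state then carries essentially $nv + \log\binom{U}{n}$ bits of entropy. Because the redundancy is only $R = O(n\log^{(k)} n)$, the memory is nearly incompressible: almost every cell is close to uniformly random given the rest.

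\textbf{Encoding argument.} Suppose updates probe fewer than $ck$ cells in expectation. I would argue by induction on the level $i \le k$, where level $i$ corresponds to a redundancy budget of about $n\log^{(i)} n$. The step at each level is to show that a batch of updates touching few cells lets an encoder describe the new key set and values. The encoder uses the addresses and contents of the touched cells plus the old memory, and pays only $\log$ of the number of possible address sets. That is cheaper than the entropy the updates must inject, unless the structure leaves slack of order $n\log^{(i)} n$ bits. Each additional probe per update should remove one logarithm from the required slack. Consequently, with redundancy $n\log^{(k)} n$, fewer than $\Omega(k)$ probes yields an encoding shorter than the entropy, which is a contradiction.

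\textbf{Role of the large values.} The choice $v = 2\log(U/n) + \Theta(\log U)$ makes each key--value pair occupy $\Theta(1)$ words. This prevents amortizing many updates into a single cell write, and it lets membership queries certify where the information lives.

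\textbf{Main obstacle.} The difficult step is the inductive one. It requires controlling how the addresses of probed cells, which adaptively depend on the random keys, can leak information, so that address descriptions cost only $O(\log)$ of the possible location sets. It also requires showing that the slack shrinks by a logarithm per level. This is the technical core of \cite{10353191}. Rather than reproving it, I would defer to that paper and only verify that our reduction in the next result stays within its stated parameter regime.
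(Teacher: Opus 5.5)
Your proposal matches the paper: the statement is imported as Theorem~1.3 of \cite{10353191} with no proof, and your plan of deferring to that work while checking the parameter regime is exactly how it is used in the proof of \cref{thm:spaceTime}. There, $U = 3n$, so $v = 2\log 3 + \Theta(\log n)$ and $w = \Theta(\log n)$ are admissible, and the added $3n$-bit membership array costs only $O(n)$ bits beyond $\log\binom{3n}{n}$. Your encoding-argument sketch is only a heuristic reconstruction of the cited proof and should not be presented as its actual content, but nothing in your plan relies on it.
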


With this, we can now conclude our own space-time tradeoff:

\begin{theorem}\label{thm:spaceTime}
    Consider a retrieval data structure storing at most $n$ keys from $[U]$, each associated with a value in $[2^v]$, with $nv + O(n\log^{(k)} n)$ bits, where $k\leq \log^* n, U \geq 3n$ and $v = \Theta(\log n)$. Then, in the cell-probe model with word size $w = \Theta(\log n)$, the expected update time is $\Omega(k)$.
\end{theorem}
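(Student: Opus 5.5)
The plan is to reduce from the dictionary lower bound (Theorem 1.3 of \cite{10353191}) stated just above: from any retrieval data structure with the hypothesized space and update time, we build a dynamic dictionary meeting that theorem's hypotheses. The update time of the dictionary will be the update time of the retrieval structure plus $O(1)$, so an $o(k)$ update time for retrieval would contradict the $\Omega(k)$ bound.

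The construction is a standard ``retrieval plus fingerprint'' dictionary. Let $D$ be the target dictionary on universe $[U]$ with values of length $v' = 2\log(U/n) + \Theta(\log U)$. To store a key $x$ with value $y$, we insert into the retrieval structure the key $x$ with the concatenated value $(y, x)$. This has $v' + \log U$ bits. Equivalently, to save space, we store $y$ together with the $\log(U/n)$ bits of $x$ that the rest of the structure does not already determine.

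That version is too lossy, so the cleaner route is as follows. We use the retrieval structure only for the value part, and build a separate membership component for $\log\binom{U}{n}$. Concretely, we store a succinct dynamic dictionary of the key set (no values) in $\log\binom{U}{n} + O(n\log^{(k)}n)$ bits with $O(k)$-time updates; this is \cref{thm:dictionarySpace} with $v=0$. Alongside it we keep the retrieval structure holding the $v'$-bit values. A membership query asks the key dictionary. A value query, once membership is confirmed, asks the retrieval structure. The total space is $nv' + \log\binom{U}{n} + O(n\log^{(k)}n)$, which matches the hypothesis of the cited theorem.

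The update time is the retrieval update time plus $O(k)$ for the key dictionary. This is where the argument is delicate: adding an $O(k)$-time component cannot yield a contradiction with $\Omega(k)$. So I would instead make the membership component free of superconstant cost. The first thing I would try is to avoid a separate membership structure altogether. We encode membership inside the retrieval values by storing, as part of each value, the quotient bits of the key relative to a bucketing by a hash, in the style of quotienting. The retrieval structure then returns the full remaining bits of $x$, and the query checks equality. The space used by this reconstruction should be $\log\binom{U}{n} + O(n)$ rather than $n\log U$. That requires the quotient to cost only $\log(U/n) + O(1)$ bits per key, which holds if the bucket index is implicit in the key (take the top $\log n$ bits of a random permutation of $x$ as the bucket, and store only the low $\log(U/n)$ bits as part of the value).

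With this approach, the membership check for a query $x$ retrieves the stored value, compares its low bits with those of $\pi(x)$, and returns the rest as $y$. Two issues remain. First, two keys may share high bits while the retrieval structure still answers correctly for both, since it is keyed on the full $x$, so that case is fine. Second, a nonmember whose low bits happen to match gives a false positive with probability about $n/U$, which is not zero. The cited dictionary theorem needs exact membership, and handling this false positive is the step I expect to be the main obstacle. I would address it by also appending $\Theta(\log U)$ extra check bits; this is affordable because $v'$ already contains a $\Theta(\log U)$ slack term. Finally, I would match parameters ($v = \Theta(\log n)$ for the retrieval structure against $v'$ and $w$) by taking $U = \mathrm{poly}(n)$, so that $\log U = \Theta(\log n)$.
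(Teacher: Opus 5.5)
There is a genuine gap: none of your candidate constructions produces an exact dictionary within the required space budget and with only $O(1)$ extra update time. The quotienting idea fails because a black-box retrieval structure does not make any part of the key implicit. In a quotient filter, the high bits of $\pi(x)$ are encoded by the slot where the remainder sits. A retrieval query on $x$, by contrast, just returns some string, with no indication of which stored key (if any) it came from. On a non-member it may in general return the value of an arbitrary member.

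So storing only the $\log(U/n)$ low bits gives a filter with false positives, not an exact dictionary. Certifying membership from the returned value requires essentially all of $\pi(x)$, about $\log U$ bits per key. That exceeds $\log\binom{U}{n}\approx n\log(U/n)+O(n)$ by roughly $n\log n$, which is not $O(n\log^{(k)}n)$ for $k\ge 2$.

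Your fix of appending $\Theta(\log U)$ check bits makes the space worse and still only reduces false positives. The justification that this is affordable because $v'$ contains a $\Theta(\log U)$ slack term is incorrect. Those $v'$ bits are the dictionary's payload and are already charged by the $nv'$ term, so every extra bit per key in the retrieval value costs $n$ bits beyond the budget.

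The missing idea is to use the hypothesis $U\ge 3n$ to shrink the universe, instead of fighting for exact membership in a large one. Run the retrieval structure only on keys from $[3n]\subseteq[U]$, and keep membership in a plain $3n$-bit array, which costs $O(1)$ probes per update.

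The parameters then all check out:
\begin{itemize}
\item The space is $nv+O(n\log^{(k)}n)+3n$, which fits within $nv+\log\binom{3n}{n}+O(n\log^{(k)}n)$.
\item With universe size $3n$, the required value length $2\log 3+\Theta(\log(3n))$ is exactly $\Theta(\log n)$.
\item The word size $\Theta(\log n)$ equals $\Theta(\log(3n))$.
\end{itemize}
So the cited dictionary bound applies. The resulting dictionary has update time $S+O(1)$, which forces $S=\Omega(k)$; this is the paper's argument.

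Your second attempt, a separate membership component, was the right shape. It failed only because you reached for an $O(k)$-time succinct dictionary, when in a universe of size $3n$ a trivial bitmap already meets the space budget with constant-time updates.
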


\begin{proof}
    Suppose we have such a retrieval data structure with expected update time $S$. Since $U \ge 3n$, we can use this data structure and only insert/query keys that are in $[3n] \subseteq [U]$ to get a retrieval data structure with universe size exactly $3n$. Then, by adding an additional bit array of length $3n$ maintaining membership information, we can build a dynamic \emph{dictionary} with at most $n$ keys, value length $v$, universe size $3n$, which uses $nv + O(n\log^{(k)} n)$ bits of space and has expected update time $S + O(1)$. Thus, by the previous theorem, we have $S=\Omega(k)-O(1)=\Omega(k)$.
\end{proof}

By this theorem, to achieve $O(k)$-time operations with high probability in $n$, our data structure must use at least $nv + O(n \log^{(k)} n)$ bits of space.

Collectively, \cref{thm:lowerinsonly} and \cref{thm:spaceTime} present a clearer picture of the overall space landscape for resizable retrieval data structures. For the first two additional space terms $O(n \log\log(U / n) + n \log^{(k)}(n))$ introduced in \cref{thm:mainRestated}, there are parameter regimes where these terms are necessary. 

This concludes the section.

\bibliographystyle{alpha}
\bibliography{ref.bib}

\appendix
\section{Using a Global Hash Function}
\label{sec:hash_function}
In this section, we remove the assumption that all inserted/queried keys are uniformly random.
In order to do this, we first apply a hash function to the keys to divide the entire universe into $P$ parts (randomly), and then within each part, we apply another hash function to further randomly hash the keys. Since our primary focus is on the complexity within a single part, throughout this section, we denote the original universe by $[U']$, and the number of keys by $n'$. The universe of the $i$-th segment is denoted by $U = U'/P$, and the number of elements in this part is denoted by $n_i$, where these parameters satisfy $\sum_{i \in [P]} n_i = n'$.

As our starting point, we sample a hash function $H': [U']\to [P]$ from a $U^\epsilon$-wise independent hash-function family to determine which part an input key $x$ belongs to. Once an element is in an assigned part, we use another global hash function $H: [U']\to [U]$ to determine the bits assigned to $x$'s key. Thus, for a key $x$, the data structure maintains it in part $H'(x)$ using its hashed key $H(x)$.

Recall that in \cref{clm:partnotbig} and \cref{clm:buffernotbig}, we proved that the number of keys in a part and the size of the buffer will not be too large by using the uniform randomness of keys. Now, we must show that the same claims are true when we use $H'$. The following claim corresponds to \cref{clm:partnotbig}, and a proof of \cref{clm:buffernotbig} can be derived in an identical manner.

\begin{claim}
    Given a set of keys $S$ of size $n'$ and $i\in [P]$, the number of keys $n_i$ in part $i$ is $\Theta(n'/P)$, i.e., $\lvert \{ x \in S \mid H'(x) = i \} \rvert=\Theta(n'/P)$ with high probability in $n'$.
\end{claim}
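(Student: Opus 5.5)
We plan to prove the claim with a concentration bound for sums of $t$-wise independent indicator variables, where $t = U^{\epsilon}$; for example, the Schmidt--Siegel--Srinivasan bound for limited independence. Fix a set $S$ of size $n'$ and a part $i \in [P]$. For each $x \in S$ let $X_x = \mathbf{1}[H'(x) = i]$. Since $H'$ is drawn from a $U^\epsilon$-wise independent family, any $U^\epsilon$ of the $X_x$ are mutually independent, and each satisfies $\E[X_x] = 1/P$. Hence $n_i = \sum_{x \in S} X_x$ has mean $\mu = n'/P$. The whole argument depends only on the fixed set $S$, not on the distribution of the keys. This is exactly why replacing the random-key assumption of \cref{clm:partnotbig} with a hash function works.

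The key step is to apply the limited-independence Chernoff bound. For independence $t \ge \lceil \delta^2 \mu e^{-1/3} \rceil$ (with $\delta$ a constant such as $1/2$), it gives the same tail $\Pr[|n_i - \mu| \ge \delta \mu] \le \exp(-\Omega(\mu))$ as full independence. If instead $\mu$ exceeds the independence $t$, we use the $t$-th moment bound
\[
\Pr\left[|n_i-\mu|\ge \delta\mu\right] \le \frac{\E\left[(n_i-\mu)^t\right]}{(\delta\mu)^t} \le \left(\frac{O(t\mu)}{\delta^2\mu^2}\right)^{t/2} = \left(O\left(\frac{t}{\mu}\right)\right)^{t/2}.
\]
Here $t$ is an even integer. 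The $t$-th central moment is the same as in the fully independent case, because it involves only $t$-wise products, and it is bounded by $O(t\mu)^{t/2}$ when $t \le \mu$.

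Two regimes cover everything. If $\mu \le t$, the standard Chernoff bound applies verbatim, since all relevant moments agree with the independent case. It gives failure probability $\exp(-\Omega(\mu))$. Because $\mu = n'/P \ge n'/O(\log^2 U')$ and $n' \ge \mathrm{polylog}(U')$ with a large exponent, this is $\exp(-\Omega(n'^{c}))$, which is high probability in $n'$. If $\mu > t$, we take the moment bound with $t = \Theta(U^\epsilon)$ and get $(O(t/\mu))^{t/2}$. If $\mu \ge t^2$, this is at most $\mu^{-\Omega(t)}$, which is polynomially small in $n'$ to any fixed degree since $t = U^{\epsilon} \to \infty$. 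If $t < \mu < t^2$, we use only $t' = \sqrt{\mu}$-wise independence, which is available, and again obtain $(O(\mu^{-1/2}))^{\Omega(\sqrt{\mu})} = n'^{-\omega(1)}$. Both directions, $n_i \ge \mu/2$ and $n_i \le 3\mu/2$, follow, so $n_i = \Theta(n'/P)$.

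The main obstacle is bookkeeping the relationship between the independence $U^\epsilon$ and $\mu = n'/P$, so that the failure probability is a high-probability bound in $n'$ rather than in $U$. Splitting into the regimes above handles this. A second subtlety is that $S$ must be fixed independently of $H'$ (an oblivious adversary). The set of operations may also revisit deleted keys, but they only involve distinct keys counted once, which is exactly how \cref{clm:buffernotbig} is treated.
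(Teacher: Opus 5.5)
Your proposal is correct and follows the paper's route: write $n_i$ as a sum of indicators $\mathbf{1}[H'(x)=i]$ with mean $n'/P$, then apply the Schmidt--Siegel--Srinivasan limited-independence Chernoff bound. The paper avoids your case split (and the hand-derived moment bounds) by using only $\log^2 U$-wise independence, which always lies below $\mu = n'/P = \Omega(\log^{18} U')$. A single application then gives failure probability $\exp(-\Omega(\log^2 U')) = U'^{-\omega(1)}$, which is high probability in $n'$ since $n' \le U'$.
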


\begin{proof}
    Let $V_x$ be the indicator variable of the event that key $x\in S$ hashes to part $i$, i.e., $H'(x)=i$. Then $\forall x\in S, \mathbb{E}[V_x]=1/P$. Moreover, $\{V_x\mid x\in S\}$ are $\log^2 U$-wise independent (by the definition of $H'$). By the Chernoff bound with limited independence \cite[Theorem 5]{schmidt1995chernoffhoeffding}, the probability that $\lvert\sum_{x\in S} V_x-\frac{n'}{P}\rvert\geq \frac{n'}{2P}$ is at most $\exp(-\log^2 U')=U^{'-\omega(1)}$ using the fact that $n'\geq \log^{20} U',P=\Theta(\log^2 U')$.
\end{proof}

Next, for the remainder of the discussion, we consider a single part, which we denote by part $0$. Since all the claims we derive for this part are deterministic or hold with high probability in $n_0 \ge \sqrt{n'}$, the correctness for part 0 extends easily to the correctness of the entire data structure. We denote $n_0$ by $n$ for consistency with \cref{sec:algorithm}.

Now, we need to prove the concentration bounds related to the hash function $H$ (i.e., the concentration bounds within a specific part). To do this, we first give the construction of the hash function $H$. Recall that in each part there are $n=\Theta(n'/P)\geq \log^{20}U$ elements. Let $B$ be the minimum power of $2$ that is greater than $U^{\epsilon / 2}$. We sample a hash function $S:[U']\to [B]$ and $B$ separate hash functions $G_i:[U'] \to [U/B]$. Here, we sample $S$ and each $G_i$ from $U^{\epsilon}$-wise independent hash-function families.

The global hash function $H: [U'] \to [U]$ is defined as
\[
H(x)=S(x)\cdot(U/B)+G_{S(x)}(x),
\]
that is, we first use $S(x)$ to hash each key to one of $B$ \defn{buckets}, then we further use separate hash functions $G_i(x)$ to hash all keys in bucket $i$ to the range $[U]$.

By definition, it is clear that $H$ is $U^\eps$-wise independent. In the remainder of this section, we show several desired concentration bounds on the global hash function $H$. We start by showing that each bucket does not contain too many keys.

\begin{claim}
\label{clm:bucketConcentration}
    When $n \geq U^\epsilon$, for each $i\in [B]$, the number of inserted keys that hash to bucket $i$ is at most $\frac{2n}{B}$ with high probability in $U$.
\end{claim}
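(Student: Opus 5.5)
Fix a bucket $i\in[B]$ and a set $S$ of $n$ distinct keys currently present in the part. For each $x\in S$, let $Z_x$ be the indicator of the event $S(x)=i$, where $S$ is the bucket hash function (not the key set). Then $\E[Z_x]=1/B$, and the count of keys in bucket $i$ is $Z=\sum_{x\in S}Z_x$, whose mean is $\mu=n/B$. The bucket hash function $S$ comes from a $U^\eps$-wise independent family, so the $Z_x$ are $U^{\eps}$-wise independent. This lets us apply the Chernoff bound for limited independence, \cite[Theorem 5]{schmidt1995chernoffhoeffding}, exactly as in the proof of the analogue of \cref{clm:partnotbig} above.

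The key quantitative step is checking that $\mu$ is large compared to a polynomial in $\log U$ and that the independence suffices. We have $B$ as the least power of two exceeding $U^{\eps/2}$, so $B\le 2U^{\eps/2}$. Together with $n\ge U^\eps$ this gives $\mu=n/B\ge U^{\eps/2}/2$. The limited-independence Chernoff bound for deviation $\delta=1$ (that is, $Z\ge 2\mu$) gives failure probability at most $\exp(-\Omega(\mu))$ once the independence $t$ satisfies $t\ge \lceil \mu\rceil$-type thresholds. More precisely, the theorem gives a bound of the form $\exp(-\lfloor \min(t,\delta^2\mu/3)/2\rfloor)$ or similar, so we get $\exp(-\Omega(\min(U^{\eps},U^{\eps/2})))=\exp(-\Omega(U^{\eps/2}))$. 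This is $U^{-\omega(1)}$, i.e.\ high probability in $U$.

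Finally, to cover all buckets, take a union bound over the $B\le 2U^{\eps/2}$ buckets. If needed, also union bound over the at most $\poly(U)$ time steps at which the key set is examined, since each set present at a given moment is fixed independently of the hash functions (the adversary is oblivious). The total failure probability remains $U^{-\omega(1)}$.

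The only delicate point is matching the independence parameter to the form required by Schmidt--Siegel--Srinivasan, which requires $t\ge\Omega(\mu)$ or else yields $\exp(-\Omega(t))$. I would handle this by simply noting that either branch of the bound is $\exp(-U^{\Omega(\eps)})$. Even if $\mu$ exceeds $t=U^\eps$, the bound $\exp(-\Omega(t))$ still suffices.
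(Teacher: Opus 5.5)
Your proof is correct and follows the paper's argument: define indicators for $S(x)=i$ with mean $1/B$, then apply the limited-independence Chernoff bound of \cite[Theorem 5]{schmidt1995chernoffhoeffding} at threshold $2\mu=2n/B$. The only difference is quantitative: the paper uses only $\log^2 U$-wise independence and gets $e^{-\Omega(\log^2 U)}$, whereas you use the full $U^{\eps}$-wise independence together with $\mu\ge U^{\eps/2}/2$ to get the stronger $\exp(-\Omega(U^{\eps/2}))$; your union bounds over buckets and time steps are harmless but not needed for the per-bucket statement.
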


\begin{proof}
    For symmetry, we only need to show the claim for bucket 0. Let $X_j$ be the indicator variable of the event that the $j$-th key $x$ hashes to bucket 0, i.e., $S(x) = 0$. We know that $\E[X_j]=\frac{1}{B}$ and $X_1, \dots, X_n$ are $\log^2 U$-wise independent.

    By the Chernoff bound with limited independence \cite[Theorem 5]{schmidt1995chernoffhoeffding}, we have
    \[
    \Pr\Bk*{X_1+\cdots+X_n\geq \frac{2n}{B}} \le e^{-\Omega(\log^2 U)}=U^{-\omega(1)}.
    \qedhere
    \]
\end{proof}

With these properties, we are now ready to provide proof of the claims where the random keys are replaced by the hashed keys after the function $H$. Three claims related to $H$ need to be proved again: \cref{clm:sizeOfMidyard}, \cref{clm:slotBoundLog2U}, and \cref{clm:boundLargeBlock}. We start with the simplest one, \cref{clm:slotBoundLog2U}.

\begin{claim}\label{clm:slotBoundLog2UwithH}
    The number of fingerprints in one subtree is $O(\log^2 U)$ with high probability in $U$.
\end{claim}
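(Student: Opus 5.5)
The plan is to mirror the proof of \cref{clm:slotBoundLog2U}, replacing full independence with the limited independence of $H$. Fix a depth-$\l$ vertex $u$, i.e., a length-$\l$ prefix. Every fingerprint in $u$'s subtree comes from a key whose hashed value $H(x)$ has $u$ as a prefix, so it suffices to bound the number of keys $x$ in the relevant set with $H(x)\in u$'s interval. The relevant set consists of the at most $O(n)$ distinct keys that are currently present or were present at the last rebuild, since those are the only keys that can own fingerprints. The set is fixed before $H$ is sampled; the adversary is oblivious to $H$.

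Let $X_x$ be the indicator that $H(x)$ has prefix $u$. Since $B\le 2U^{\eps/2}$ and $n\ge\log^{20}U$, we have $2^\l\ge n$, and $U/B$ is large compared to $U/2^\l$ in the regime of interest. Either $u$ lies entirely inside one bucket's range, or (if $2^\l<B$) $u$ covers several buckets. In both cases $\Pr[X_x=1]=2^{-\l}=O(1/n)$ exactly, because $H$ is $U^\eps$-wise independent and, by the construction $H(x)=S(x)\cdot(U/B)+G_{S(x)}(x)$, uniform on $[U]$. So $\mu=\E[\sum_x X_x]=O(1)$, and the variables $X_x$ are $U^\eps$-wise independent, in particular $\log^2 U$-wise independent.

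Next I would apply the Chernoff bound with limited independence \cite[Theorem 5]{schmidt1995chernoffhoeffding}, as in \cref{clm:bucketConcentration}. With independence parameter at least $\lceil C\log^2 U\rceil$ and threshold $t=C\log^2 U\gg\mu$, it gives $\Pr[\sum_x X_x\ge t]\le (e\mu/t)^{\Omega(t)}\le e^{-\Omega(\log^2 U)}=U^{-\omega(1)}$. A union bound over the at most $O(n)$ nonempty subtrees, or even over all $2^\l\le U$ vertices, preserves high probability in $U$.

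The main obstacle I expect is making sure the bounded-independence Chernoff applies with the right moment. The independence degree $U^\eps$ must be at least the threshold $\Theta(\log^2U)$, which holds. I also need to handle the fact that the key set can depend on the operation sequence. That sequence includes reinsertions and is chosen obliviously of $H$, so I fix the at most $O(n)$ distinct relevant keys before conditioning and avoid any adaptivity issue.
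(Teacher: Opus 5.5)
Your proposal is correct and takes essentially the same route as the paper. Both arguments fix a depth-$\ell$ subtree, observe that every fingerprint in it comes from a hashed key with that prefix, note that the indicators have mean $2^{-\ell}$ and are $\log^2 U$-wise independent under $H$, and apply the limited-independence Chernoff bound of \cite[Theorem 5]{schmidt1995chernoffhoeffding} to get probability $e^{-\Omega(\log^2 U)}$. Your extra bookkeeping is a harmless refinement that the paper leaves implicit: you also count keys present at the last rebuild, which is what actually sizes the midyard fusion trees, and you add a union bound over subtrees.
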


\begin{proof}
    Similar to the random case, when a fingerprint is in the subtree, its original hashed key must also be in this subtree since the fingerprint is a prefix of the hashed key. Let $X_i$ be the indicator variable of the event that the $i$-th hashed key is in this subtree. Then \[\E[X_i] = 2^{-\ell} = O\left(\frac{1}{n\log^{10}(U/n)}\right).\] Moreover, $X_1, \ldots, X_n$ are $\log^2 U$-wise independent (by the definition of $H$). If the number of fingerprints in the subtree is at least $O(\log^{2} U)$, we must have $X_1 + \cdots + X_n \ge O(\log^{2}U)$. But by the Chernoff bound with limited independence \cite[Theorem 5]{schmidt1995chernoffhoeffding}, this occurs with probability at most $\exp(-\Omega(\log^2 U)) = U^{-\omega(1)}$.
\end{proof}

The following claim corresponds to \cref{clm:sizeOfMidyard}.

\begin{claim}\label{clm:s-1}
    For $n$ different keys from the universe, and letting $n'$ be the number of distinct length-$\l$ prefixes of their hashed keys, then \[n-n'\leq O\left(\frac n{\log^8(U/n)}\right)\] with high probability in $n$.
\end{claim}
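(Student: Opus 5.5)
We bound $n-n'$ by the number of colliding pairs: $n-n' \le C := \sum_{x<y}\mathbf{1}[\text{$H(x)$ and $H(y)$ share a length-$\l$ prefix}]$. This holds because each key that is not the first in its prefix class collides with at least one earlier key.

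Since $\l \ge \log B$ (as $2^\l = \Theta(n\log^{10}(U/n))$ while $B \approx U^{\eps/2}$; when $n < U^{\eps}$ we treat this case separately, see below), two keys can share a length-$\l$ prefix only if they lie in the same bucket. So the count decomposes over buckets: $C = \sum_{i\in[B]} C_i$, where $C_i$ counts colliding pairs inside bucket $i$.

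\textbf{Case $n \ge U^\eps$.} By \cref{clm:bucketConcentration}, w.h.p.\ every bucket receives at most $2n/B$ keys. Condition on $S$ and on this event. Within bucket $i$, the keys are hashed by $G_i$, which is $U^\eps$-wise independent and hence fully independent on the at most $2n/B \le U^\eps$ keys present. So inside a bucket we are exactly in the fully random setting of \cref{clm:sizeOfMidyard}, with roughly $n/B$ keys and $2^{\l}/B$ prefix slots.

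Rather than use $C_i$ directly, I would use the martingale quantity: the number of keys in bucket $i$ colliding with an earlier key. Its expectation is at most $(2n/B)^2/2^{\l} = O\bigl((n/B)/\log^{10}(U/n)\bigr)$. Summing over buckets, the buckets are independent given $S$, since the $G_i$ are sampled independently. The total $n-n'$ is therefore a sum of $B$ independent variables, each bounded by $2n/B$, with total mean $O(n/\log^{10}(U/n))$.

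Apply Hoeffding (or Azuma over the whole sequence, ordering keys bucket by bucket, as in \cref{clm:sizeOfMidyard}). Per-key increments are in $\{0,1\}$, and conditional expectations are at most $(i-1)/2^\l$ given independence within buckets, so Azuma over $n$ steps gives deviation $n/\log^{8}U$ with probability $\exp(-\Omega(n/\log^{16}U))$. This is negligible because $n \ge \log^{20}U$. This is the cleanest route: the martingale argument of \cref{clm:sizeOfMidyard} only needs that each new key's prefix is uniform and independent of the previous keys. That holds once we reveal keys in an order and use the fact that any $\le U^\eps$ keys are hashed fully independently by $H$.

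\textbf{Case $n < U^\eps$.} Here $H$ itself is $U^\eps$-wise independent, so on the $n$ keys it is fully independent. The proof of \cref{clm:sizeOfMidyard} then applies verbatim.

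The main obstacle is the first case: justifying the martingale step when the number of keys exceeds the independence $U^\eps$. The fix is the conditioning on $S$ together with \cref{clm:bucketConcentration}. After conditioning, each $G_i$ sees at most $2n/B \le 2U^{\eps/2}\cdot n/U^{\eps}$ keys. I need to double-check that this is $\le U^\eps$, i.e.\ that $n/B \le U^\eps$, adjusting $B$ or the independence parameter if $n$ is close to $U$. The per-bucket collision probability then stays at most $(2n/B)/(2^\l/B) = 2n/2^\l$, which is all the Azuma bound needs.
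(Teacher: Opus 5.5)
Your case $n<U^\eps$, the reduction of $n-n'$ to the number of keys that collide with an earlier key, and the choice to condition on $S$ and use \cref{clm:bucketConcentration} all match the paper. The step that fails is the assertion that $G_i$ is fully independent on the at most $2n/B\le U^\eps$ keys of bucket $i$. Within a part, $n$ can be as large as $\Theta(U)$; the paper treats $n\ge U^{1-\eps/8}$ explicitly in \cref{clm:boundLargeBlock}. Meanwhile $B\approx U^{\eps/2}$, so a bucket can hold about $U^{1-\eps/2}$ keys, which exceeds $U^\eps$ for every constant $\eps<2/3$. Tuning $B$ or the independence does not fix this. You would need either $B\ge n/U^{\eps}$ separate $U^\eps$-wise independent functions or $\Omega(n/B)$-wise independence, and either one costs far more than the additive $U^{\eps}$ bits allotted for hashing. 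So the Azuma route you call the cleanest is not available. It needs each key's prefix to be uniform conditioned on all earlier keys in its bucket, which requires independence of order the bucket size.

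What you are missing is that full independence inside a bucket is never needed. The Hoeffding alternative you mention in passing is exactly the paper's proof, and it uses only two facts.

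First, pairwise independence of $G_i$ bounds the expected number of collisions in bucket $i$ by $\binom{s_i}{2}\cdot B/2^\l\le 2n^2/(B\,2^\l)$. Note the factor $B$: bucket $i$ has only $2^\l/B$ prefix slots. Your intermediate expression $(2n/B)^2/2^\l$ is therefore off by a factor of $B$, although your final bound $O\bigl((n/B)/\log^{10}(U/n)\bigr)$ is right.

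Second, given $S$, the functions $G_0,\dots,G_{B-1}$ are mutually independent. So the per-bucket counts are independent and, on the good event for $S$, lie in $[0,2n/B]$.

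Hoeffding then shows that exceeding the mean $O(n/\log^{10}(U/n))$ by $n/\log^{8}(U/n)$ has probability at most $\exp\bigl(-\Omega(B/\log^{16}(U/n))\bigr)=U^{-\omega(1)}$. The paper uses a multiplicative Chernoff bound instead and gets $\exp\bigl(-\Omega(B/\log^{10}(U/n))\bigr)$. In both versions the concentration comes from the number of buckets $B\ge U^{\eps/2}$, not from independence among the $n$ keys. Drop the full-independence claim and the Azuma route, and commit to this argument.
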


\begin{proof}
    We prove it on a case by case basis depending on the current number of keys $n$. If $\log^{20}U\leq n\leq U^{\epsilon}$, then since $H$ is $U^{\epsilon}$-wise independent, all hashed keys are fully independent. Then we can simply invoke \cref{clm:sizeOfMidyard}, and conclude that $n-n'\leq O\bk*{\frac{n}{\log^8(U/n)}}$ with high probability in $n$.

    If $n>U^\epsilon$, since $2^{\ell}\geq n>U^{\epsilon}\geq B$, $B$ buckets uniformly partition the nodes with depth $\ell$ into $B$ parts. We bound the number of fingerprints with length $>\ell$ in each bucket and add them up.
    If two hashed keys are in the same subtree, we call them a collision. Moreover, we say the collision is in the bucket they belong to. Since \cref{clm:bucketConcentration} tells us each bucket contains at most $2n/B$ keys with high probability in $U$, we define $X_i$ as the minimum of $2n/B$ and the number of collisions in the bucket $i(0\leq i< B)$. Then, adopting the same notation as in \cref{clm:sizeOfMidyard}, we have \[X_i=\min\left(2n/B, \sum_{u} s_u(s_u-1)/2\right)\geq \sum_{u} (s_u-1),\] with high probability in $U$ where $u$ iterates through all vertices with depth $\l$ in bucket $i$, and $s_u$ is the number of keys that contain $u$ as a prefix.

    For a bucket $i\in [B]$, we fix its size as $s$. Now, since $G_i$ is $2$-wise independent, the expected number of collisions in the bucket is \[\E[X_i]\leq \frac{s^2}{2^{\ell}/B}\leq \frac{2n}{B}\frac{s}{2^{\ell}/B}.\] Thus, \[\E\Bk*{\sum_{i=0}^{B-1} X_i}\leq \frac{2n}{B}\frac{n}{2^{\ell}/B}=\frac{2n^2}{2^\ell}.\]

    Finally, we bound the sum of $X_i$'s. Since the $G_i$'s are sampled independently, $\frac{X_i}{2n/B}$ are independent random variables which are confined into the interval $[0, 1]$. By a Chernoff Bound, \[\Pr\left[\sum X_i\geq \frac{4n^2}{2^{\ell}}\right]\leq \exp\left(-\Omega\left(\frac{Bn}{2^\ell}\right)\right)=\exp\left(-\Omega\left(\frac{B}{\log^{10}(U/n)}\right)\right)=U^{-\omega(1)}.\]
    Then by $4n^2/2^\ell=O(n/\log^{10}(U/n))$, the proof is complete.
\end{proof}

Next, we prove the limited-independence hash function version of \cref{clm:boundLargeBlock}.

\begin{claim}
    Given $n$ different keys, the number of length-$\l$ strings $u$ such that there are at least $\Omega(\log^5(U/n))$ hashed keys that contain $u$ as a prefix is $O(n^3/U^2)$ with high probability. Moreover, when $U\geq n^{1.01}$, no such $u$ exists.
\end{claim}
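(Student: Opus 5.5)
We prove the claim by bounding, for each fixed depth-$\l$ node $u$, the probability that it receives many hashed keys, and then summing over all $2^\l$ nodes. Set $t = \Theta(\log^5(U/n))$; we may assume $t \le \log^2 U$, since otherwise the count is already controlled by \cref{clm:slotBoundLog2UwithH}.

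\textbf{Per-node tail bound.} Fix $u$ and let $X_j$ indicate that the $j$-th hashed key has prefix $u$. Then $\E[X_j] = 2^{-\l} = \Theta\bk*{1/(n\log^{10}(U/n))}$, so $\mu = \E\Bk*{\sum_j X_j} = \Theta(1/\log^{10}(U/n)) \le 1$. Since $H$ is $U^{\eps}$-wise independent and $t \le \log^2 U \le U^{\eps}$, every set of $t$ keys has fully independent hashes. A union bound over $t$-subsets therefore gives
\[\Pr\Bk*{\textstyle\sum_j X_j \ge t} \le \binom{n}{t} 2^{-\l t} \le \bk*{\frac{e n}{t\,2^{\l}}}^{t} \le \bk*{\frac{e}{t}}^{t}.\]
This step needs only $t$-wise independence, so the two-level bucket structure of $H$ plays no role here.

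\textbf{Expected count and the first statement.} Summing over all depth-$\l$ nodes, the expected number of heavy nodes is at most
\[2^\l (e/t)^t = O\bk*{n\log^{10}(U/n)} \cdot 2^{-\Omega(\log^5(U/n)\log\log(U/n))}.\]
We need this to be $O(n^3/U^2) = O\bk*{n/(U/n)^2}$. Write $L = \log(U/n)$. The exponent is $\Omega(L^5 \log L)$, which is far larger than $2L + 10\log L$ as soon as $L$ exceeds a constant, and a large constant $c$ in $\l$ covers the remaining cases. So the expectation is at most $n^3/U^2$, with considerable room to spare.

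\textbf{High probability and the second statement.} Expectation is not enough, and this step is the main obstacle. The counts at different nodes are dependent, and a Chernoff bound over nodes needs independence across buckets, which only holds when $n > U^{\eps}$. We plan to avoid concentration entirely by exploiting how small the per-node probability is.

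First consider $U \ge n^{1.01}$. Then $L \ge \tfrac{0.01}{1.01}\log U = \Omega(\log U)$, so the per-node probability is $2^{-\Omega(\log^5 U)} = U^{-\omega(1)}$. A union bound over the $2^\l \le \poly(U)$ nodes shows that no heavy $u$ exists with high probability. This proves the ``moreover'' part, and the main bound follows trivially in this regime.

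Now consider $U < n^{1.01}$. We split once more:
\begin{itemize}
\item If $L \ge \sqrt{\log n}$, the failure probability $2^\l \cdot 2^{-\Omega(L^5)}$ is $n^{-\omega(1)}$, so again no heavy node exists with high probability in $n$.
\item If $L < \sqrt{\log n}$, we apply Markov's inequality at threshold $n^3/U^2 \cdot n^{-1}$. This loses only a factor of $n$, which is absorbed because the expectation is at most $2^\l \cdot 2^{-\Omega(L^5\log L)}$ against the target $n/(U/n)^2$. If this absorption is too loose for moderate $L$, the fallback is a bounded-differences (McDiarmid) argument applied bucket-by-bucket, using the independence of the $G_i$ across buckets as in the proof of \cref{clm:s-1}, together with \cref{clm:bucketConcentration}.
\end{itemize}
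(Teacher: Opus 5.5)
Your per-node bound (a union bound over $t$-subsets using $t$-wise independence of $H$), the expectation estimate, and the large-$L$ cases ($U\ge n^{1.01}$, and $L\ge\sqrt{\log n}$) are fine; in those cases a union bound over the $2^\ell$ nodes shows no heavy node exists. The gap is the last bullet, where Markov's inequality cannot give a high-probability bound. The absorption you need is $2^{\ell}\cdot 2^{-\Omega(L^5\log L)}\le n^{-1}\cdot n^3/U^2=2^{-2L}$. Since $2^\ell\ge n$, this forces $L^5\log L\gtrsim\log n$, so it fails on the whole range $L^5\log L=o(\log n)$. Concretely, the claim must cover $L=O(1)$, because only $n\le U/2$ is assumed. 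There, already for truly random keys, both the expected number of heavy nodes and the target $n^3/U^2$ are $\Theta(n)$, so Markov yields only a constant failure probability. In this regime you need concentration, not a first-moment bound.

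The missing argument is the one you list only as a fallback, and it is exactly the paper's proof. Here $n>U^{1/1.01}\ge B$ and $2^\ell\ge n$, so each bucket is a union of depth-$\ell$ subtrees. Condition on $S$ such that every bucket holds at most $2n/B$ keys (\cref{clm:bucketConcentration}). Then the number $Y_i$ of heavy nodes in bucket $i$ depends only on $G_i$, so the $Y_i$ are independent with $0\le Y_i\le 2n/B$. A node in bucket $i$ receives each of that bucket's keys with probability $B/2^\ell$, so $\E[Y_i]\le (2^\ell/B)\,(2en/(t2^\ell))^t$, and by your expectation estimate $\sum_i\E[Y_i]$ is a small fraction of $n^3/U^2$. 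A Chernoff bound on $\sum_i Y_i/(2n/B)$ at threshold $n^3/U^2$ then gives failure probability $\exp(-\Omega(Bn^2/U^2))=\exp(-\Omega(B\,2^{-2L}))$. This is $\exp(-U^{\Omega(1)})$ because $L<\sqrt{\log n}$ gives $2^{2L}=U^{o(1)}$ while $B\ge U^{\eps/2}$; the paper secures the same thing by switching to this argument only when $n\ge U^{1-\eps/8}$. So this is not a patch for ``moderate $L$'' but the entire proof for small $L$. (A minor point: for bounded $L$, keep the sharper factor $(en/(t2^\ell))^t$ or take a large constant in $t$. With the cruder $(e/t)^t$, enlarging $c$ makes $2^\ell(e/t)^t$ larger, not smaller.)
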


\begin{proof}
    When $n\leq U^{\max(1-\epsilon/8, \epsilon, 0.999)}$, \cref{clm:slotBoundLog2UwithH} tells us that, with high probability, the number of hashed keys that contains a fixed length-$\l$ string $u$ as a prefix does not exceed $\log^2 U\leq \min(\epsilon/8,1-\epsilon, 0.001)^5\log^5 U\leq O(\log^{5}(U/n))$. Now we assume $n\geq \max(U^{1-\epsilon/8}, U^{\epsilon}, U^{0.999})$. We have $2^{\ell}\geq n>U^{\epsilon}\geq B$, so $B$ buckets uniformly partition all possible fingerprints $u$ into $B$ parts.

    Consider a fixed $u$. Suppose the bucket it belongs to contains $k$ hashed keys. Let $X_i(1\leq i\leq k)$ be the indicator variable of the event that $u$ is a prefix of the $i$-th hashed key. We know that \[\E[X_i]=\frac{1}{2^\ell/B} = O\left(\frac{B}{n\log^{10}(U/n)}\right)\] and $X_1,\cdots, X_k$ are $\log^2 U$-wise independent. By \cref{clm:bucketConcentration}, $k\leq \frac{2n}{B}$, thus \[\E\left[\sum_{i=1}^k X_i\right]\leq \frac{2n}{B} \cdot O\left(\frac{B}{n\log^{10}(U/n)}\right) = O\left(\frac{1}{\log^{10}(U/n)}\right).\]  By the Chernoff bound with limited independence \cite[Theorem 5]{schmidt1995chernoffhoeffding},  \[\Pr\left[\sum_{i=1}^k X_i\geq \log^{5}(U/n)\right]\leq \exp\left(-\Omega\left(\log^{5}(U/n)\right)\right).\]

    Define $Y_i$ as the number of possible $u$ which belongs to bucket $i$ and is a prefix of at least $O(\log^5(U/n))$ hashed keys. Again, $Y_i$'s are independent random variables. Since the probability that a subtree contains $\log^5(U/n)$ hash keys is bounded above, \[\E[Y_i]\leq \frac{2n}{B}\exp\left(-\Omega(\log^5(U/n))\right) = \frac{2n}{B} \left(\frac Un\right)^{-\Omega(1)}.\] We also have $Y_i\leq 2n/B$, then by Chernoff Bound, \[\Pr\left[\sum_{i=0}^{B-1} Y_i\geq \frac{n^3}{U^2}\right]\leq \exp\left(-\Omega\left(\frac {n^2B}{U^2}\right)\right) = \exp\left(-\Omega\left(U^{\epsilon/4}\right)\right),\] completing the proof.
\end{proof}

Note that in the construction of $H$, if both $S$ and $G_i$ are uniformly random, then $H$ becomes a uniformly random hash function. Thus this also implies the random case:

\boundLargeBlock*

Note that although the hash functions $H$ and $H'$ will introduce some additional collisions, the number of keys in the backyard due to the collisions is bounded by \cref{clm:s-1}. But, there is still an issue that the backyard cannot store the same hashed keys. The way to solve this is to store the original key $x$ instead of the combination of $H'(x)$ and $H(x)$ in the backyard. This is the reason why we need an entire, unified backyard instead of $P$ separate backyards.

Finally, we analyze the additional space and time overhead introduced by the hash functions. By \cite{Siegel04hashfunction}, both $H'$, $S$ and each $G_i$ can be stored using $U^{\delta}$ bits of space, supporting evaluations of $H'(x)$, $S(x)$ and $G_i(x)$ (and thus $H(x)$) in $O(1)$ time, where $\lim_{\epsilon\rightarrow 0}\delta=0$.

\end{document}